\newcommand\R{{\ensuremath {\mathbb R} }}
\newcommand\C{{\ensuremath {\mathbb C} }}
\newcommand\Z{{\ensuremath {\mathbb Z} }}
\newcommand\1{{\ensuremath {\mathds 1} }}
\renewcommand\phi{\varphi}
\newcommand{\alp}{\boldsymbol{\alpha}}
\newcommand{\gH}{\mathfrak{H}}
\newcommand{\gS}{\mathfrak{S}}
\newcommand{\wto}{\rightharpoonup}
\newcommand{\cF}{\mathcal{F}}
\newcommand\ii{{\ensuremath {\infty}}}
\newcommand\pscal[1]{{\ensuremath{\left\langle #1 \right\rangle}}}
\newcommand{\norm}[1]{ \left| \! \left| #1 \right| \! \right| }
\def\tr{\mathop{\rm tr}\nolimits} 
\newcommand{\E}{\mathcal{E}}
\newcommand{\Er}{\mathcal{E}_{\rm r}}
\newcommand{\boB}{B}
\newcommand{\cC}{\mathcal{C}}
\newcommand{\cQ}{\mathcal{Q}}
\newcommand{\cK}{\mathcal{K}}
\newcommand{\citeall}{\cite{HLS1}---\cite{HLSS}}
\newcommand{\citeallb}{\cite{HLS1}---\cite{HLSo}}
\newtheorem{thm}{Theorem}
\newtheorem{lemma}{Lemma}
\newtheorem{corollary}{Corollary}
\newtheorem{prop}[lemma]{Proposition}
\newtheorem{remark}{Remark}
\begin{document}
\title{Ground State and Charge Renormalization in a Nonlinear Model of Relativistic Atoms}

\author{Philippe GRAVEJAT, Mathieu LEWIN and \'Eric S\'ER\'E} 

\begin{center}
 \bf \Large Ground State and Charge Renormalization in a Nonlinear Model of Relativistic Atoms
\end{center}

\medskip

\begin{center}
 \large Philippe GRAVEJAT$^a$, Mathieu LEWIN$^b$ and \'Eric S\'ER\'E$^a$
\end{center}

\medskip

\begin{center}
\small

$^a$CEREMADE, UMR 7534, Universit\'e Paris-Dauphine, Place du Mar\'echal de Lattre de Tassigny, 75775 Paris
  Cedex 16, FRANCE.

\texttt{gravejat,sere@ceremade.dauphine.fr}

\medskip

$^b$CNRS \& Laboratoire de Mathématiques UMR 8088, Université de Cergy-Pontoise, 2 avenue Adolphe Chauvin, 95302 Cergy-Pontoise Cedex, FRANCE. 

\texttt{Mathieu.Lewin@math.cnrs.fr}
\end{center}

\begin{center}
 \it \today
\end{center}

\medskip

\begin{abstract}
We study the reduced Bogoliubov-Dirac-Fock (BDF) energy which allows to describe relativistic electrons interacting with the Dirac sea, in an external electrostatic potential. The model can be seen as a mean-field approximation of Quantum Electrodynamics (QED) where photons and the so-called exchange term are neglected. A state of the system is described by its one-body density matrix, an infinite rank self-adjoint operator which is a compact perturbation of the negative spectral projector of the free Dirac operator (the Dirac sea).

We study the minimization of the reduced BDF energy under a charge constraint. We prove the existence of minimizers for a large range of values of the charge, and any positive value of the coupling constant $\alpha$. Our result covers neutral and positively charged molecules, provided that the positive charge is not large enough to create electron-positron pairs. We also prove that the density of any minimizer is an $L^1$ function and compute the effective charge of the system, recovering the usual renormalization of charge: the physical coupling constant is related to $\alpha$ by the formula $\alpha_{\rm phys}\simeq \alpha(1+2\alpha/(3\pi)\log\Lambda)^{-1}$, where $\Lambda$ is the ultraviolet cut-off.
We eventually prove an estimate on the highest number of electrons which can be bound by a nucleus of charge $Z$. In the nonrelativistic limit, we obtain that this number is $\leq 2Z$, recovering a result of Lieb.

This work is based on a series of papers by Hainzl, Lewin, Séré and Solovej on the mean-field approximation of no-photon QED.
\end{abstract}

\bigskip


\section{Introduction}
In this paper, we study a model of Quantum Electrodynamics (QED) allowing to describe the behavior of relativistic electrons in an external field and interacting with the virtual electrons of the Dirac sea, in a mean-field type theory. This work should be seen as the continuation of previous papers by Hainzl, Lewin, Séré and Solovej \citeall, in which a more complicated model called Bogoliubov-Dirac-Fock (BDF) is considered. This project was mainly inspired of an important physical paper by Chaix and Iracane \cite{CI,Chaix} in which a model of the same kind was first proposed. 
We start by summarizing the physical motivation before defining the model properly.

Dirac introduced his operator in 1928 \cite{Dirac-28} with the purpose to describe the behavior of relativistic electrons. It is defined as
\begin{equation}
D^0=-i\sum_{k=1}^3\alpha_k\partial_k+\beta :=-i\alp\cdot \nabla+\beta
\label{dirac_free}
\end{equation}
where $\alp=(\alpha_1,\alpha_2,\alpha_3)$ and $\beta$ are the $4\times4$ Dirac matrices \cite{Thaller}. The operator $D^0$ acts on $L^2(\R^3,\C^4)$. Contrary to the non-relativistic Hamiltonian $-\Delta/2$, the operator $D^0$ is unbounded from below: $\sigma(D^0)=(-\ii,-1]\cup[1,\ii)$. This property is known to be the basic explanation of various peculiar physical phenomena like the possible creation of electron-positron pairs or the polarization of the vacuum. The model that we shall study is a rough approximation of Quantum Electrodynamics but it is able to reproduce many of these physical phenomena. We refer to \citeall~for more details.

In QED, one can write a formal Hamiltonian acting on the usual fermionic Fock space, in Coulomb gauge and neglecting photons \cite[Eq. (1)]{HLSo}. The mean-field approximation then consists in restricting formally this Hamiltonian to a special subclass of states in the Fock space, called the Hartree-Fock states. Any of these states is uniquely determined by its \emph{one-body density matrix} which is a self-adjoint operator $0\leq P\leq 1$ acting on $L^2(\R^3,\C^4)$. Often $P$ is an orthogonal projector. The QED energy then becomes a nonlinear functional in the variable $P$, which can be \textit{formally} written as follows
\begin{multline}
\E_{\rm QED}^\nu(P)=\tr(D^0(P-1/2))-\alpha \iint_{\R^3\times\R^3}\frac{\nu(x)\rho_{[P-1/2]}(y)}{|x-y|}dx\,dy \\
+\frac{\alpha}{2}\iint_{\R^3\times\R^3}\frac{\rho_{[P-1/2]}(x)\rho_{[P-1/2]}(y)}{|x-y|}dx\,dy
- \frac{\alpha}{2}\iint_{\R^3\times\R^3}\frac{|(P-1/2)(x,y)|^2}{|x-y|}dx\,dy,
\label{formal_QED_energy}
\end{multline}
where for any operator $Q$ acting on $L^2(\R^3,\C^4)$ with kernel $Q(x,y)$, $\rho_Q$ is formally defined as $\rho_Q(x)=\tr_{\C^4}(Q(x,x))$. Recall $Q(x,y)$ acts on $4$-spinors, i.e. is a $4\times4$ complex hermitian matrix. The first term of \eqref{formal_QED_energy} is the kinetic energy of the particles, whereas the second term describes the interaction with an external electrostatic field created by a smooth distribution of charge $\nu$ (describing for instance a system of classical nuclei). The last two terms account for the interaction between the particles themselves. We have chosen a system of units such that $\hbar=c=1$, and also such 
that the mass $m_e$ of the electron is normalized to 1. The constant $\alpha=e^2$ (where $e$ is the bare charge of an electron) is a small number called the \emph{Sommerfeld fine-structure constant}.

Expression \eqref{formal_QED_energy} is purely formal: when $P$ is an orthogonal projector on $L^2(\R^3,\C^4)$, $P-1/2$ is never compact and none of the terms above makes sense \emph{a priori}. However, it is possible to give a meaning to \eqref{formal_QED_energy} by restricting the system to a box and imposing an ultraviolet cut-off. One can then study the thermodynamic limit, i.e. the behavior of the energy and of the minimizers when the size of the box goes to infinity (but the ultraviolet cut-off is fixed). This approach was the main purpose of \cite{HLSo}.

The last two terms of \eqref{formal_QED_energy} are respectively called the \emph{direct term} and the \emph{exchange term}. In theoretical studies of the Hartree-Fock model, the exchange term is sometimes neglected \cite{Solovej2}. The above energy then becomes (formally) convex, a very interesting simplification both from a theoretical and numerical point of view. Refined models exist: in relativistic density functional theory for instance, the exchange term is approximated by a function of the density $\rho_{[P-1/2]}$ and its derivatives only, see, e.g., the review \cite{Engel}. Neglecting the last term, one is led to consider the following \emph{reduced} formal functional
\begin{multline}
\E_{\text{r-QED}}^\nu(P)=\tr(D^0(P-1/2))-\alpha \iint_{\R^3\times\R^3}\frac{\nu(x)\rho_{[P-1/2]}(y)}{|x-y|}dx\,dy \\
+\frac{\alpha}{2}\iint_{\R^3\times\R^3}\frac{\rho_{[P-1/2]}(x)\rho_{[P-1/2]}(y)}{|x-y|}dx\,dy.
\label{formal_red_QED_energy}
\end{multline}

As usual, one is interested in finding states having lowest energy, possibly  in a specific subclass. In QED, a global minimizer in the Fock space is interpreted as being the vacuum, whereas other states (containing a finite number $q$ of real electrons for example) are obtained by assuming a charge constraint. 
When the external field vanishes ($\nu\equiv0$) and for any values of the coupling constant $\alpha\geq0$, one easily proves that $\E_{\text{r-QED}}^0$ has a unique global minimizer which is the negative spectral projector of the free Dirac operator:
$$P^0_-:=\chi_{(-\ii,0]}(D^0).$$
The precise mathematical statement is that when the system is restricted to a box of size $L$ with an ultraviolet cut-off $\Lambda$, the above energy is well-defined; it has a unique minimizer 
$$P_L=\chi_{(-\ii,0]}(D^0_L)$$
where $D^0_L$ is the Dirac operator acting on the box with periodic boundary conditions. The sequence $P^0_L$ converges (in a weak sense) to $P^0_-$ which is thus interpreted as the unique global minimizer of $P\mapsto\E_{\text{r-QED}}^0(P)$. If the exchange term is not neglected, the situation is more complicated and we refer to \cite{HLSo} where the thermodynamic limit was carried out. 

The fact that $P^0_-$ is found to be the global minimizer of our formal energy is not physically surprising. This corresponds to the usual Dirac picture \cite{Dirac-28,Dirac-30,Dirac-34a,Dirac-34b} which consists in assuming that the vacuum should be seen as an infinite system of virtual particles occupying all the negative energy states of the free Dirac operator. Notice however that when the exchange term is taken into account, this picture is no longer valid: $P^0_-$ does not describe the free vacuum which is instead solution of a complicated translation-invariant nonlinear equation, see \cite{HLSo}.

We want to emphasize the importance of the subtraction of half the identity in all the terms of the above energy \eqref{formal_red_QED_energy}. Indeed, the kernel of the translation-invariant operator $P^0_--1/2$ is
$$(P^0_--1/2)(x,y)=(2\pi)^{-3/2}f(x-y)\ \text{where}\ \hat{f}(k)=-\frac{D^0(k)}{2|D^0(k)|}.$$
If we assume that there is a cut-off $\Lambda$ in the Fourier domain, i.e. ${\rm supp}(\hat{f})\subseteq B(0,\Lambda)$, it is then possible to compute the density
\begin{equation}
\rho_{[P^0_--1/2]}=(2\pi)^{-3/2}\tr_{\C^4}(f(0))=(2\pi)^{-3}\int_{B(0,\Lambda)}\tr_{\C^4}(\hat{f}(k))dk\equiv0,
\label{density_vanishes}
\end{equation}
the Dirac matrices being trace-less. We therefore obtain that the free vacuum has no density of charge, which is comforting physically.

When the external field does not vanish, the main idea is then to subtract the (infinite) energy of the free vacuum $\E_{\text{r-QED}}^0(P^0_-)$ to \eqref{formal_red_QED_energy}, in order to obtain a finite quantity. This yields the so-called (formal) \emph{reduced-Bogoliubov-Dirac-Fock energy} (rBDF) which was already studied in \cite{HLS2} and is more easily expressed in terms of the difference $Q=P-P^0_-$,
\begin{eqnarray}
\E_{\rm r}^\nu(P-P^0_-) & = & ``\E_{\text{r-QED}}^\nu(P)-\E_{\text{r-QED}}^0(P^0_-)"\nonumber\\
 & = & \tr(D^0(P-P^0_-))-\alpha \iint_{\R^6}\frac{\nu(x)\rho_{[P-P^0_-]}(y)}{|x-y|}dx\,dy\nonumber \\
 & &  \qquad\qquad +\frac{\alpha}{2}\iint_{\R^6}\frac{\rho_{[P-P^0_-]}(x)\rho_{[P-P^0_-]}(y)}{|x-y|}dx\,dy.
\label{red-BDF_formal}
\end{eqnarray}
Note that we have used \eqref{density_vanishes}. What we have gained is that $Q=P-P^0_-$ can now be a compact operator (it will indeed be Hilbert-Schmidt). We recall that $P$ is the density matrix of our Hartree-Fock state, hence it satisfies $0\leq P\leq 1$ which translates on $Q$ as $-P^0_-\leq Q\leq 1-P^0_-:=P^0_+$. 

A (formal) global minimizer $Q$ of $\Er^\nu$ is interpreted as the polarized vacuum in the presence of the external density $\nu$. Formally, it solves the self-consistent equation
\begin{equation}
 \left\{\begin{array}{l}
Q=\chi_{(-\ii,0)}(D_Q)-P^0_-\\
D_Q=D^0+\alpha(\rho_Q-\nu)\ast|\cdot|^{-1}.
\end{array}\right.
\label{SCF_eq_vacuum_intro}
\end{equation}
In order to describe a physical system containing a finite number $q$ of real electrons, it is necessary to minimize the above energy not on the full class of states, but rather in a chosen charge sector, i.e. over states satisfying the formal charge constraint $\text{``}\tr(Q)=\tr(P-P^0_-)=q\text{''}$. Then a minimizer will satisfy the following equation
\begin{equation}
\left\{\begin{array}{l}
Q=\chi_{(-\ii,\mu)}(D_Q)-P^0_-+\delta\\
D_Q=D^0+\alpha(\rho_Q-\nu)\ast|\cdot|^{-1}
\end{array}\right.
\label{SCF_eq_mol_intro}
\end{equation}
where $\mu$ is a Lagrange multiplier due to the charge constraint and interpreted as a chemical potential. The operator $\delta$ is a finite rank operator satisfying $0\leq \delta\leq 1$ and ${\rm Ran}(\delta)\subset\ker(D_Q-\mu)$. Notice the number $q$ does not need to be an integer as one may want to describe mixed states (in which case $\delta\neq0$).

We see that in both cases (minimization with or without a charge constraint), a minimizer always corresponds to filling energies of an effective Dirac operator up to some Fermi level $\mu$. This corresponds to original ideas of Dirac. For the general BDF theory, the idea that one can have a bounded below functional whose minimizer satisfies this kind of equation was first proposed by Chaix and Iracane \cite{CI,Chaix}.

In this paper, we shall prove that the range of $q$'s such that minimizers exist is an interval $[q_m,q_M]\subset\R$ which contains both the charge of the polarized vacuum (the global minimizer of the energy, solution of \eqref{SCF_eq_vacuum_intro}) denoted by $q_0$, and $Z=\int_{\R^3}\nu$. This proves the existence of neutral molecules and of positively charged molecules the charge of which is not too big, because in this case one has $q_0=0$. This extends previous results proved for the BDF theory with the exchange term in \cite{HLS3}: sufficient conditions were given for the existence of minimizers, but these conditions could only be checked in the nonrelativistic or the weak coupling limits. In the present paper, we shall also give interesting properties of a minimizer when it exists, and provide a bound on the maximal number of electrons which can be bound by a nucleus of charge $Z$, following ideas of Lieb \cite{Lieb}.

\medskip

The mathematical formulation and the proofs of the above statements are not straightforward.

The first (and main) difficulty is that we do not expect that a solution $Q$ of Equations \eqref{SCF_eq_vacuum_intro} or \eqref{SCF_eq_mol_intro} is a trace-class operator. Indeed our results below will imply that in most cases it \emph{cannot be trace-class}. This is a big problem as in the energy \eqref{formal_red_QED_energy} the first term is expressed as a trace, as well as the total charge of the system which we formally wrote ``$\tr(Q)$'' in the previous paragraphs. This issue was solved in \cite{HLS1} where it was proposed to generalize the trace functional and to define the trace counted relatively to the free vacuum $P^0_-$ as
$$\tr_{P^0_-}(Q):=\tr(P^0_+QP^0_+)+\tr(P^0_-QP^0_-).$$
As we shall see, any minimizer $Q$ will have a finite so-defined $P^0_-$-trace, which does not mean that $Q$ is trace-class.

If we do not expect $Q$ to be trace-class, there is a problem in defining the density of charge $\rho_Q$. Indeed it is known that in QED there are several divergences which need to be removed by means of an ultraviolet cut-off. In previous works \citeall, a sharp cut-off $\Lambda$ was imposed: the space $L^2(\R^3,\C^4)$ was replaced by its subspace consisting of functions that have a Fourier transform with support in the ball of radius $\Lambda$. This allowed to give a solid mathematical meaning to the energy \eqref{red-BDF_formal}. In \cite{HLS1,HLS2}, it was proved that the energy has a global minimizer $Q$, solution of \eqref{SCF_eq_vacuum_intro}. In \cite{HLS3}, sufficient conditions were given on $q$ to ensure the existence of a ground state in the charge sector $q$ with the exchange term. They could only be checked in the nonrelativistic or the weak coupling limit.

In this paper, we propose other kinds of cut-offs which seem better for obtaining decay properties of the density of charge\footnote{A similar remark was made in \cite{LieLo} in the context of non-relativistic QED.}. Essentially, they consist in replacing the Dirac operator $D^0$ by $D^\zeta(p)=(\alp\cdot p+\beta)(1+\zeta(|p|^2/\Lambda^2))$ where $\zeta$ is a smooth function growing fast enough at infinity. We call these cut-offs \emph{smooth} in contrast to the previous sharp cut-off. But many of our results will also be valid in the sharp cut-off case.

Even with an ultraviolet cut-off, a minimizer $Q$ will in general not be trace-class. But we shall be able to prove that anyway its density of charge is an $L^1$ function: $\rho_Q\in L^1(\R^3)$. This information can then be used to prove the existence of all atoms and molecules which are either neutral or positively charged and do not have a too strong positive nuclear density.
Also we shall prove a formula which relates the integral of $\rho_Q$ and $q=\tr_{P^0_-}(Q)$ of the form
\begin{equation}
 \int_{\R^3}\rho_Q -Z\simeq\frac{q-Z}{1+2/(3\pi)\alpha\log\Lambda}\label{formula_int_rho_intro}
\end{equation}
(see Theorem \ref{thm_L1} for a precise statement depending on the chosen cut-off $\Lambda$). When $q\neq Z$, this proves that $\int_{\R^3}\rho_Q\neq q=\tr_{P^0_-}(Q)$, hence $Q$ cannot be trace-class.

The fact that a minimizer is not trace-class but its density is anyway an $L^1$ function can first be thought of as being a technical issue. But Equation \eqref{formula_int_rho_intro} has a relevant physical interpretation. It means that the total observed charge $\int_{\R^3}\rho_Q -Z$ is different from the real charge $q-Z$ of the system. Hence the mathematical property that a minimizer is not trace-class is well interpreted physically in terms of \emph{charge renormalization}. We even recover a standard charge renormalization formula in QED, see \cite[Eq. $(8)$]{Lan84} and \cite[Eq. $(7.18)$]{IZ}, although we use a simple model without photons and within the Hartree-Fock approximation with the exchange term removed.

As announced before, we shall prove in this paper that minimizers exist if and only if $q\in[q_m,q_M]$, an interval which contains both $Z$ and the charge $q_0$ of the polarized vacuum. We shall also derive some bounds on $q_m$ and $q_M$, assuming that the nuclear charge distribution is not too strong. Essentially we prove that $q_m<0$ is very small and that 
$$Z\leq q_M\leq 2Z+\underset{\alpha\to0}{o}(1).$$
 In the nonrelativistic limit we recover the usual bound of the reduced Hartree-Fock model which can be obtained by a method of Lieb \cite{Lieb}.

\medskip

In the next section, we define the reduced BDF energy \eqref{red-BDF_formal} properly and state our main results. Proofs are given in Section \ref{sec_proof}.

\bigskip

\noindent\textbf{Acknowledgment.} M.L. and E.S. acknowledge support from the ANR project ``ACCQUAREL'' of the French ministry of research.

\section{Model and main results}
In the whole paper, we denote by $\gS_p(\gH)$ the usual Schatten class of operators $Q$ acting on a Hilbert space $\gH$ and such that $\tr(|Q|^p)<\ii$. We use the notation $Q^{\epsilon\epsilon'}:=P^0_\epsilon QP^0_{\epsilon'}$ for any $\epsilon,\epsilon'\in\{\pm\}$.
A self-adjoint operator $Q$ acting on $\gH$ is said to be $P^0_-$-trace class \cite{HLS1} if $Q\in\gS_2(\gH)$ and $Q^{++},\ Q^{--}\in\gS_1(\gH)$. We then define its $P^0_-$-trace as
$$\tr_{P^0_-}(Q)=\tr(Q^{--})+\tr(Q^{++}).$$
The space of $P^0_-$-trace class operators on $\gH$ will be denoted by $\gS_1^{P^0_-}(\gH)$. We refer to \cite{HLS1} where important properties of this generalization of the trace functional are provided.

\subsection{Ultraviolet regularization}
It is well-known that in Quantum Electrodynamics a cut-off is mandatory \cite{BD,IZ}. There are two sources of divergence in the Bogoliubov-Dirac-Fock model. The first is the negative continuous spectrum of the Dirac operator, which is cured by the subtraction of the (infinite) energy of the Dirac sea, as explained above. The second source of divergence is the rather slow growth of the Dirac operator for large momenta: $D^0$ only behaves linearly in $p$ at infinity\footnote{Notice a model similar to the reduced-BDF theory was recently studied for non-relativistic crystals in the presence of defects \cite{CDL}, in which case a cut-off is not necessary because of the presence of the Laplacian instead of $D^0$.}.

This can be cured by imposing a sharp cut-off on the space, i.e. by replacing $L^2(\R^3,\C^4)$ by its subspace
\begin{equation}
\gH_\Lambda:=\left\{f\in L^2(\R^3,\C^4)\ |\ \text{supp}(\widehat{f})\subseteq B(0,\Lambda)\right\}.
\label{def_H_Lambda}
\end{equation}
Notice $D^0\gH_\Lambda\subset\gH_\Lambda$.
This simple approach was chosen in previous works \citeall.

However, when looking at decay properties of the electronic density, it might be more adapted to instead increase the growth of the Dirac operator at infinity. This means we replace $D^0$ by the operator
\begin{equation}
 D^\zeta(p):=(\alp\cdot p+\beta)\left(1+\zeta\left(\frac{|p|^2}{\Lambda^2}\right)\right)
\label{def_D_cut_off}
\end{equation}
where $\zeta:[0,\ii)\mapsto[0,\ii)$ grows fast enough at infinity. The operator $D^\zeta$ is self-adjoint on $\gH=L^2(\R^3,\C^4)$ with domain
$${\cal D}(D^\zeta):=\left\{f\in L^2(\R^3,\C^4)\ |\ \left(1+|p|\zeta\left(\frac{|p|^2}{\Lambda^2}\right)\right)^{1/2}\widehat{f}(p)\in L^2(\R^3,\C^4)\right\}.$$

We remark that the case of the sharp cut-off \eqref{def_H_Lambda} formally corresponds to 
\begin{equation}
 \zeta(x)=\left\{\begin{array}{ll}
0 & \text{if } |x|\leq1;\\
+\ii & \text{otherwise.}
\end{array}\right.
\label{zeta_infini}
\end{equation}

\medskip

In this work, we shall consider both cases \eqref{def_H_Lambda} and \eqref{def_D_cut_off}. We assume throughout the whole paper that 
\begin{description}
 \item[$\bullet$ either] $\gH=\gH_\Lambda$ and $\zeta\equiv0$ (or equivalently $\zeta$ given by \eqref{zeta_infini});
\item[$\bullet$ or] $\gH=L^2(\R^3,\C^4)$ and $\zeta$ satisfies the following properties:
\end{description}

\vspace{-0,4cm}
\begin{equation}
\zeta\in C^3([0,\ii)) \text{ is non-decreasing and }\ \zeta(0)=0,
\label{prop_zeta_1}
\end{equation}
\begin{equation}
\zeta(x)\geq \varepsilon x^{\varepsilon/2}\1(x\geq1)\quad \text{for some } \varepsilon>0, 
\label{prop_zeta_2}
\end{equation}
\begin{equation}
(1+|x|^p)\left|\zeta^{(p)}(x)\right|\leq C(1+\zeta(x))\quad \text{for } p=1,2,3.
\label{prop_zeta_3}
\end{equation}
Many of our results will be true under weaker assumptions on $\zeta$ but we shall restrict ourselves to \eqref{prop_zeta_1}--\eqref{prop_zeta_3} for simplicity.
We notice that under these assumptions, the spectrum of $D^\zeta$ is the same as the one of $D^0$:
$$\sigma(D^\zeta)=(-\ii;-1]\cup[1;\ii).$$
Also the negative spectral projector of $D^\zeta$ is the same as the one of $D^0$:
$$P^0_-=\chi_{(-\ii,0]}(D^0)=\chi_{(-\ii,0]}(D^\zeta).$$

In the whole paper, we shall consider perturbations of $D^\zeta$ of the form $D^\zeta+\rho\ast|\cdot|^{-1}$ where $\rho$ belongs to the so-called Coulomb space 
\begin{equation}
 \cC:=\{\rho \in \mathcal{ S}'(\R^3) \ | \  D(\rho,\rho)<\ii\}
\label{def_cC}
\end{equation}
where 
\begin{equation}
 D(f,g)=4\pi\int_{\R^3}|k|^{-2}\overline{\widehat{f}(k)}\widehat{g}(k)dk.
\label{def_Coulomb}
\end{equation}
Notice the dual space of $\cC$ is the Beppo-Levi space
$$\cC':= \left\{ V \in  L^6(\R^3) \ | \ \nabla V \in L^2(\R^3) \right\}.$$
\begin{lemma}\label{lem_weyl} We assume that $\gH=\gH_\Lambda$ and $\zeta=0$, or that $\gH=L^2(\R^3,\C^4)$ and $\zeta$ satisfies \eqref{prop_zeta_1}--\eqref{prop_zeta_3}.
 For any $\rho\in\cC$, the operator $D^\zeta+\rho\ast|\cdot|^{-1}$ defined on the same domain as $D^\zeta$ is self-adjoint and satisfies:
$$\sigma_{\rm ess}(D^\zeta+\rho\ast|\cdot|^{-1})=\sigma_{\rm ess}(D^\zeta)=(-\ii,-1]\cup[1,\ii).$$
\end{lemma}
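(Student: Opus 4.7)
The plan splits naturally into (i) self-adjointness, by Kato--Rellich, and (ii) stability of the essential spectrum, by Weyl's theorem. Both hinge on one Sobolev-type control of the multiplication operator $V:=\rho\ast|\cdot|^{-1}$: from $\rho\in\cC$ and $\widehat V(k)=4\pi|k|^{-2}\widehat\rho(k)$ one gets $\nabla V\in L^2(\R^3)$, whence $V\in L^6(\R^3)$ by Sobolev's embedding. Note also the pointwise bound $|D^\zeta(p)|\geq\sqrt{1+|p|^2}$, forced by $\zeta\geq 0$, so that $\|f\|_{H^1}\leq C(\|D^\zeta f\|_{L^2}+\|f\|_{L^2})$ for every $f\in\cD(D^\zeta)$.

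For (i), the first step is the chain
\[
\|Vf\|_{L^2}\leq\|V\|_{L^6}\|f\|_{L^3}\leq C\|V\|_{L^6}\|f\|_{L^2}^{1/2}\|f\|_{H^1}^{1/2},
\]
obtained from H\"older and the interpolation $L^3=[L^2,L^6]_{1/2}$ combined with $H^1\hookrightarrow L^6$. Feeding in the bound on $\|f\|_{H^1}$ and applying Young's inequality produces $\|Vf\|_{L^2}\leq a\|D^\zeta f\|_{L^2}+C_a\|f\|_{L^2}$ for every $a>0$, so $V$ has $D^\zeta$-relative bound $0$ and Kato--Rellich gives self-adjointness on $\cD(D^\zeta)$. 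In the sharp-cut-off regime $D^0|_{\gH_\Lambda}$ is bounded, and the same estimate (composed with the orthogonal projector onto $\gH_\Lambda$) shows that the perturbation is bounded, so self-adjointness is automatic.

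For (ii), the plan is to show that $V$ is $D^\zeta$-compact. I would truncate $V_n:=V\,\1_{\{|x|\leq n\}}\1_{\{|V|\leq n\}}\in L^\infty$ with compact support; then each operator $V_n(D^\zeta+i)^{-1}$ sends $L^2$ into functions bounded in $H^1$ and supported in a fixed ball, hence is compact by Rellich--Kondrachov. Applying the H\"older--Sobolev estimate above to $V-V_n$ and using $\|V-V_n\|_{L^6}\to 0$ by dominated convergence, one obtains norm-convergence $V_n(D^\zeta+i)^{-1}\to V(D^\zeta+i)^{-1}$ in $\mathcal{L}(\gH)$, so the limit is compact as well. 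Weyl's theorem then yields $\sigma_{\rm ess}(D^\zeta+V)=\sigma_{\rm ess}(D^\zeta)$, and the latter is read off the Fourier-multiplier representation, whose eigenvalues $\pm\sqrt{1+|p|^2}(1+\zeta(|p|^2/\Lambda^2))$ cover $(-\ii,-1]\cup[1,\ii)$ thanks to $\zeta\geq 0$ together with the growth \eqref{prop_zeta_2} (in the sharp-cut-off variant the same conclusion reduces to the essential range of $D^0$ on $\gH_\Lambda$). The only point I expect to require mild care is merging the two cut-off conventions into a single uniform argument; no genuine singularity of $V$ enters, so the whole analysis reduces to standard Sobolev calculus and Weyl perturbation theory.
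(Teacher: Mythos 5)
Your proof is correct but follows a more elementary route than the paper's. The paper disposes of the whole lemma in a single line by invoking the Kato--Seiler--Simon inequality $\norm{f(-i\nabla)g(x)}_{\gS_p}\leq (2\pi)^{-3/p}\norm{g}_{L^p}\norm{f}_{L^p}$ to conclude that $V|D^\zeta|^{-1}\in\gS_6(\gH)$, hence compact; Weyl's criterion then gives the essential spectrum, and the same compactness automatically yields relative bound $0$ (so Kato--Rellich is implicit). Your version replaces this with two hands-on steps: (i) the H\"older/interpolation chain $\|Vf\|_{L^2}\leq\|V\|_{L^6}\|f\|_{L^3}\leq C\|V\|_{L^6}\|f\|_{L^2}^{1/2}\|f\|_{H^1}^{1/2}$ plus Young, to get relative bound $0$ directly, and (ii) a truncation $V_n$ plus Rellich--Kondrachov plus $\|V-V_n\|_{L^6}\to0$, to get relative compactness. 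This avoids any appeal to Schatten ideals and is conceptually self-contained, at the cost of being longer; one small economy you missed is that step (ii) alone already implies step (i), since relative compactness forces relative bound $0$, so the separate Kato--Rellich argument is redundant. The paper's KSS shortcut is the natural choice here because the same inequality \eqref{KSS} is used repeatedly later in the article (Lemma \ref{lemma_estim_commut}, Proposition \ref{prop_def_rho}, etc.), so nothing is saved by avoiding it. Both arguments correctly identify $\sigma_{\rm ess}(D^\zeta)=(-\ii,-1]\cup[1,\ii)$ from the Fourier-multiplier eigenvalues $\pm\sqrt{1+|p|^2}\bigl(1+\zeta(|p|^2/\Lambda^2)\bigr)$ using $\zeta\geq0$, $\zeta(0)=0$, continuity and the growth \eqref{prop_zeta_2}.
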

\begin{proof}
We denote $V:=\rho\ast|\cdot|^{-1}$. We have $V|D^\zeta|^{-1}$ is in $\gS_6(\gH)$, hence is compact. This is because we can use the Kato-Seiler-Simon inequality (see \cite{SeSi} and \cite[Thm 4.1]{Simon})
\begin{equation}
 \forall p\geq2,\qquad \norm{f(-i\nabla)g(x)}_{\gS_p}\leq (2\pi)^{-3/p}
\norm{g}_{L^p(\R^3)}\norm{f}_{L^p(\R^3)}
\label{KSS}
\end{equation} 
 and obtain
\begin{equation}
\norm{V|D^\zeta|^{-1}}_{\gS_6(\gH)}\leq C\norm{V}_{L^6(\R^3)}\norm{|D^\zeta(\cdot)|^{-1}}_{L^6(\R^3)}\leq C\norm{\nabla V}_{L^2}=C\norm{\rho}_\cC. 
\label{estim_weyl}
\end{equation}
Lemma \ref{lem_weyl} is then an application of a criterion by Weyl \cite[Sec. XIII.4]{RS4}. 
\end{proof}

\subsection{Definition of the reduced-BDF energy}
We recall that $\gH=L^2(\R^3,\C^4)$ or $\gH=\gH_\Lambda$ depending on the chosen cut-off. We need to provide a correct setting for the rBDF energy. When $\gH=\gH_\Lambda$, this was done in \citeall. When $\gH=L^2(\R^3,\C^4)$, this is done similarly to the crystal case studied in \cite{CDL}. 
We introduce the following Banach space:
\begin{multline}
\cQ:=\bigg\{Q\in\gS_2(\gH)\quad |\quad Q^*=Q,\ |D^\zeta|^{1/2}Q\in \gS_2(\gH),\\  |D^\zeta|^{1/2}Q^{++}|D^\zeta|^{1/2}\in\gS_1(\gH),\ |D^\zeta|^{1/2}Q^{--}|D^\zeta|^{1/2}\in\gS_1(\gH)\bigg\}
\label{def_Q}
\end{multline}
with associated norm
\begin{multline}
\norm{Q}_\cQ:=\norm{|D^\zeta|^{1/2}Q}_{\gS_2(\gH)}+\norm{|D^\zeta|^{1/2}Q^{++}|D^\zeta|^{1/2}}_{\gS_1(\gH)}\\+\norm{|D^\zeta|^{1/2}Q^{--}|D^\zeta|^{1/2}}_{\gS_1(\gH)}.
\label{norm_Q}
\end{multline}
We notice that when $\gH=\gH_\Lambda$ and $\zeta=0$, one has $\cQ=\gS^{P^0_-}_1(\gH_\Lambda)$ as chosen in \citeallb. In the general case, we only have $\cQ\subset \gS^{P^0_-}_1(\gH)$.
We recall that $\gS_1(\gH)$ is the dual of the space of compact operators acting on $\gH$. Hence $\gS_1(\gH)$ can be endowed with the associated weak-$\ast$ topology where $A_n\wto A$ in  $\gS_1(\gH)$ means that $\tr(A_nK)\to\tr(AK)$ for any compact operator $K$. Together with the fact that $\gS_2(\gH)$ is a Hilbert space, this defines a weak topology on $\cQ$.

We also introduce the following convex subset of $\cQ$:
\begin{equation}
 \cK:=\left\{Q\in\cQ\ |\ -P^0_-\leq Q\leq P^0_+\right\}
\label{def_K}
\end{equation}
which is the closed convex hull of states of the form $Q=P-P^0_-\in\cQ$ where $P$ is an orthogonal projector acting on $\gH$. It is clear that $\cK$ is closed both for the strong and the weak-$\ast$ topology of $\cQ$.
As we shall see, the reduced BDF energy will be coercive and weakly lower semi-continuous on $\cK$.

Besides, the number $\tr_{P^0_-}(Q)$ can be interpreted as the charge of the system measured with respect to that of the unperturbed Dirac sea $P^0_-$, see \citeall. Note that the constraint $-P^0_-\leq Q\leq P^0_+$ in \eqref{def_K} is indeed equivalent \cite{BBHS,HLS1} to the inequality
\begin{equation}
0\leq Q^2\leq Q^{++}-Q^{--}
\label{equivalent_condition} 
\end{equation}
and implies in particular that $Q^{++}\geq 0$ and $Q^{--}\leq0$ for any $Q\in\cK$.

We need to define the density $\rho_Q$ of any state $Q\in\cQ$. When $\gH=\gH_\Lambda$, this is easy as any $Q\in\cQ$ has a smooth kernel $Q(x,y)$ (this is because the Fourier transform $\widehat{Q}(p,q)\in L^2(B(0,\Lambda)^2)$). This property was used in \cite{HLS1}--\cite{HLSo} to properly define the density of charge.
In the case where $\gH=L^2(\R^3,\C^4)$ and $\zeta\neq0$, this is a bit more involved.
The following is similar to \cite[Lemma 1]{HLS3} and \cite[Prop. 1]{CDL} (we recall that $\cC$ was defined above in \eqref{def_cC}):

\begin{prop}[Definition of the density $\rho_Q$ for $Q\in\cQ$] \label{prop_def_rho} We assume that $\gH=\gH_\Lambda$ and $\zeta=0$, or that $\gH=L^2(\R^3,\C^4)$ and $\zeta$ satisfies \eqref{prop_zeta_1}--\eqref{prop_zeta_3}.

Let $Q\in\cQ$. Then $QV\in \gS_1^{P^0_-}(\gH)$ for any $V\in\cC'$. Moreover there exists a constant $C$ (independent of $Q$ and $V$) such that
$$|\tr_{P^0_-}(QV)|\leq C\norm{Q}_{\cQ}\norm{V}_{\cC'}.$$
Hence, there exists a continuous linear form $Q\in\cQ\mapsto\rho_Q\in\cC$ which satisfies
$$\tr_{P^0_-}(QV)=\;_{\cC'}\!\pscal{V,\rho_Q}\;\!_\cC$$
for any $V\in\cC'$ and any $Q\in \cQ$. 
Eventually when $Q\in\cQ\cap \gS_1(\gH)$, then $\rho_Q(x)=\tr_{\C^4}Q(x,x)$ where $Q(x,y)$ is the integral kernel of $Q$.
\end{prop}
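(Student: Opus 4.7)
The plan is to reduce the whole statement to the bilinear estimate $|\tr_{P^0_-}(QV)|\leq C\|Q\|_\cQ\|V\|_{\cC'}$, from which the existence of $\rho_Q\in\cC$ follows by the Riesz representation theorem applied to the duality between $\cC$ and $\cC'$ with pairing $\langle V,\rho\rangle=\int V\rho$ (recall $\cC$ and $\cC'$ are Hilbert spaces in duality via this formula).

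To prove the estimate, I would decompose $Q=Q^{++}+Q^{--}+Q^{+-}+Q^{-+}$ and use $Q^{+-}P^0_+=0$ and $Q^{-+}P^0_-=0$ to rewrite
\[\tr_{P^0_-}(QV)=\tr(Q^{++}V)+\tr(Q^{--}V)+\tr(Q^{+-}[V,P^0_+])+\tr(Q^{-+}[V,P^0_-]),\]
each trace being rigorously interpreted through appropriate $|D^\zeta|^{1/2}$-weightings. For the diagonal blocks, the cyclic rewriting $\tr(Q^{\pm\pm}V)=\tr\bigl((|D^\zeta|^{1/2}Q^{\pm\pm}|D^\zeta|^{1/2})(|D^\zeta|^{-1/2}V|D^\zeta|^{-1/2})\bigr)$ reduces the bound to $\||D^\zeta|^{-1/2}V|D^\zeta|^{-1/2}\|\leq C\|V\|_{\cC'}$; since $|D^\zeta(p)|\geq\sqrt{1+|p|^2}$, the operator $|D^\zeta|^{-1/2}$ maps $L^2$ into $H^{1/2}(\R^3)$, and Sobolev $H^{1/2}\hookrightarrow L^3$, interpolation, and Hölder produce $|\langle\phi,V\phi\rangle|\leq C\|V\|_{L^6}\|\phi\|_{H^{1/2}}^2\leq C\|V\|_{\cC'}\|\phi\|_{H^{1/2}}^2$ for $\phi=|D^\zeta|^{-1/2}\psi$.

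For the off-diagonal blocks I would apply the Hölder inequality $\gS_2\cdot\gS_2\subset\gS_1$ in the form
\[|\tr(Q^{\pm\mp}[V,P^0_\pm])|\leq\bigl\||D^\zeta|^{1/2}Q^{\pm\mp}\bigr\|_{\gS_2}\bigl\|[V,P^0_\pm]\,|D^\zeta|^{-1/2}\bigr\|_{\gS_2}\leq\|Q\|_\cQ\cdot C\|V\|_{\cC'}.\]
The weighted commutator estimate for the second factor relies on the sharp symbol identity $(E(p)+E(q))P^0_-(p)P^0_+(q)=P^0_-(p)\,\alp\cdot(p-q)\,P^0_+(q)$ with $E(p)=\sqrt{1+|p|^2}$, which yields $\|P^0_+(p)-P^0_+(q)\|\leq|p-q|/(E(p)+E(q))$; the resulting Fourier-side Hilbert--Schmidt integral then converges either trivially in the sharp cutoff case $\gH_\Lambda$ or via the growth hypothesis \eqref{prop_zeta_2} on $\zeta$.

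The identification $\rho_Q(x)=\tr_{\C^4}Q(x,x)$ when $Q\in\cQ\cap\gS_1(\gH)$ follows by density: for smooth compactly supported $V$, the classical trace formula gives $\tr(QV)=\int V(x)\tr_{\C^4}Q(x,x)dx=\langle V,\rho_Q\rangle_{\cC',\cC}$, and such $V$ are dense in $\cC'$. The main difficulty lies in the weighted commutator estimate in the smooth-cutoff setting $\gH=L^2(\R^3,\C^4)$: without the $|D^\zeta|^{-1/2}$ weight, $\|[V,P^0_\pm]\|_{\gS_2}$ diverges logarithmically at infinity on the full space, so the growth of $\zeta$ at infinity is essential to close the estimate.
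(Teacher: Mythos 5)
Your overall strategy is the same as the paper's: decompose $QV$ into diagonal pieces $Q^{\pm\pm}V$ and off-diagonal pieces $Q^{\pm\mp}[V,P^0_\pm]$ (a valid algebraic identity, since $Q^{+-}VP^0_+=Q^{+-}[V,P^0_+]$ because $Q^{+-}P^0_+=0$), weight by $|D^\zeta|^{\pm1/2}$, and use the $\gS_1$/$\gS_2$ structure built into $\cQ$. Your diagonal estimate is also a cosmetic rewording of the paper's: you bound $\|\,|D^\zeta|^{-1/2}V|D^\zeta|^{-1/2}\|$ by a quadratic-form argument through $H^{1/2}\hookrightarrow L^{12/5}$ and Sobolev, whereas the paper bounds the one-sided $\|\,|D^\zeta|^{-1/2}V\|_{\gS_\infty}\leq\|\,|D^\zeta|^{-1/2}V\|_{\gS_6}$ by Kato--Seiler--Simon together with the integrability of $|D^\zeta(\cdot)|^{-1/2}$ in $L^6$ supplied by \eqref{prop_zeta_2}; both are correct.

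Where you genuinely diverge is in the commutator lemma, which is the technical heart of the proof. The paper (Lemma~\ref{lemma_estim_commut}) represents $[V,P^0_-]$ by a resolvent integral via Cauchy's formula,
\[[V,P^0_-]=-\frac{i}{2\pi}\int_{-\infty}^{\infty}\frac{1}{D^0+i\eta}\,\alp\cdot(\nabla V)\,\frac{1}{D^0+i\eta}\,d\eta,\]
then invokes Kato--Seiler--Simon; this gives $\|\,|D^\zeta|^{-\tau}[V,P^0_-]\|_{\gS_p}\leq C\|\nabla V\|_{L^p}$ for general $p\geq2$, $\tau\geq1/2$, a flexibility the authors exploit heavily elsewhere (Lemmas~\ref{prop_Q_k}, \ref{density_Q3}). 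You instead estimate the Fourier kernel of $[V,P^0_\pm]|D^\zeta|^{-1/2}$ directly via the algebraic bound on $\|P^0_+(p)-P^0_+(q)\|$, which leads to the Hilbert--Schmidt integral
\[\int|k|^2|\hat V(k)|^2\biggl(\int\frac{dq}{(E(q)+E(q+k))^2\,|D^\zeta(q)|}\biggr)dk,\]
finite either because the inner integral is over $B(0,\Lambda)$ (sharp cutoff) or because \eqref{prop_zeta_2} forces $|D^\zeta(q)|\gtrsim|q|^{1+\varepsilon}$. This is a correct and perhaps more elementary argument for the particular case $p=2$, $\tau=1/2$ that Proposition~\ref{prop_def_rho} needs, but it does not immediately deliver the general-$(p,\tau)$ version the paper uses later. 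Note one small slip: the identity you quote should read $-(E(p)+E(q))P^0_-(p)P^0_+(q)=P^0_-(p)\,\alp\cdot(p-q)\,P^0_+(q)$ (with a minus sign), and it yields $\|P^0_+(p)-P^0_+(q)\|\leq 2|p-q|/(E(p)+E(q))$ after combining the two off-diagonal blocks; this changes nothing in the estimate but should be recorded correctly. Your density argument for the final identification $\rho_Q(x)=\tr_{\C^4}Q(x,x)$ when $Q\in\cQ\cap\gS_1$ is fine.
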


The proof of Proposition \ref{prop_def_rho} is given in Section \ref{proof_prop_def_rho} below.

Let us now define the reduced Bogoliubov-Dirac-Fock (rBDF) energy. In the whole paper, we use the notation, for any $Q\in\cQ$,
\begin{equation}
\tr_{P^0_-}(D^\zeta Q):=\tr\left(|D^\zeta|^{1/2}(Q^{++}-Q^{--})|D^\zeta|^{1/2}\right).
\label{def_kinetic_energy}
\end{equation}
When $D^\zeta Q\in\gS_1^{P^0_-}(\gH)$, this coincides with the definition of the generalized trace introduced above. The rBDF energy reads:
\begin{equation}
\label{def_energy}
\boxed{\Er^\nu(Q)=\tr_{P^0_-}(D^\zeta Q)-\alpha D(\nu,\rho_Q)+\frac\alpha2 D(\rho_Q,\rho_Q)}
\end{equation}
where we recall that $D(\cdot,\cdot)$ was defined in \eqref{def_Coulomb}.
In \eqref{def_energy}, $\nu$ is an external density which will be assumed to belong to $L^1(\R^3)\cap \cC$.
We use the notation $\int_{\R^3}\nu=Z$.
The energy $\Er^\nu$ is well-defined \cite{HLS1,HLS3} on the convex set $\cK$. By \eqref{equivalent_condition}, we have 
\begin{align}
\tr_{P^0_-}(D^\zeta Q) &= \norm{|D^\zeta|^{1/2} Q^{++}|D^\zeta|^{1/2}}_{\gS_1(\gH)}+\norm{|D^\zeta|^{1/2} Q^{--}|D^\zeta|^{1/2}}_{\gS_1(\gH)}\nonumber\\
 & \geq \norm{|D^\zeta| Q}_{\gS_2(\gH)}^2.
\label{kinetic_energy_coercive} 
\end{align}
Together with
$$-\alpha D(\nu,\rho_Q)+\frac\alpha2 D(\rho_Q,\rho_Q)\geq -\frac\alpha2 D(\nu,\nu),$$
this proves both that $\Er^\nu$ is bounded from below on $\cK$,
$$\forall Q\in\cK,\qquad\Er^\nu(Q)\geq -\frac\alpha2 D(\nu,\nu),$$
and that it is coercive for the topology of $\cQ$.

Since $\Er^\nu$ is convex on $\cK$ and weakly lower semi-continuous, it has a global minimizer $\bar Q_{\rm vac}$, interpreted as the polarized vacuum in the presence of the external field created by the density $\nu$. This was remarked in \cite[Theorem 3]{HLS2}. Assuming that $\ker(D_{\bar Q_{\rm vac}})=\{0\}$ where 
$$D_{\bar Q_{\rm vac}}=D^\zeta+\alpha(\rho_{\bar Q_{\rm vac}}-\nu)\ast|\cdot|^{-1}$$
is the mean field operator, then one can adapt the proof of \cite[Theorem 3]{HLS2} to get that $\bar Q_{\rm vac}$ is unique and is a solution of the nonlinear equation $\bar Q_{\rm vac}=\chi_{(-\ii,0]}(D_{\bar Q_{\rm vac}})-P^0_-$. The charge of the polarized vacuum is $-eq_0$ where 
$$\boxed{q_0=\tr_{P^0_-}(\bar Q_{\rm vac}).}$$
When $\alpha D(\nu,\nu)^{1/2}$ is not too large \cite[Eq. (15)]{HLS2}, it was proved that $q_0=0$. However in general electron-positron pairs can appear, giving rise to a charged vacuum. When $\ker(D_{\bar Q_{\rm vac}})\neq\{0\}$, then $\E_{\rm r}^\nu$ does not have a unique global minimizer on $\cK$, but it will be proved that $q_0$ is anyway a uniquely defined quantity.

\subsection{Existence of minimizers with a charge constraint}
We are interested in the following minimization problem
\begin{equation}
\label{def_min}
\boxed{E_{\rm r}^\nu(q)=\inf_{Q\in\mathcal{Q}(q)}\Er^\nu(Q)}
\end{equation}
where the sector of charge $-eq$ is by definition
$$\cQ(q):=\{Q\in\mathcal{Q},\ \tr_{P^0_-}(Q)=q\}$$
and $q$ is any real number. Of course in Physics $q\in\Z$ but it is convenient to allow any real value. It will be proved below that $q\to E_{\rm r}^\nu(q)$ is a Lipschitz and convex function. Notice that if $\bar Q$ is a global minimizer of $\E_{\rm r}^\nu$ on $\cQ$, then $q_0=\tr_{P^0_-}(\bar Q)$ minimizes $q\to E_{\rm r}^\nu(q)$.

The existence of minimizers to \eqref{def_min} is not obvious: although $\Er^\nu$ is convex and weakly lower semi-continuous, and $\cQ(q)$ is itself a convex set, the linear form $Q\mapsto \tr_{P^0_-}(Q)$ is not weakly continuous. Hence $\cQ(q)$ is not closed for the weak topology.
Our main result is the following theorem, whose proof is given in Section \ref{proof_thm_exists} below.
\begin{thm}[Existence of atoms and molecules in the reduced BDF model]\label{exists} We assume that $\gH=\gH_\Lambda$ and $\zeta=0$, or that $\gH=L^2(\R^3,\C^4)$ and $\zeta$ satisfies \eqref{prop_zeta_1}--\eqref{prop_zeta_3}. Let be $\alpha\geq0$, $\nu\in L^1(\R^3)\cap \cC$ and denote $Z=\int_{\R^3}\nu\in\R$. Then there exists $q_m\in[-\ii,\ii)$ and $q_M\in[q_m,\ii]$ such that

\bigskip

\noindent $(i)$ $[q_m,q_M]$ is the largest interval on which $q\to E^\nu_{\rm r}(q)$ is strictly convex. If $q_M<\ii$, then $E^\nu_{\rm r}(q)=E^\nu_{\rm r}(q_M)+q-q_M$ for any $q>q_M$. If $q_M>-\ii$, then $E^\nu_{\rm r}(q)=E^\nu_{\rm r}(q_m)+q_m-q$ for any $q<q_m$;

\bigskip

\noindent $(ii)$ the interval $[q_m,q_M]$ contains both $Z$ and the unique minimizer $q_0$ of $q\to E^\nu_{\rm r}(q)$;

\bigskip

\noindent $(iii)$ if $q\notin [q_m,q_M]$, then $\Er^\nu$ has \emph{no minimizer in the charge sector} $\cQ(q)$;

\bigskip

\noindent $(iv)$ if $q\in [q_m,q_M]$, then $\Er^\nu$ has \emph{a minimizer $Q$ in the charge sector} $\cQ(q)$. This minimizer is not \emph{a priori} unique but its associated density $\rho_Q$ is uniquely determined. It is radially symmetric if $\nu$ is radially symmetric.
The operator $Q$ satisfies the self-consistent equation
\begin{equation}
\left\{\begin{array}{l}\displaystyle
Q+P^0_-  =  \chi_{(-\ii,\mu)}\left(D_Q\right)+\delta, \smallskip\\
D_Q  =  D^\zeta+\alpha(\rho_Q-\nu)\ast|\cdot|^{-1},
\end{array}\right.
\label{scf_equation}
\end{equation}
where $\mu\in[-1,1]$ is a Lagrange multiplier associated with the charge constraint and interpreted as a chemical potential, and $\delta$ satisfies $0\leq\delta\leq1$ and ${\rm Ran}(\delta)\subseteq \ker(D_Q-\mu)$. If $\mu\in(-1,1)$, then $\delta$ has a finite rank. If $\mu\in\{-1,1\}$, then $\delta$ is trace-class.

Moreover, $\rho_Q$ belongs to $L^1(\R^3)$ and satisfies
\begin{equation}
\fbox{$\displaystyle \int_{\R^3}\rho_Q -Z=\frac{q-Z}{1+\alpha B_\Lambda^\zeta(0)}$}
\label{int_rho}
\end{equation}
where 
$$B_\Lambda^\zeta(0)=\frac1\pi\int_0^{1}\frac{z^2-z^4/3}{(1-z^2)\left(1+\zeta\left(\frac{z^2}{\Lambda^2(1-z^2)}\right)\right)}dz=\frac{2}{3\pi}\log\Lambda+O(1)$$
if $\gH=L^2(\R^3,\C^4)$ and $\zeta\neq0$, and
$$B_\Lambda^0(0)=\frac1\pi \int_0^{\frac{\Lambda}{\sqrt{1+\Lambda^2}}}\frac{z^2-z^4/3}{1-z^2}dz=\frac{2}{3\pi}\log\Lambda-\frac{5}{9\pi}+\frac{2\log2}{3\pi}+O(1/\Lambda^2)$$
if $\gH=\gH_\Lambda$ and $\zeta=0$.
\end{thm}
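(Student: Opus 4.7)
\bigskip

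\noindent\textbf{Proof proposal for Theorem \ref{exists}.}

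\emph{Step 1: Convexity, Lipschitz property, and the HVZ-type binding inequality.} I would first establish that $q\mapsto E_{\rm r}^\nu(q)$ is convex on $\R$. This is immediate from the convexity of $\Er^\nu$ on $\cK$ together with the fact that if $Q_i\in\cQ(q_i)$, then $tQ_1+(1-t)Q_2\in\cQ(tq_1+(1-t)q_2)$. Next I would prove the binding-type inequality
\[
 E_{\rm r}^\nu(q+q')\leq E_{\rm r}^\nu(q)+|q'|,\qquad \forall q,q'\in\R,
\]
by taking an almost minimizer $Q$ for $E_{\rm r}^\nu(q)$, and adding (if $q'>0$) a rank-one perturbation $|\phi_R\rangle\langle\phi_R|\in P^0_+\cQ P^0_+$ built from a normalized positive-energy spinor $\phi_R$ translated to infinity: since $\widehat{\phi_R}$ can be concentrated near $|p|=0^+$ on the positive-energy shell, $\langle\phi_R,D^\zeta\phi_R\rangle\to 1$, the density $|\phi_R|^2$ escapes to infinity, so the direct and exchange-free Coulomb interaction with $\nu$ and $\rho_Q$ vanish in the limit. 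A mirror construction in $P^0_-\cQ P^0_-$ handles $q'<0$. This shows $E_{\rm r}^\nu$ is $1$-Lipschitz, which combined with convexity means the set where $E_{\rm r}^\nu$ is strictly convex is an interval $[q_m,q_M]$ (possibly infinite or reduced to a point) outside of which $E_{\rm r}^\nu$ has slope $\pm 1$, giving items (i) and (iii) at once (any alleged minimizer for $q>q_M$ would contradict the achieved affine slope).

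\emph{Step 2: Locating $q_0$ and $Z$ in $[q_m,q_M]$.} The point $q_0$ lies in $[q_m,q_M]$ by definition as a minimum of a convex function. For $Z\in[q_m,q_M]$, I would compute the one-sided derivatives of $E_{\rm r}^\nu$ at $Z$ and show they lie in $(-1,1)$. This can be done by a test-state argument: starting from any near-minimizer $Q$ at charge $Z$, and taking $\phi$ a normalized eigenfunction of the mean-field operator $D_Q=D^\zeta+\alpha(\rho_Q-\nu)\ast|\cdot|^{-1}$ in its discrete spectrum inside $(-1,1)$ (which exists by Lemma \ref{lem_weyl} and a standard trial-state argument using the Coulomb attraction of the effective density $\rho_Q-\nu$, whose total integral can be made negative for charges near $Z$), adding or removing a fraction of $|\phi\rangle\langle\phi|$ strictly decreases the slope in absolute value below $1$. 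Hence $Z$ belongs to the open interval of strict convexity, proving (ii).

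\emph{Step 3: Existence of minimizers when $q\in[q_m,q_M]$.} This is the main analytic core. Take a minimizing sequence $(Q_n)\subset\cQ(q)$; by coercivity \eqref{kinetic_energy_coercive} of $\Er^\nu$, $\norm{Q_n}_\cQ$ is bounded, so up to extraction $Q_n\wto Q_*$ for the weak topology of $\cQ$, and $Q_*\in\cK$. The obstacle is that $\tr_{P^0_-}$ is not weakly continuous: one may have $q_*:=\tr_{P^0_-}(Q_*)=q-q'$ with $q'\neq 0$. Following the strategy of \citeall, I would prove a \emph{decoupling lemma} stating that in this case
\[
 E_{\rm r}^\nu(q)=\liminf_n \Er^\nu(Q_n)\geq \Er^\nu(Q_*)+|q'|\geq E_{\rm r}^\nu(q-q')+|q'|,
\]
with the second term coming from the fact that the ``lost'' mass, being orthogonal to the weak limit and escaping either to spatial or momentum infinity, has kinetic energy at least $1$ per unit of charge (since $\sigma(D^\zeta)\subset(-\ii,-1]\cup[1,\ii)$) while its Coulomb self-interaction and cross-interactions vanish. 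Combined with the strict convexity of $E_{\rm r}^\nu$ on $[q_m,q_M]$, this forces $q'=0$, so $Q_*\in\cQ(q)$ and is a minimizer by weak lower semicontinuity. The self-consistent equation \eqref{scf_equation} then follows from differentiating $\Er^\nu$ at $Q_*$ on the convex set $\cK$ with a Lagrange multiplier $\mu$ for the charge constraint, using standard arguments \cite{HLS3}; uniqueness of $\rho_Q$ follows from strict convexity of the Coulomb term $D(\rho_Q,\rho_Q)$ together with uniqueness of the linear response, and radial symmetry follows by averaging over $SO(3)$. The hardest step here will be the decoupling lemma: it requires careful localization in position or momentum space adapted to the non-local operator $D^\zeta$ (sharp cut-off: use the projectors $P^0_\pm$ and spatial cut-offs $\chi_R$; smooth cut-off: use \eqref{prop_zeta_1}--\eqref{prop_zeta_3} to ensure sufficient regularity of $D^\zeta$-localization commutators).

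\emph{Step 4: $L^1$ property and the renormalization formula \eqref{int_rho}.} Writing the self-consistent equation \eqref{scf_equation} using the Cauchy integral formula
\[
 \chi_{(-\ii,\mu)}(D_Q)-P^0_-=\frac{1}{2\pi}\int_{-\ii}^{+\ii}\bigl[(D^\zeta+i\eta)^{-1}-(D_Q+i\eta)^{-1}\bigr]d\eta + (\text{contributions near }\mu),
\]
I would expand $Q$ in powers of the mean-field potential $V=\alpha(\rho_Q-\nu)\ast|\cdot|^{-1}$ as in \cite{HLS2}: $Q=Q_1[V]+Q_{\geq 2}[V]+(\text{bound-state part})$. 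The bound-state part is trace class, and a careful analysis (following \cite{HLS2}) shows $Q_{\geq 2}$ has an $L^1$ density. The linear response $Q_1[V]$ is translation-invariant with explicit symbol computable in Fourier space, and its density is $\rho_{Q_1[V]}=-\alpha\,\mathcal{B}^\zeta_\Lambda\ast(\rho_Q-\nu)$ where the Fourier symbol $\widehat{\mathcal{B}^\zeta_\Lambda}(k)$ is computed by direct contour integration and yields precisely $B_\Lambda^\zeta(0)$ at $k=0$. Since $\rho_Q-\nu\in\cC\cap L^1$ (the $L^1$ property being the step that requires care), evaluating $\widehat{\rho_Q}(0)=\int\rho_Q$ gives
\[
 \int\rho_Q=q-\alpha B_\Lambda^\zeta(0)\Bigl(\int\rho_Q-Z\Bigr),
\]
which rearranges to \eqref{int_rho}. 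The elementary integral representation of $B^\zeta_\Lambda(0)$ and its logarithmic asymptotics come from the same Fourier computation. The main technical obstacle here is justifying $\rho_Q\in L^1$: this is where the assumptions \eqref{prop_zeta_1}--\eqref{prop_zeta_3}, in particular the decay of $|p|\zeta(|p|^2/\Lambda^2)^{-1}$, enter to control $\|Q_{\geq 2}\|_{\gS_1}$-type norms and push through a fixed-point/bootstrap argument on the self-consistent equation.
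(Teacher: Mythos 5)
Your Steps 3 and 4 (concentration-compactness/decoupling and the Cauchy-integral expansion to get $\rho_Q\in L^1$ and \eqref{int_rho}) are essentially the paper's Steps 1 and 2, and the basic ingredients in your Steps 1 and 2 (convexity, the subadditive inequality $E_{\rm r}^\nu(q+q')\leq E_{\rm r}^\nu(q)+|q'|$, a trial-state argument showing there are unfilled/filled bound states near $\pm 1$) are all correct in spirit. But the logical organization has a genuine gap: you \emph{define} $[q_m,q_M]$ as the largest interval of strict convexity, then immediately assert that (iii) follows and that Step 3 can "combine with strict convexity" to close the decoupling argument. The problem is that strict convexity of $q\mapsto E_{\rm r}^\nu(q)$ on the region of existence is itself a nontrivial conclusion which cannot be obtained from convexity, Lipschitzianity and coercivity alone. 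The functional $\Er^\nu$ is \emph{not} strictly convex in $Q$: the kinetic term is linear and only $D(\rho_Q,\rho_Q)$ is strictly convex, but in $\rho_Q$, not in $Q$, since $Q\mapsto\rho_Q$ is far from injective. Hence one can perfectly well imagine two distinct minimizers $Q_1\in\cQ(q_1)$, $Q_2\in\cQ(q_2)$ with the same density, which would make $E_{\rm r}^\nu$ affine on $[q_1,q_2]$. The way the paper rules this out is precisely the renormalization identity \eqref{int_rho}: if $q_1\neq q_2$ then $\int\rho_{Q_1}\neq\int\rho_{Q_2}$, so $\rho_{Q_1}\neq\rho_{Q_2}$ and the Coulomb term gives strict convexity. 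That identity is only available \emph{after} your Step 4, so the ordering in which you invoke strict convexity in Steps 1 and 3 is circular.

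The fix is to do what the paper does: first define $q_m$ and $q_M$ not via strict convexity but via the monotonicity of $f^{\pm}(q)=E_{\rm r}^\nu(q)\mp q$ (convex, bounded below, monotone thanks to the subadditive inequality), so that $[q_m,q_M]$ is by construction exactly the set where the strict binding inequality (H1) holds. Then prove (H1) $\Leftrightarrow$ compactness of minimizing sequences, giving existence on $[q_m,q_M]$. Next prove $\rho_Q\in L^1$ and \eqref{int_rho} for every minimizer, and \emph{then} deduce strict convexity on $[q_m,q_M]$ (and hence non-existence outside it and uniqueness of $q_0$). Your Step 2 has the same ordering issue: the assertion that "the total integral of $\rho_Q-\nu$ can be made negative for charges near $Z$" is exactly \eqref{int_rho} in disguise, so it too must come after Step 4. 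In the paper this is packaged as Corollary \ref{cor_spectrum_infinite_eignv} (infinitely many eigenvalues of $D_Q$ accumulating at $\pm 1$), which depends on Theorem \ref{thm_L1}. With that reordering your argument for $Z\in[q_m,q_M]$ matches the paper's.

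Two smaller points: in Step 3, what closes the decoupling argument is (H1), not "strict convexity"; they coincide only once Lemma \ref{str_convex_I} has been proved. And in Step 1, it is not automatic that the set where a convex, $1$-Lipschitz, coercive function is strictly convex is a single interval flanked by slopes $\pm 1$; what is automatic (and what the paper uses) is that $f^{\pm}$ are monotone with exactly one change of behavior, which is a weaker structural fact that suffices to set up the argument.
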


\begin{figure}[h]
\centering
\input{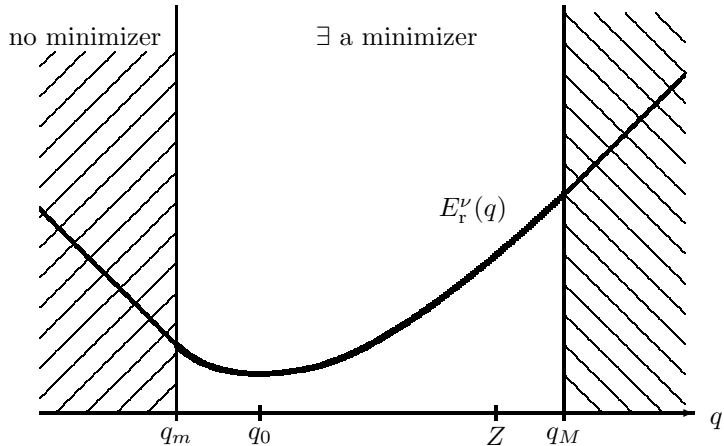}
\caption{Schematic representation of the result.}
\end{figure}

The constant $B_\Lambda^\zeta(0)$ is the value at zero of some real function $B_\Lambda^\zeta$ which will be defined later, see \eqref{def_B_Lambda_zeta} and \eqref{def_B_Lambda}. 

Equation \eqref{int_rho} has an important physical interpretation. Consider for instance a nucleus of charge $eZ$ in the vacuum, and assume that $Z$ and its distribution of charge $\nu$ are chosen to ensure that there is no pair creation from the vacuum, $\tr_{P^0_-}(\bar Q)=q_0=0$. A sufficient condition is for instance $\alpha\pi^{1/6}2^{11/6} D(\nu,\nu)^{1/2}<1$, see \cite{HLS2} and Lemma \ref{estim_apriori}. By \eqref{int_rho}, the electrostatic potential which will be observed very far away from the nucleus is $\alpha_{\rm phys}Z/|x|$ where
\begin{equation}
\alpha_{\rm phys}=\frac{\alpha}{1+\alpha B_\Lambda^\zeta(0)}.
\label{charge_renormalization}
\end{equation}
This leads to a new definition of the physical coupling constant called \emph{charge renormalization} (recall that $\alpha=e^2$). The above value of the physical charge \eqref{charge_renormalization} is very well-known in QED, see e.g. \cite[Eq. $(8)$]{Lan84} and \cite[Eq. $(7.18)$]{IZ}. This was already used and interpreted in \cite{HLS2}, in particular in connection with the large cut-off limit $\Lambda\to\ii$, in the case $\gH=\gH_\Lambda$.

The renormalized charge is only observed far away from the nucleus. Close to it, one will observe a different behavior like the oscillations of the polarization of the vacuum $\rho_{\bar Q}$. See \cite{HLSS} for an interpretation in terms of the Uehling potential.

Equation \eqref{int_rho} implies that a minimizer $Q$ in the charge sector $q\neq Z$ is never trace-class, as this would imply $\tr_{P^0_-}Q=\int_{\R^3}\rho_Q$ and contradict \eqref{int_rho}. This shows that the generalization of the reduced BDF energy $\Er^\nu$ to the Banach space $\cQ$ is mandatory, as no minimizer exists in the trace class. The mathematical difficulty that a minimizer is not trace-class is well interpreted physically in terms of charge renormalization.

When $q=Z$, it is in principle possible that a minimizer $Q$ for $E^\nu(q)$ is trace-class. We shall not investigate this question in this article.

\subsection{Ionization: an estimate on $q_m$ and $q_M$}
In the previous section, we have proved the existence of an interval $[q_m,q_M]$ for $q$ in which minimizers always exist. We now want to provide an estimate on $q_m$ and $q_M$. We do that with a specific choice for the cut-off function $\zeta$, namely $\zeta(t)=t$, which obviously satisfies our assumptions \eqref{prop_zeta_1}--\eqref{prop_zeta_3}. We give this result as an illustration: we believe that a same kind of estimate can be derived for other cut-offs. The advantage of this choice is that $D^\zeta=(-i\alp\cdot \nabla+\beta)\left(1-\frac{\Delta}{\Lambda^2}\right)$ is local. Notice in this particular case
$$B_\Lambda^\zeta(0)=\frac{2}{3\pi}\log\Lambda-\frac{5}{9\pi}+\frac{2\log2}{3\pi}+O\left(\frac{\log\Lambda}{\Lambda^2}\right).$$

\begin{thm}[Estimates on $q_m$ and $q_M$ when $Z>0$]\label{Ionization}We assume that $\gH=L^2(\R^3,\C^4)$ and $\zeta(t)=t$.   There exists universal constants $0<\theta_0<1$, $\alpha_0>0$ and $C>0$ such that the following holds. 
For any $0\leq\alpha\leq \alpha_0$, for any radial function $\nu\geq0$ in $L^1(\R^3)\cap \cC$ such that $Z=\int\nu>0$ and $\alpha D(\nu,\nu)\leq \theta_0<1$ and any cut-off $\Lambda\geq4$ such that $\alpha\log\Lambda<1/C$, the following estimate holds true:
\begin{equation}
-C\frac{Z\alpha\log\Lambda +1/\Lambda+\alpha D(\nu,\nu)}{1-C\alpha\log\Lambda}\leq q_m\leq 0=q_0,
\label{estim_q_m}
\end{equation}
\begin{equation}
Z\leq  q_M\leq \frac{2Z+C(Z\alpha\log\Lambda +1/\Lambda+\alpha D(\nu,\nu))}{1-C\alpha\log\Lambda}.
\label{estim_q_M}
\end{equation}
\end{thm}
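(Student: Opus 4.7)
The plan is to treat the four inequalities in \eqref{estim_q_m}--\eqref{estim_q_M} separately. The two ``easy'' bounds $q_m \leq 0$ and $q_M \geq Z$ are immediate from Theorem~\ref{exists}(ii), which already places both $q_0$ and $Z$ in the interval $[q_m, q_M]$. The equality $q_0 = 0$ under the smallness hypothesis $\alpha D(\nu,\nu) \leq \theta_0$ must be extracted separately; it follows by the arguments of \cite[Thm.~3]{HLS2}, which show that when the external field is small enough, the mean-field operator $D_{\bar Q_{\rm vac}}$ has no eigenvalue in the gap $[-1,1]$, whence $\tr_{P^0_-}(\bar Q_{\rm vac}) = 0$.

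The upper bound on $q_M$ is the heart of the theorem, to be obtained by adapting Lieb's ``multiplication by $|x|$'' argument \cite{Lieb}. Fix $q \in [q_m, q_M]$ and a minimizer $Q$ from Theorem~\ref{exists}(iv); it satisfies the self-consistent equation \eqref{scf_equation} with some $\mu \in [-1, 1]$. I would decompose $Q = Q_{\rm el} + Q_{\rm vac}$, where $Q_{\rm el}$ collects the positive-energy spectral projections of $D_Q$ below $\mu$ (with the corresponding piece of $\delta$) and $Q_{\rm vac} := \chi_{(-\infty,0)}(D_Q) - P^0_-$ is the polarized-vacuum part. For each electronic orbital $\phi_i$ with $D_Q \phi_i = \epsilon_i \phi_i$ and $0 < \epsilon_i \leq \mu \leq 1$, I would write
\begin{equation*}
\epsilon_i \int_{\R^3} |x|\, |\phi_i|^2 \;=\; \int_{\R^3} |x|\, \phi_i^{*} D_Q \phi_i,
\end{equation*}
sum over $i$ (with multiplicities prescribed by $\delta$), and analyze each term. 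The Hartree contribution $\alpha \iint |x|\, \rho_{Q_{\rm el}}(x) \rho_{Q}(y)/|x-y|$ is treated by the classical symmetrization $|x|+|y|\geq|x-y|$ yielding a lower bound of order $(\int \rho_{Q_{\rm el}})^2/2$; the nuclear attraction contributes $-Z \int \rho_{Q_{\rm el}}$; and the modified kinetic part $\int |x|\, \phi_i^{*} D^\zeta \phi_i$ is estimated from below by reducing to the large spinor component and exploiting the positivity of $-\Delta/\Lambda^2$ in $D^\zeta = D^0(1 - \Delta/\Lambda^2)$, the latter contributing a controlled error of order $1/\Lambda$. The interaction with the vacuum density $\rho_{Q_{\rm vac}}$ generates an additional term proportional to $\alpha \log\Lambda \int \rho_{Q_{\rm el}}$, which, once moved to the left-hand side, accounts for the denominator $1 - C\alpha\log\Lambda$. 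Finally, converting $\int \rho_{Q_{\rm el}}$ back to $q$ via the charge-renormalization identity \eqref{int_rho} combined with a bound of the form $|\int \rho_{Q_{\rm vac}}| \leq C(\alpha D(\nu,\nu) + 1/\Lambda)$ (following the vacuum estimates of \cite{HLS2,HLS3}) produces \eqref{estim_q_M}. The lower bound \eqref{estim_q_m} on $q_m$ is handled by a symmetric argument applied to the ``hole'' eigenstates of $D_Q$ corresponding to eigenvalues in $(\mu,0)$ when $\mu < 0$.

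The main obstacle is adapting Lieb's positivity of $\int |x|\, |\nabla \phi|^2$ to the Dirac setting: $D^\zeta$ is not sign-definite, so one cannot conclude directly that $\sum_i \int |x|\, \phi_i^{*} D^\zeta \phi_i \geq 0$. The natural workaround is to use the block structure $\phi = (\varphi,\chi)$ and derive from $D_Q \phi = \epsilon \phi$ with $\epsilon \in (0,1)$ a Schrödinger-type equation for the upper component $\varphi$ (schematically $-\Delta \varphi/2 + \text{(potential)}\varphi \simeq (\epsilon - 1)\varphi$), to which Lieb's weighting argument then applies. Tracking the small component $\chi$, the ultraviolet corrections of order $1/\Lambda$ coming from the $-\Delta/\Lambda^2$ factor in $D^\zeta$, and the cross-terms between $\rho_{Q_{\rm el}}$ and $\rho_{Q_{\rm vac}}$ with the correct scalings in $\alpha\log\Lambda$, $1/\Lambda$ and $\alpha D(\nu,\nu)$ is the delicate bookkeeping that yields the precise form of \eqref{estim_q_m}--\eqref{estim_q_M}.
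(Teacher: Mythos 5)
Your overall strategy is on the right track — the easy containments from Theorem~\ref{exists}(ii), the vanishing of $q_0$ under the smallness hypothesis, the decomposition $Q=\gamma+Q_{\rm vac}$ with $\gamma\geq 0$, the symmetrization $|x|+|y|\geq|x-y|$ for the self-interaction, and Newton's theorem for the radial densities. But your handling of the weighted kinetic term has a genuine gap, and it is precisely the step where the paper's argument differs from what you propose.

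You observe, correctly, that one cannot conclude $\sum_i\int|x|\,\phi_i^*D^\zeta\phi_i\geq0$ directly since $D^\zeta$ is not sign-definite, and you propose to pass to the large spinor component $\varphi$ and apply Lieb's argument to a derived Schr\"odinger-type equation. This does not work cleanly: eliminating the small component yields an energy-dependent, nonlocal operator for $\varphi$ which reduces to $-\Delta/2+V$ only in the strict nonrelativistic regime $\epsilon_i\to 1$; for a minimizer filling macroscopically many orbitals with generic $\epsilon_i\in(0,1]$, the weighted positivity of that operator is not obviously controllable, and the small component's contribution to the densities is untracked. The paper sidesteps the spinor reduction entirely. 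It works with the density-matrix trace inequality $0\geq\tr\bigl(|x|(D_Q-1)\gamma\bigr)$, splits $D^\zeta-1 = (|D^\zeta|-1) - 2|D^\zeta|P^0_-$, and uses the key observation that $\chi_{(-\infty,0]}(D_Q)\gamma=0$ forces $P^0_-\gamma=-Q_{\rm vac}\gamma$, so the wrong-sign piece is a $Q_{\rm vac}$-correction controlled by the operator bounds on $|x|Q_{\rm vac}$ and $|D^\zeta|^\tau Q_{\rm vac}$ of Lemma~\ref{lem_estim_commut}. The good-sign piece $\tr\bigl(|x|(|D^\zeta|-1)\gamma\bigr)$ is then bounded below via Lieb's \emph{relativistic} operator inequality $(E(p)-1)|x|+|x|(E(p)-1)\geq 0$ — the key lemma missing from your proposal — applied to the $E(p)$ factor in $|D^\zeta|$, with the extra $1-\Delta/\Lambda^2$ factor contributing only $O(1/\Lambda)$ commutator errors (Lemma~\ref{lem_estim_premier_terme}). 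Finally, your proposed bound $|\int\rho_{Q_{\rm vac}}|\leq C(\alpha D(\nu,\nu)+1/\Lambda)$ misses the dominant linear-response contribution $I_\Lambda(Z+|q|)\sim\alpha\log\Lambda\,(Z+q_M)$ coming from $b_\Lambda^\zeta$ (Lemma~\ref{lem_estim_rho_Qvac}, Proposition~\ref{prop_b_Lambda_zeta_particulier}); this $q_M$-dependent term is exactly what produces the self-consistent inequality $(1-C\alpha\log\Lambda)q_M\leq 2Z(1+C\alpha\log\Lambda)+\cdots$ that gives \eqref{estim_q_M}, so omitting it destroys the structure of the final estimate.
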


In a nonrelativistic limit in which one takes $\alpha\to0$, $\Lambda\to\ii$ such that $\alpha\log\Lambda\to0$ and $\nu$ fixed, one obtains the usual estimate of \cite{Lieb}
$$0=q_m=q_0<Z\leq q_M\leq 2Z.$$
The proof of Theorem \ref{Ionization} is given in Section \ref{sec_proof2}. An estimate more precise than \eqref{estim_q_m} and \eqref{estim_q_M} is contained in our proof but we do not state it here.

\section{Proofs}\label{sec_proof}
\subsection{Proof of Proposition \ref{prop_def_rho}}\label{proof_prop_def_rho}
When $\gH=\gH_\Lambda$ and $\zeta=0$, Proposition \ref{prop_def_rho} is contained in \cite[Lemma 1]{HLS3}. Hence we only treat the case $\gH=L^2(\R^3,\C^4)$.
Consider some $Q\in \cQ$ and $V\in\cC'\cap L^\ii(\R^3)$. We have $(QV)^{++}=Q^{++}VP^0_++Q^{+-}[V,P^0_+]P^0_+$ and $(QV)^{--}=Q^{--}VP^0_-+Q^{-+}[V,P^0_-]P^0_-$. We first give an estimate on the commutator $[V,P^0_-]$.

\begin{lemma}\label{lemma_estim_commut}
We assume that $\gH=\gH_\Lambda$ and $\zeta=0$ or $\gH=L^2(\R^3,\C^4)$ and $\zeta$ satisfies \eqref{prop_zeta_1}--\eqref{prop_zeta_3}. 
We have for all $\tau\geq 1/2$ and all $p\geq 2$
$$\forall V,\qquad\norm{|D^\zeta|^{-\tau}[V,P^0_-]}_{\gS_p(\gH)}\leq C\norm{\nabla V}_{L^p(\R^3)}$$
where the constant $C$ is independent of $\zeta$ (hence of $\Lambda$) if $\tau>1/2$ or $p>2$.
\end{lemma}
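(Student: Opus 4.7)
The plan is to express $[V, P^0_-]$ via a resolvent representation of the negative spectral projector, and then apply the Kato--Seiler--Simon inequality \eqref{KSS} to the resulting integrand. Since $D^\zeta$ and $D^0$ have the same sign (as $1+\zeta > 0$), we have $P^0_- = \frac{1}{2}\bigl(1-\mathrm{sgn}(D^0)\bigr)$, and using the principal-value identity $\mathrm{sgn}(\lambda) = \pi^{-1}\,\mathrm{PV}\!\int(\lambda+i\eta)^{-1}d\eta$ together with the commutation relation $[V,D^0] = i\,\alp\cdot\nabla V$ (as a multiplication operator) and the resolvent identity $[V,(D^0+i\eta)^{-1}] = -(D^0+i\eta)^{-1}[V,D^0](D^0+i\eta)^{-1}$, one obtains formally
$$[V,P^0_-] = \frac{i}{2\pi}\int_{-\infty}^{\infty} (D^0+i\eta)^{-1}(\alp\cdot\nabla V)(D^0+i\eta)^{-1}\,d\eta,$$
an identity which is made rigorous by testing against vectors in $\cD(D^0)$.

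Next, I would multiply by $|D^\zeta|^{-\tau}$, which commutes with $(D^0+i\eta)^{-1}$ since both are Fourier multipliers, bound the rightmost resolvent in operator norm by $(1+\eta^2)^{-1/2}$ (using $|D^0(p)|\geq 1$), and apply \eqref{KSS} to the product $\phi_\eta(-i\nabla)\cdot(\alp\cdot\nabla V)(x)$, where $\phi_\eta(p):=|D^\zeta(p)|^{-\tau}(D^0(p)+i\eta)^{-1}$. Applying KSS componentwise (to handle the matrix-valued symbol, which only costs a constant depending on the $4\times 4$ matrix structure) and using that $\alp$ is bounded yields
$$\|\,|D^\zeta|^{-\tau}[V,P^0_-]\|_{\gS_p(\gH)}\leq C\|\nabla V\|_{L^p}\int_{-\infty}^{\infty}\frac{\|\phi_\eta\|_{L^p}}{\sqrt{1+\eta^2}}\,d\eta.$$
The matrix identity $(D^0(p)+i\eta)^{-1} = (D^0(p)-i\eta)/(E(p)^2+\eta^2)$, with $E(p)=\sqrt{1+|p|^2}$, yields the pointwise bound $\|\phi_\eta(p)\|\leq\bigl(1+\zeta(|p|^2/\Lambda^2)\bigr)^{-\tau}E(p)^{-\tau}(E(p)^2+\eta^2)^{-1/2}$.

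The remaining step, which is the heart of the argument, is the elementary estimate on $I(\tau,p,\zeta):=\int_{-\infty}^{\infty}\|\phi_\eta\|_{L^p}(1+\eta^2)^{-1/2}\,d\eta$. Splitting the $p$-integration at $|p|=|\eta|$ and carrying out routine power counting shows that $\|\phi_\eta\|_{L^p}^p\lesssim|\eta|^{3-p(\tau+1)}$ for large $|\eta|$, whence $I<\infty$ precisely when $p(\tau+1)>3$, i.e.\ when $\tau>1/2$ or $p>2$, with a bound \emph{independent of $\zeta$ and $\Lambda$} (in the sharp cut-off case $\gH=\gH_\Lambda$, restricting to $|p|\leq\Lambda$ can only improve the bound). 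In the borderline case $\tau=\tfrac{1}{2}, p=2$ on $\R^3$, the $p$-integral defining $\|\phi_\eta\|_{L^p}$ has a logarithmic divergence for $\zeta=0$; the polynomial growth assumption \eqref{prop_zeta_2}, $\zeta(x)\gtrsim x^{\varepsilon/2}$ at infinity, is exactly what restores convergence, at the price of a $\zeta$-dependent constant. The main obstacle is this delicate case analysis around the borderline, and in particular verifying that in the non-borderline cases the constants indeed do not depend on the cut-off; this is what dictates the dichotomy in the statement.
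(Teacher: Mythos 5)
Your proposal is correct and follows essentially the same route as the paper's proof: represent $[V,P^0_-]$ via the Cauchy resolvent integral, reduce the commutator to multiplication by $\alp\cdot\nabla V$, bound one resolvent in operator norm, apply the Kato--Seiler--Simon inequality, and then estimate the resulting $\eta$-integral. You actually spell out the final power-counting step (including the dichotomy at $p(\tau+1)=3$ and the role of \eqref{prop_zeta_2} in the borderline case $\tau=1/2,\ p=2$) more explicitly than the paper, which compresses it into ``which allows to conclude.''
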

\begin{proof}[Proof of Lemma \ref{lemma_estim_commut}]
Using Cauchy's formula, we infer
\begin{eqnarray*}
[V,P^0_-] & = & \frac{1}{2\pi}\int_{-\ii}^\ii  \left(\frac{1}{D^0+i\eta}V-V\frac{1}{D^0+i\eta}\right)d\eta\\
 & = & \frac{1}{2\pi}\int_{-\ii}^\ii \frac{1}{D^0+i\eta}[V,D^0+i\eta]\frac{1}{D^0+i\eta}d\eta\\
 & = & -\frac{i}{2\pi}\int_{-\ii}^\ii \frac{1}{D^0+i\eta}\alp\cdot(\nabla V)\frac{1}{D^0+i\eta}d\eta
\end{eqnarray*}
(recall $P^0_-$ does not depend on $\zeta$). Hence, by means of
$$\norm{\frac{1}{D^0+i\eta}}\leq\frac{1}{\sqrt{1+\eta^2}}$$
we obtain
$$\norm{|D^\zeta|^{-\tau}[V,P^0_-]}_{\gS_p(\gH)}\leq\frac{1}{2\pi}\int_{-\ii}^\ii \norm{\frac{1}{|D^\zeta|^{\tau}(D^0+i\eta)}\alp\cdot(\nabla V)}_{\gS_p(\gH)}\frac{d\eta}{\sqrt{1+\eta^2}}.$$
Next we use the Kato-Seiler-Simon inequality \eqref{KSS} and obtain by \eqref{prop_zeta_2}
\begin{multline}
\norm{|D^\zeta|^{-\tau}[V,P^0_-]}_{\gS_p(\gH)}\leq C\norm{\nabla V}_{L^p(\R^3)}\times\\
\times\int_{-\ii}^\ii \norm{\frac{1}{(1+|\cdot|)^{\tau}(1+\epsilon|\cdot|^{\epsilon}/\Lambda^{\epsilon})^\tau\sqrt{1+|\cdot|^2+\eta^2}}}_{L^p(\R^3)}\frac{d\eta}{\sqrt{1+\eta^2}}
\end{multline}
which allows to conclude.
\end{proof}

We consider first for $(QV)^{++}$ and use Lemma \ref{lemma_estim_commut} with $p=2$ and $\tau=1/2$,
\begin{align*}
\norm{Q^{+-}[V,P^0_+]P^0_+}_{\gS_1(\gH)}&\leq C\norm{Q^{+-}|D^\zeta|^{1/2}}_{\gS_2(\gH)}\norm{\nabla V}_{L^2(\R^3)}\\
&\leq C\norm{Q|D^\zeta|^{1/2}}_{\gS_2(\gH)}\norm{V}_{\cC'}. 
\end{align*}
Similarly we have
$$\norm{Q^{++}VP^0_+}_{\gS_1(\gH)}\leq \norm{Q^{++}|D^\zeta|^{1/2}}_{\gS_1(\gH)}\norm{|D^\zeta|^{-1/2}V}_{\gS_\ii(\gH)}.$$
On the other hand, we have by the Kato-Seiler-Simon inequality \eqref{KSS}
$$\norm{|D^\zeta|^{-1/2}V}_{\gS_\ii(\gH)}\leq \norm{|D^\zeta|^{-1/2}V}_{\gS_6(\gH)}\leq C \norm{V}_{L^6}\norm{|D^\zeta(\cdot)|^{-1}}_{L^3(\R^3)}^2\leq C\norm{V}_{L^6}.$$
where we have used  Assumption \eqref{prop_zeta_2} on $\zeta$ and $\norm{|D^\zeta(\cdot)|^{-1}}_{L^3(\R^3)}<\ii$. Hence 
$$\norm{|D^\zeta|^{-1/2}V}_{\gS_\ii(\gH)}\leq C\norm{\nabla V}_{L^2\R^3)}=C\norm{V}_{\cC'}$$
by the critical Sobolev embedding $H^1(\R^3)\hookrightarrow L^6(\R^3)$. As a conclusion, 
$$\left|\tr(QV)^{++}\right|\leq\norm{(QV)^{++}}_{\gS_1(\gH)} \leq C\norm{V}_{\cC'}\norm{Q}_{\cQ}.$$
The proof is the same for $(QV)^{--}$.\qed

\subsection{Proof of Theorem \ref{exists}}\label{proof_thm_exists}
\subsubsection*{\bf Step 1: Existence of a minimizer if some HVZ conditions hold.}
Let us start with the analogue of \cite[Lemma 3]{HLS3}.
\begin{lemma}We assume that $\gH=\gH_\Lambda$ and $\zeta=0$, or that $\gH=L^2(\R^3,\C^4)$ and $\zeta$ satisfies \eqref{prop_zeta_1}--\eqref{prop_zeta_3}.  Let be $\alpha\geq0$, $\Lambda>0$ and $\nu\in L^1(\R^3)\cap \cC$. We have the following estimate
\begin{equation}
|q|-\frac\alpha2 D(\nu,\nu) \leq E_{\rm r}^\nu(q) \leq |q|.
\label{estim_q}
\end{equation}
In particular we get for $\nu=0$ and for any $q\in\R$,
$$ E_{\rm r}^0(q) = |q|.$$
\end{lemma}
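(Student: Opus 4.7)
From the definition \eqref{def_energy}, completing the square in the Coulomb terms gives
$$-\alpha D(\nu,\rho_Q)+\frac{\alpha}{2}D(\rho_Q,\rho_Q)=\frac{\alpha}{2}D(\rho_Q-\nu,\rho_Q-\nu)-\frac{\alpha}{2}D(\nu,\nu)\geq -\frac{\alpha}{2}D(\nu,\nu),$$
so it suffices to prove $\tr_{P^0_-}(D^\zeta Q)\geq|q|$ on $\cK$. Since $|D^\zeta|\geq 1$ on $\gH$, and since $Q\in\cK$ forces $Q^{++}\geq 0$ and $Q^{--}\leq 0$ by \eqref{equivalent_condition}, one has
$$\tr_{P^0_-}(D^\zeta Q)=\tr\!\left(|D^\zeta|^{1/2}Q^{++}|D^\zeta|^{1/2}\right)-\tr\!\left(|D^\zeta|^{1/2}Q^{--}|D^\zeta|^{1/2}\right)\geq\tr(Q^{++})-\tr(Q^{--}),$$
(the operator inequality $|D^\zeta|^{1/2}A|D^\zeta|^{1/2}\geq A$ holds for any $A\geq 0$ by cyclicity of the trace). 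Both $\tr(Q^{++})\geq 0$ and $-\tr(Q^{--})\geq 0$, hence this is bounded below by $|\tr(Q^{++})+\tr(Q^{--})|=|\tr_{P^0_-}(Q)|=|q|$, proving the lower bound.

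\textbf{Upper bound.} The plan is to build trial operators carrying charge $q$ out of particles (or holes) \emph{at rest} in momentum and \emph{far apart} in position. Fix a unit vector $e_+\in\C^4$ with $\beta e_+=e_+$ and a scalar Schwartz function $\chi$ with $\|\chi\|_{L^2}=1$ (and $\widehat{\chi}$ compactly supported, to accommodate the sharp cut-off case). For $\sigma>0$, set $f_\sigma(x):=\sigma^{3/2}\chi(\sigma x)\,e_+$; its Fourier transform concentrates at $p=0$ at scale $\sigma$. Because $D^\zeta(0)=\beta$ and $\beta e_+=e_+$, we have $\|P^0_+f_\sigma\|\to 1$, and the normalized spinor $\phi_\sigma:=P^0_+f_\sigma/\|P^0_+f_\sigma\|\in P^0_+\gH$ satisfies $\langle\phi_\sigma,D^\zeta\phi_\sigma\rangle\to 1$ as $\sigma\to 0$. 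Its density $|\phi_\sigma|^2$ has total integral $1$ and is essentially supported on a ball of radius $1/\sigma$, so by a direct scaling $D(|\phi_\sigma|^2,|\phi_\sigma|^2)=O(\sigma)$ and $(|\phi_\sigma|^2\ast|\cdot|^{-1})(y)\leq C\sigma$ for every $y$, which forces $D(\nu,|\phi_\sigma|^2)\to 0$ by dominated convergence. For $q\in[0,1]$ we simply take $Q_\sigma:=q|\phi_\sigma\rangle\langle\phi_\sigma|\in\cK$: then $\tr_{P^0_-}(Q_\sigma)=q$ and $\Er^\nu(Q_\sigma)\to q=|q|$. For $q>1$, write $q=n+t$ with $n\in\N$, $t\in[0,1)$, pick translates $\phi_\sigma^{(j)}(\cdot):=\phi_\sigma(\cdot-a_j(\sigma))$ with centers satisfying $\sigma|a_j(\sigma)-a_k(\sigma)|\to\infty$ (so that they become asymptotically $L^2$-orthogonal, Gram-Schmidt orthonormalization introducing only an $o(1)$ error), and set
$$Q_\sigma:=\sum_{j=1}^{n}|\phi_\sigma^{(j)}\rangle\langle\phi_\sigma^{(j)}|+t\,|\phi_\sigma^{(n+1)}\rangle\langle\phi_\sigma^{(n+1)}|\in\cK.$$
Each diagonal kinetic energy still tends to $1$, and all Coulomb cross- and self-terms vanish by the same scaling and separation, so $\Er^\nu(Q_\sigma)\to q$. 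For $q<0$ one performs the mirror construction with a unit $e_-$ satisfying $\beta e_-=-e_-$, with $P^0_-$ in place of $P^0_+$, and with $Q_\sigma:=-\sum_j p_j|\phi_\sigma^{(j)}\rangle\langle\phi_\sigma^{(j)}|\in\cK$ where $\sum p_j=|q|$: then $Q_\sigma^{--}=Q_\sigma$, $\tr_{P^0_-}(Q_\sigma)=-|q|=q$, and $\tr_{P^0_-}(D^\zeta Q_\sigma)=\sum_j p_j\langle\phi_\sigma^{(j)},|D^\zeta|\phi_\sigma^{(j)}\rangle\to|q|$.

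\textbf{Conclusion and main obstacle.} The two bounds together yield \eqref{estim_q}; when $\nu\equiv 0$ they coincide at $|q|$, giving $E_{\rm r}^0(q)=|q|$. The only delicate point is the upper-bound construction when $|q|>1$: one must produce many vectors in $P^0_\pm\gH$ that are \emph{simultaneously} sharply localized in Fourier near $p=0$ (to drive each kinetic energy to $1$) and widely dispersed in position (to kill both the external-field cross term and the self Coulomb term). This tension is resolved by the two-scale choice of centers $a_j(\sigma)$ with $\sigma|a_j-a_k|\to\infty$ together with $\sigma\to 0$, after which all convergences reduce to elementary scaling and dominated convergence arguments.
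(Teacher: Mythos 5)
Your proof is correct and follows essentially the same route as the paper, which simply defers to \cite[Lemma 3]{HLS3}: a kinetic coercivity bound for the lower estimate and a diffuse Weyl-type trial-state construction for the upper one. One slip to fix in the lower bound: the parenthetical operator inequality $|D^\zeta|^{1/2}A|D^\zeta|^{1/2}\geq A$ for $A\geq 0$ is false (with $B=\mathrm{diag}(1,2)$ and $A$ the rank-one projector onto $(1,1)$ in $\C^2$, the difference $BAB-A$ has a negative eigenvalue), and it is not what cyclicity gives; what is true, and all you actually use, is the \emph{trace} inequality
\[
\tr\bigl(|D^\zeta|^{1/2}A|D^\zeta|^{1/2}\bigr)=\tr\bigl(A^{1/2}\,|D^\zeta|\,A^{1/2}\bigr)\geq\tr(A)\quad\text{for }A\geq 0,
\]
which follows from cyclicity together with $|D^\zeta|\geq 1$, and which suffices for the chain $\tr_{P^0_-}(D^\zeta Q)\geq\tr(Q^{++})-\tr(Q^{--})\geq|q|$. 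The upper bound construction (spinors in $P^0_\pm\gH$ concentrated near $p=0$, translated with $\sigma|a_j-a_k|\to\infty$, Gram--Schmidt correction $o(1)$, and the $O(\sigma)$ scaling of the Coulomb terms) is sound; it is worth noting explicitly that each trial $Q_\sigma$ lies in $\cK$, since the minimization in \eqref{def_min} is implicitly over $\cQ(q)\cap\cK$.
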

\begin{proof}
It suffices to follow the proof of \cite[Lemma 3]{HLS3}. 
\end{proof}

Next we state a result analogous to \cite[Theorem 3]{HLS3}.
\begin{thm}[A dissociation criterion]\label{HVZ}We assume that $\gH=\gH_\Lambda$ and $\zeta=0$, or that $\gH=L^2(\R^3,\C^4)$ and $\zeta$ satisfies \eqref{prop_zeta_1}--\eqref{prop_zeta_3}. Let be $\alpha\geq0$, $\Lambda>0$ and $\nu\in L^1(\R^3)\cap \cC$.
The following two conditions are equivalent

\medskip

\noindent$\rm (H_1)$\quad  $E_{\rm r}^\nu(q) < E^\nu_{\rm r}(q')+|q-q'|$ for any $q'\neq q$;

\medskip

\noindent$\rm (H_2)$\quad each minimizing sequence $(Q_n)_{n\geq1}$ for $E_{\rm r}^\nu(q)$ is precompact in $\cQ$ and converges, up to a subsequence, to a minimizer $Q$ of $E^\nu_{\rm r}(q)$. 

\medskip

When it exists, such a minimizer $Q$ satisfies the self-consistent equation
\begin{equation}
\left\{\begin{array}{l}\displaystyle
Q+P^0_-  =  \chi_{(-\ii,\mu)}\left(D_Q\right)+\delta, \smallskip\\
D_Q  =  D^\zeta+\alpha(\rho_Q-\nu)\ast|\cdot|^{-1},
\end{array}\right.
\label{scf_equation2}
\end{equation}
where $\mu\in[-1,1]$ is a Lagrange multiplier associated with the charge constraint and interpreted as a chemical potential, and $\delta$ is a self-adjoint operator satisfying $0\leq\delta\leq1$ and ${\rm Ran}(\delta)\subseteq \ker(D_Q-\mu)$. The operator $\delta$ is finite rank if $\mu\in(-1,1)$ and trace-class if $\mu\in\{-1,1\}$.
\end{thm}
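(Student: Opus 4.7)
The plan is to run a concentration-compactness argument in the spirit of \cite[Theorem 3]{HLS3}, adapting it to cover both the sharp cut-off ($\gH=\gH_\Lambda$, $\zeta=0$) and the smooth cut-off ($\gH=L^2$, $\zeta$ satisfying \eqref{prop_zeta_1}--\eqref{prop_zeta_3}) cases in a unified way. The implication $(H_1)\Rightarrow(H_2)$ rests on an HVZ-type inequality quantifying the cost of losing charge at infinity, the converse $(H_2)\Rightarrow(H_1)$ is by explicit construction of a non-precompact minimizing sequence when strict subadditivity fails, and the self-consistent equation \eqref{scf_equation2} is a convex-analytic Euler-Lagrange computation on the set $\cK$ once a minimizer is in hand.

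\textbf{The direction $(H_1)\Rightarrow(H_2)$.} Let $(Q_n)\subset\cQ(q)$ be a minimizing sequence. Starting from $\Er^\nu(Q_n)\geq \tr_{P^0_-}(D^\zeta Q_n)+\tfrac{\alpha}{4}D(\rho_{Q_n},\rho_{Q_n})-\alpha D(\nu,\nu)$ (Cauchy-Schwarz on the cross term combined with \eqref{kinetic_energy_coercive}), the sequences $\|Q_n\|_\cQ$ and $D(\rho_{Q_n},\rho_{Q_n})$ are bounded; up to extraction, $Q_n\wto Q$ in $\cQ$ for the weak topology described after \eqref{norm_Q}, with $Q\in\cK$ since $\cK$ is weakly closed, and $\rho_{Q_n}\wto\rho_Q$ in $\cC$ by Proposition \ref{prop_def_rho}. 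Set $q':=\tr_{P^0_-}(Q)$. Each summand in \eqref{kinetic_energy_coercive} is the $\gS_1$-norm of a nonnegative operator, hence is weakly lower semi-continuous, and $D(\rho_Q-\nu,\rho_Q-\nu)\leq \liminf D(\rho_{Q_n}-\nu,\rho_{Q_n}-\nu)$. The decisive step is the HVZ-type inequality
$$E_{\rm r}^\nu(q)=\lim_{n\to\ii}\Er^\nu(Q_n)\;\geq\;\Er^\nu(Q)+E_{\rm r}^0(q-q')\;=\;\Er^\nu(Q)+|q-q'|.$$
One establishes it via a spatial partition $\chi_R^2+\eta_R^2=1$ with $\chi_R$ supported in $\{|x|\leq R\}$, writing $Q_n=\chi_R Q_n\chi_R+\eta_R Q_n\eta_R+\mathrm{(commutators)}$ and controlling $[\chi_R,D^\zeta]$ and $[\chi_R,P^0_\pm]$ uniformly in $R$ by the Kato-Seiler-Simon-type bounds underlying Lemma \ref{lemma_estim_commut} (crucially using \eqref{prop_zeta_3} in the smooth case). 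The $\nu$-interaction with the escaping piece vanishes as $R\to\ii$ because $\nu\in L^1\cap\cC$, and the escaping piece belongs asymptotically to a charge sector close to $q-q'$, carrying at least $E_{\rm r}^0(q-q')=|q-q'|$ of energy. Combined with $\Er^\nu(Q)\geq E_{\rm r}^\nu(q')$ this gives $E_{\rm r}^\nu(q)\geq E_{\rm r}^\nu(q')+|q-q'|$, which forces $q'=q$ by $(H_1)$; then $Q\in\cQ(q)$ is a minimizer, and the matching of each weakly lower semi-continuous summand in the limit upgrades $Q_n\wto Q$ to strong convergence in $\cQ$.

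\textbf{The direction $(H_2)\Rightarrow(H_1)$ and the self-consistent equation.} The upper bound $E_{\rm r}^\nu(q)\leq E_{\rm r}^\nu(q')+|q-q'|$ always holds by gluing to a (near-)minimizer for $q'$ a state of charge $q-q'$ supported near a point $a\in\R^3$ and letting $|a|\to\ii$; the interaction with $\nu$ vanishes in the limit and the added piece contributes $|q-q'|$ to the energy. If $(H_1)$ fails, both inequalities are equalities, and the same construction produces a minimizing sequence for $E_{\rm r}^\nu(q)$ whose weak limit lies in $\cQ(q')$ with $q'\neq q$, contradicting $(H_2)$. For the Euler-Lagrange equation, the convexity of $\Er^\nu$ on $\cK$ together with the affine constraint $\tr_{P^0_-}(Q)=q$ produces a Lagrange multiplier $\mu\in\R$ such that $\tr_{P^0_-}((D_Q-\mu)(Q'-Q))\geq 0$ for every $Q'\in\cK$; testing against admissible perturbations along the spectral decomposition of $D_Q$ forces $Q+P^0_-=\chi_{(-\ii,\mu)}(D_Q)+\delta$ with $0\leq\delta\leq 1$ and ${\rm Ran}(\delta)\subseteq\ker(D_Q-\mu)$. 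The confinement $\mu\in[-1,1]$ follows from $\sigma_{\rm ess}(D_Q)=(-\ii,-1]\cup[1,\ii)$ (Lemma \ref{lem_weyl}): if $|\mu|>1$ one could strictly decrease the energy by moving a small spectral slice of $D_Q$ across $\mu$ inside the essential spectrum. Finally $\ker(D_Q-\mu)$ is finite-dimensional for $\mu\in(-1,1)$ (discrete spectrum), giving finite-rank $\delta$, whereas for $\mu\in\{\pm 1\}$ the requirement $Q\in\cQ$ imposes $\delta\in\gS_1(\gH)$.

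\textbf{Main obstacle.} The key technical difficulty is the HVZ inequality, because $\tr_{P^0_-}$ is not weakly continuous on $\cQ$ and the loss of charge at infinity has to be tracked by a localization compatible simultaneously with the spectral projectors $P^0_\pm$ and with the non-local operator $D^\zeta$. In the sharp-cutoff regime of \cite{HLS3} this is handled by IMS partitions inside $\gH_\Lambda$, but under \eqref{prop_zeta_1}--\eqref{prop_zeta_3} one must control $[\chi_R,\zeta(|p|^2/\Lambda^2)]$ and $[\chi_R,P^0_\pm]$ uniformly in $R$ using the derivative bound \eqref{prop_zeta_3}; this is precisely where the smoothness hypotheses on $\zeta$ play their role. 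The careful split between the $\gS_1$-trace pieces ($Q^{++},Q^{--}$) and the $\gS_2$-Hilbert-Schmidt piece ($Q^{+-},Q^{-+}$) of $\|\cdot\|_\cQ$, and the treatment of the thresholds $\mu=\pm 1$ where discrete spectrum can accumulate, form the technical heart of the proof.
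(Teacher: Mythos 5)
Your proposal is correct and follows the same concentration-compactness route the paper itself merely delegates to \cite[Theorem 3]{HLS3} (sharp cut-off) and to \cite{CDL} (smooth cut-off): coercivity and weak lower semi-continuity of the convex pieces of $\Er^\nu$, an HVZ inequality via IMS localization with commutator control (using \eqref{prop_zeta_1}--\eqref{prop_zeta_3} in the smooth case), strict subadditivity from $(\rm H_1)$ forcing no charge to escape, and a convex Euler--Lagrange computation for the self-consistent equation. The only place where your sketch is genuinely thinner than what the paper records is the trace-class property of $\delta$ at $\mu\in\{\pm1\}$: asserting that $Q\in\cQ$ \emph{imposes} $\delta\in\gS_1(\gH)$ skips the actual reasoning, which is that $Q_{\rm vac}:=\chi_{(-\ii,0)}(D_Q)-P^0_-$ belongs to $\gS_1^{P^0_-}(\gH)$ by \cite[Lemma 2]{HLS1}, so that $Q-Q_{\rm vac}$ also lies in $\gS_1^{P^0_-}(\gH)$ and, being nonnegative, is genuinely trace-class; this gives $0\leq\delta\leq Q-Q_{\rm vac}\in\gS_1(\gH)$. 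Note that ${\rm Ran}(\delta)\subseteq\ker(D_Q-\mu)$ alone provides no finiteness when $\mu$ sits at the threshold of the essential spectrum, so this step should not be glossed over.
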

\begin{remark}\rm
Like in \cite[Prop. 8]{HLS3}, it can be proved that 
\begin{equation}
\forall q,q'\in\R,\qquad E_{\rm r}^\nu(q)\leq E_{\rm r}^\nu(q')+|q-q'|.
\label{large_HVZ}
\end{equation}
In particular this implies that $q\mapsto E^\nu_{\rm r}(q)$ is Lipschitz. 
\end{remark}
\begin{proof}
The proof of Theorem \ref{HVZ} is an adaptation of previous works and it will not be detailed here. In the case of the sharp cut-off $\gH=\gH_\Lambda$ and $\zeta=0$, this is contained in the proof of \cite[Theorem 3]{HLS3}. In the smooth cut-off case $\gH=L^2(\R^3,\C^4)$ with $\zeta\neq0$, it suffices to follow the proof given in the crystal case in \cite{CDL}. Notice many commutator estimates proved in \cite{CDL} (like \cite[Lemma 11]{CDL}) are derived using the regularity of $\zeta$ and the fact that its derivatives grow at most algebraically as expressed by our assumptions \eqref{prop_zeta_1}--\eqref{prop_zeta_3}.

The proof that a minimizer $Q$ satisfies Equation \eqref{scf_equation2} is the same as in \cite[Theorem 3]{HLS2} and \cite[Proposition 2]{HLS3}. Finally, $\delta$ is finite-rank if $\mu<1$ because the essential spectrum of $D_Q$ is the same as that of $D^0$ by Lemma \ref{lem_weyl}. If $\mu=1$, let us recall \cite{HLS1,HLS2} that $Q_{\rm vac}:=\chi_{(-\ii,0)}(D_Q)-P^0_-\in\gS_2(\gH)$ (see Lemma \ref{prop_Q_k}). By \cite[Lemma 2]{HLS1}, we have $Q_{\rm vac}\in\gS_1^{P^0_-}(\gH)$. Hence we deduce $Q-Q_{\rm vac}\in\gS_1^{P^0_-}(\gH)$ which tells us that $Q-Q_{\rm vac}$ and $\delta$ are trace-class because they are nonnegative.
\end{proof}

\begin{prop}\label{prop_unique_density_SU2}
Minimizers of $E^\nu_{\rm r}(q)$ are not necessarily unique, but the density $\rho_Q$ is itself uniquely defined. 
If $\nu$ is radially symmetric, then so does $\rho_Q$.
\end{prop}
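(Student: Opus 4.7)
The plan is to exploit the strict convexity of the Coulomb direct term $\rho\mapsto D(\rho,\rho)$ on $\cC$ and the fact that the remaining pieces of $\Er^\nu$ depend linearly on $Q$. Since $Q\mapsto Q^{++}-Q^{--}$ is linear and the trace is linear, the kinetic term $\tr_{P^0_-}(D^\zeta Q)$ is linear in $Q$; by Proposition \ref{prop_def_rho} the map $Q\mapsto\rho_Q$ is linear, so the external term $-\alpha D(\nu,\rho_Q)$ is also linear. Only the last term $\frac{\alpha}{2}D(\rho_Q,\rho_Q)$ contributes curvature, and it is strictly convex along any direction in which the density changes because $D(\cdot,\cdot)$ is an inner product on $\cC$.

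Suppose $Q_1$ and $Q_2$ are two minimizers of $E_{\rm r}^\nu(q)$. The midpoint $Q_{1/2}:=\tfrac12(Q_1+Q_2)$ lies in $\cQ(q)$ by linearity of $\tr_{P^0_-}$, and is admissible for the Pauli constraint $-P^0_-\leq Q\leq P^0_+$ by convexity, and one has $\rho_{Q_{1/2}}=\tfrac12(\rho_{Q_1}+\rho_{Q_2})$. If $\rho_{Q_1}\neq\rho_{Q_2}$ the parallelogram identity for the inner product $D(\cdot,\cdot)$ gives
\[
D(\rho_{Q_{1/2}},\rho_{Q_{1/2}})<\tfrac12 D(\rho_{Q_1},\rho_{Q_1})+\tfrac12 D(\rho_{Q_2},\rho_{Q_2}),
\]
hence $\Er^\nu(Q_{1/2})<\tfrac12\Er^\nu(Q_1)+\tfrac12\Er^\nu(Q_2)=E_{\rm r}^\nu(q)$, a contradiction. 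Therefore $\rho_{Q_1}=\rho_{Q_2}$, which is the uniqueness of the density of a minimizer.

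For radial symmetry I would exploit the $SO(3)$-invariance of the problem. For $R\in SO(3)$, let $S_R$ denote the associated unitary in the Dirac spinor representation and $U_R\psi(x):=S_R\psi(R^{-1}x)$ the corresponding unitary on $\gH$. Because $\zeta$ depends only on $|p|^2$ and $-i\alp\cdot\nabla+\beta$ is spin-covariant, $U_R$ commutes with $D^\zeta$, hence with $|D^\zeta|$ and with $P^0_\pm$. It follows that $Q\mapsto Q_R:=U_R Q U_R^*$ preserves $\cQ$, $\cK$, the charge $\tr_{P^0_-}(Q)$, and the kinetic energy $\tr_{P^0_-}(D^\zeta Q)$. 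The density transforms by $\rho_{Q_R}(x)=\rho_Q(R^{-1}x)$, and since $\nu$ is radial one has $D(\nu,\rho_{Q_R})=D(\nu,\rho_Q)$ as well as $D(\rho_{Q_R},\rho_{Q_R})=D(\rho_Q,\rho_Q)$. Therefore $\Er^\nu(Q_R)=\Er^\nu(Q)$ and $Q_R\in\cQ(q)$, so $Q_R$ is again a minimizer. Uniqueness of the density then forces $\rho_Q=\rho_{Q_R}=\rho_Q(R^{-1}\cdot)$ for every $R\in SO(3)$, which is the radiality of $\rho_Q$.

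The only real subtlety is the spin covariance at the level of the functional class $\cQ$: one must verify that $U_R$ leaves invariant the three summands defining the norm \eqref{norm_Q}. This, however, is immediate once one observes that each factor in that definition involves only $|D^\zeta|^{1/2}$ and the projectors $P^0_\pm$, all of which commute with $U_R$.
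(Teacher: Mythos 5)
Your argument is essentially the paper's argument: both proofs first isolate the strictly convex Coulomb direct term $D(\rho,\rho)$ as the only source of curvature in $\Er^\nu$ to conclude that the minimizing density is unique, then act by the rotation group on the Dirac spinors to transport one minimizer to another and invoke that uniqueness to deduce radiality of $\rho_Q$. The only (cosmetic) divergence is that the paper works with $SU_2$ and the onto morphism $SU_2\to SO(3)$, whereas you speak of an ``associated unitary $S_R$ in the Dirac spinor representation'' for $R\in SO(3)$; strictly speaking that lift is only defined up to a sign, but the sign ambiguity drops out of the conjugation $Q\mapsto U_R Q U_R^*$, so the argument goes through exactly as in the paper.
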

\begin{proof}
Note $Q\in\cQ\to\Er^\nu(Q)$ is convex but not strictly convex. The term $f\to D(f,f)$ is strictly convex but the map $Q\to \rho_Q$ is not one-to-one. This, however, implies that the density $\rho_Q$ of a minimizer is uniquely determined, meaning that if $Q_1$ and $Q_2$ are two minimizers of $E^\nu_{\rm r}(q)$, then necessarily $\rho_{Q_1}=\rho_{Q_2}$. 

Next we recall that any unitary matrix $U\in SU_2$ can be written $U=e^{-i\theta n\cdot\sigma}$ where $\theta\in[0,2\pi)$ and $n$ is a unit vector in $\R^3$. There is an onto morphism which to any such $U$ associates the rotation $R_{\theta,n}$ in $\R^3$ of angle $\theta$ around the axis $n$. The group $SU_2$ acts on $4$-spinors in $L^2(\R^3,\C^4)$ as follows:
$$(U\cdot \psi)(x):=\left(\begin{matrix}U & 0\\ 0 & U\end{matrix}\right)\psi(R_{\theta,n}^{-1}x).$$
It is well-known \cite{Thaller} that the Dirac operator $D^0$ is invariant under this action. As $D^\zeta$ is equal to $D^0$ multiplied by a radial function in the Fourier domain, $D^\zeta$ is also invariant. When $\nu$ is a radial function, we hence have
$\Er^\nu(Q)=\Er^\nu(UQU^{-1})$ and $\tr_{P^0_-}(Q)=\tr_{P^0_-}(UQU^{-1})$ for any $Q\in\cK$ and any $U\in SU_2$. This means that if $Q$ is a minimizer for $E_{\rm r}^\nu(q)$, then $UQU^{-1}$ is also a minimizer. As $\rho_{UQU^{-1}}(x)=\rho_Q(R_{\theta,n}^{-1}x)$, we deduce by uniqueness that $\rho_Q$ is a radial function.
\end{proof}

\subsubsection*{\bf Step 2: The density of a solution is in $L^1$.}
We prove the important
\begin{thm}[The density of a solution is in $L^1$]\label{thm_L1}
We assume that $\gH=\gH_\Lambda$ and $\zeta=0$, or that $\gH=L^2(\R^3,\C^4)$ and $\zeta$ satisfies \eqref{prop_zeta_1}--\eqref{prop_zeta_3}. Let be $\alpha\geq0$, $\Lambda>0$, $\nu\in L^1(\R^3)\cap \cC$ and denote $Z=\int_{\R^3}\nu\in\R$. If $Q\in\cQ(q)$ satisfies the self-consistent equation \eqref{scf_equation}, then $\rho_Q\in L^1(\R^3)$ and 
\begin{equation}
\int_{\R^3}\rho_Q -Z=\frac{q-Z}{1+\alpha B^\zeta_\Lambda(0)}.
\label{int_rho2}
\end{equation}
\end{thm}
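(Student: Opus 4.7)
The strategy is to expand $\chi_{(-\infty,\mu)}(D_Q)-P^0_-$ in a Born series in the total Coulomb potential $V:=\alpha(\rho_Q-\nu)\ast|\cdot|^{-1}$ and isolate the first-order term, which is the only one for which the integrated density $\int\rho$ differs from the $P^0_-$-trace. Using the self-consistent equation \eqref{scf_equation} and Theorem \ref{HVZ}, I would write $Q=\pi+\delta$ with $\pi:=\chi_{(-\infty,\mu)}(D_Q)-P^0_-$ and $\delta$ trace-class. A Cauchy integral representation of $\pi$ along the shifted imaginary axis, combined with the resolvent identity
\begin{equation*}
\frac{1}{D_Q+i\eta}-\frac{1}{D^\zeta+i\eta}=-\frac{1}{D^\zeta+i\eta}V\frac{1}{D_Q+i\eta},
\end{equation*}
then yields an expansion $\pi=\sum_{k\geq 1}Q_k$ with
\begin{equation*}
Q_k=\frac{(-1)^k}{2\pi}\int_{-\infty}^{\infty}\frac{1}{D^\zeta+i\eta}\left(V\frac{1}{D^\zeta+i\eta}\right)^{k}d\eta,
\end{equation*}
up to a trace-class contribution coming from eigenvalues of $D_Q$ inside the contour, which can be absorbed into $\delta$. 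Convergence in the Schatten norms that define $\cQ$ follows from the Kato-Seiler-Simon inequality \eqref{KSS} together with the bound $\norm{V}_{\cC'}\leq C\norm{\rho_Q-\nu}_{\cC}$.

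The key computation is that $Q_1\in\cQ$ has an $L^1(\R^3)$ density whose Fourier transform satisfies
\begin{equation*}
\widehat{\rho_{Q_1}}(k)=-\alpha B_\Lambda^\zeta(|k|)\,\widehat{(\rho_Q-\nu)}(k),
\end{equation*}
where $B_\Lambda^\zeta$ is the vacuum-polarization function whose value at zero is the one given in the statement. This is an explicit Fourier-space calculation: one writes $Q_1$ as a convolution operator with matrix-valued symbol built from the resolvents of $D^\zeta$, takes the $\C^4$-trace of the symbol, integrates $\eta$ by residues, and a rotation together with a suitable change of variable reduces the remaining radial integral to the formula defining $B_\Lambda^\zeta(0)$ (the cut-off enters through the factor $1+\zeta(|p|^2/\Lambda^2)$, respectively through restriction to $B(0,\Lambda)$ in the sharp-cutoff case). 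Evaluating at $k=0$ gives
\begin{equation*}
\int_{\R^3}\rho_{Q_1}=-\alpha B_\Lambda^\zeta(0)\left(\int_{\R^3}\rho_Q-Z\right),
\end{equation*}
while a parity argument in $\eta$ yields $\tr_{P^0_-}(Q_1)=0$; thus $Q_1$ contributes to $\int\rho$ but not to $q$.

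For $k\geq 2$, iterating \eqref{KSS} using the extra momentum decay provided by the resolvents of $D^\zeta$ shows that $Q_k\in\gS_1(\gH)$, so that $\tr_{P^0_-}(Q_k)=\tr(Q_k)=\int\rho_{Q_k}$ and $\sum_{k\geq 2}\int|\rho_{Q_k}|<\infty$. Combined with $\rho_{Q_1}\in L^1$ (from its explicit Fourier representation) and $\rho_\delta\in L^1$ (since $\delta$ is trace-class), this proves $\rho_Q\in L^1(\R^3)$. Subtracting the two identities
\begin{equation*}
q=\tr_{P^0_-}(Q)=\tr(\delta)+\sum_{k\geq 2}\int_{\R^3}\rho_{Q_k},\qquad \int_{\R^3}\rho_Q=\tr(\delta)+\int_{\R^3}\rho_{Q_1}+\sum_{k\geq 2}\int_{\R^3}\rho_{Q_k}
\end{equation*}
(using $\int\rho_\delta=\tr(\delta)$), one obtains $\int\rho_Q-q=\int\rho_{Q_1}=-\alpha B_\Lambda^\zeta(0)(\int\rho_Q-Z)$, which rearranges to \eqref{int_rho2}. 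The main obstacle will be making the Born series and the trace-class properties rigorous in the full generality of the theorem, especially at the threshold values $\mu=\pm 1$, and adapting the standard QED Fourier computation of the vacuum polarization to the modified Dirac operator $D^\zeta$, where the factor $1+\zeta(|p|^2/\Lambda^2)$ complicates the radial integral and the residue calculation.
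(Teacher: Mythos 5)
Your blueprint correctly identifies the mechanism -- isolate the linear-response term $Q_1$ as the only contribution shifting $\int\rho_Q$ away from $\tr_{P^0_-}(Q)$, with the vacuum-polarization function $B^\zeta_\Lambda$ appearing as its Fourier multiplier -- but the route you propose has three genuine gaps that the paper's proof is specifically constructed to avoid.

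First, the infinite Born series $\pi=\sum_{k\geq1}Q_k$ does not converge for general $\alpha\geq0$; nothing in the hypotheses makes $V(D^\zeta+i\eta)^{-1}$ a contraction. The paper therefore stops the expansion at finite order, writing $Q_{\rm vac}=\sum_{k=1}^3\alpha^k Q_k+\alpha^4 Q'_4$ with a full-resolvent remainder $Q'_4$ (later refined to $Q_4+Q_5+Q'_6$) that is estimated directly rather than expanded further.

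Second, the claim that ``iterating \eqref{KSS} shows $Q_k\in\gS_1$ for $k\geq2$'' is false. A straight KSS count, each $\phi'_Q\frac{1}{D^\zeta+i\eta}\in\gS_6$, gives only $Q_k\in\gS_{6/k}$, so the first order at which you naively reach trace class is $k=6$. In the paper, $Q_2$ is handled by Furry's theorem ($\rho_{Q_2}=0$, without $Q_2\in\gS_1$), and $Q_3$ requires both a ``polynomial argument'' (Lemma~\ref{polynomial_argument}) for its diagonal blocks and a bootstrap -- first establishing $\phi'_Q\in L^4$ from the self-consistent equation, then feeding it back to show $|D^\zeta|^\tau Q_3^{+-}|D^\zeta|^\tau\in\gS_1$. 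Skipping this destroys the claim $\sum_{k\geq2}\int|\rho_{Q_k}|<\infty$.

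Third, there is a circularity you do not resolve: $\rho_{Q_1}$ is built out of $\rho_Q-\nu$, which is not yet known to be in $L^1$, so one cannot simply read $\rho_{Q_1}\in L^1$ off its Fourier representation and then add. The paper instead isolates the linear term on the left-hand side to obtain a fixed-point equation
\begin{equation*}
\rho_{Q_{\rm vac}}=b_\Lambda^\zeta\ast(\nu-\rho_\gamma-\rho_1)+\rho_1+\rho_2-b_\Lambda^\zeta\ast\rho_2,
\qquad\widehat{b_\Lambda^\zeta}=(2\pi)^{3/2}\frac{\alpha B^\zeta_\Lambda}{1+\alpha B^\zeta_\Lambda},
\end{equation*}
and the crucial (and nontrivial) ingredient is the appendix result that $b^\zeta_\Lambda\in L^1(\R^3)$, so convolution with it preserves $L^1$. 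Your outline has no counterpart to this step. The identity $\int\rho_Q-Z=(q-Z)/(1+\alpha B^\zeta_\Lambda(0))$ then comes out of $\widehat{\rho_{Q_1}}(0)=-B^\zeta_\Lambda(0)(\widehat{\rho_Q}(0)-\widehat\nu(0))$ together with the observation that $\int\rho_{K^{+-}}=\tr(P^0_+KP^0_-)=0$ for every trace-class $K$, exactly as you anticipated, but only after the $L^1$ integrability and the trace-class status of $\gamma,Q_3,Q_5,Q_6$ have been secured.
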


\begin{proof}
We shall do more than proving that $\rho_Q\in L^1(\R^3)$. Namely, we shall provide a precise estimate on $\norm{\rho_{Q_{\rm vac}}}_{L^1(\R^3)}$ needed for the proof of Theorem \ref{Ionization}.

Let $Q\in\cQ(q)$ satisfying the self-consistent equation $Q=\chi_{(-\ii,\mu)}(D_Q)-P^0_-+\delta$ where $\delta$ is a trace-class self-adjoint operator with ${\rm Ran}(\delta)\subseteq\ker(D_Q-\mu)$, and $D_Q$ is the mean-field operator: $$D_Q=D^\zeta+\alpha(\rho_Q-\nu)\ast|\cdot|^{-1}.$$
 Recall that by Lemma \ref{lem_weyl}, $\sigma_{\rm ess}(D_Q)=\sigma_{\rm ess}(D^\zeta)=(-\ii,-1]\cup[1,\ii)$, i.e. that $\sigma(D_Q)\cap(-1,1)$ contains eigenvalues of finite multiplicity, possibly accumulating at $-1$ or 1. For the sake of simplicity, we shall assume that $0\notin\sigma(D_Q)$. The following proof can be adapted if $0\in\sigma(D_Q)$ by integrating on a line $\epsilon+i\eta$ instead of $i\eta$ in the integrals below. We introduce the notation
$$Q_{\rm vac}:=\chi_{(-\ii,0]}(D_Q)-P^0_-,\qquad \gamma=Q-Q_{\rm vac}.$$
Notice $\gamma\in\gS_1(\gH)$. We recall that $Q_{\rm vac}^{++}=P^0_+Q_{\rm vac}P^0_+$, $Q_{\rm vac}^{--}=P^0_-Q_{\rm vac}P^0_-\in \gS_1(\gH_\Lambda)$. Hence we have to prove that $\rho_{Q^{+-}_{\rm vac}+Q^{-+}_{\rm vac}}$ belongs to $L^1(\R^3)$, which we will do by a bootstrap argument on the self-consistent equation.

We can use Cauchy's formula as in \cite{HLS1}
\begin{equation}
Q_{\rm vac}  =  -\frac1{2\pi}\int_{-\ii}^\ii\left(\frac{1}{D_Q+i\eta}-\frac{1}{D^\zeta+i\eta}\right)d\eta= \sum_{k=1}^3\alpha^k Q_k+\alpha^4 Q'_4
\end{equation}
with
\begin{equation}
 Q_k=(-1)^{k+1}\frac1{2\pi}\int_{-\ii}^\ii\frac{1}{D^\zeta+i\eta}\left(\phi'_Q\frac{1}{D^\zeta+i\eta}\right)^kd\eta,
\label{Q_k}
\end{equation}
$$Q'_4=-\frac1{2\pi}\int_{-\ii}^\ii\left(\frac{1}{D^\zeta+i\eta}\phi'_Q\right)^2\frac{1}{D_Q+i\eta}\left(\phi'_Q\frac{1}{D^\zeta+i\eta}\right)^2d\eta$$
and where we have used the notation $\phi'_Q=(\rho_Q-\nu)\ast|\cdot|^{-1}$. By Furry's Theorem, it is known that $\rho_{Q_2}=0$, see \cite[page 547]{HLS1}. 

\begin{lemma}\label{prop_Q_k}
Let be $0\leq\tau<1/2$. There exists a universal constant $C$  such that the following hold:
\begin{equation*}
\norm{|D^\zeta|^\tau Q_1}_{\gS_2(\gH)}\leq C\norm{\rho_Q-\nu}_\cC,
\label{lem_estim_Q_1} 
\end{equation*}
\begin{equation*}
\norm{|D^\zeta|^{1/2+\tau} Q_2}_{\gS_{3/2}(\gH)}\leq C\norm{\rho_Q-\nu}^2_\cC,\quad
\norm{|D^\zeta|Q_3}_{\gS_{6/5}(\gH)}\leq C\norm{\rho_Q-\nu}^3_\cC,
\label{lem_estim_Q_2et3} 
\end{equation*}
\begin{equation*}
\norm{|D^\zeta|^\tau Q'_4|D^\zeta|^\tau}_{\gS_1(\gH)}\leq C\left(\norm{\rho_Q-\nu}^4_\cC+\alpha\norm{\rho_Q-\nu}^5_\cC+\alpha^2\frac{\norm{\rho_Q-\nu}^6_\cC}{{\rm dist}(\sigma(D_Q),0)}\right).
\label{lem_estim_Q_4} 
\end{equation*}
\end{lemma}
\begin{proof}
By the residuum formula, we have $Q^{++}_1=Q^{--}_1=0$. On the other hand,
$$Q_1^{+-}=\frac1{2\pi}\int_{-\ii}^\ii\frac{P^0_+}{D^\zeta+i\eta}\phi'_Q\frac{P^0_-}{D^\zeta+i\eta}d\eta=\frac1{2\pi}\int_{-\ii}^\ii\frac{P^0_+}{D^\zeta+i\eta}[\phi'_Q,P^0_-]\frac{P^0_-}{D^\zeta+i\eta}d\eta.$$
Hence using
$$\norm{\frac{1}{D^\zeta+i\eta}}\leq\frac{1}{E(\eta)},\qquad \norm{\frac{|D^\zeta|^{1/2+\tau}}{D^\zeta+i\eta}}\leq\frac{1}{E(\eta)^{1/2-\tau}}$$
where $E(\eta):=\sqrt{1+\eta^2}$, and using also Lemma \ref{lemma_estim_commut}, we obtain
$$\norm{|D^\zeta|^\tau Q^{+-}}_{\gS_2(\gH)}\leq C\norm{\nabla\phi'_Q}_{L^2(\R^3)}\int_{-\ii}^\ii\frac{d\eta}{E(\eta)^{3/2-\tau}}=C\norm{\rho_Q-\nu}_\cC$$
since $\norm{\nabla\phi'_Q}_{L^2(\R^3)}=\norm{\rho_Q-\nu}_\cC$.

We then turn to $Q_2$, inserting $1=P^0_-+P^0_+$ in \eqref{Q_k}. We first notice that by the residuum formula,
$$\int_{-\ii}^\ii\frac{P^0_+}{D^\zeta+i\eta}\left(\phi'_Q\frac{P^0_+}{D^\zeta+i\eta}\right)^2d\eta=\int_{-\ii}^\ii\frac{P^0_-}{D^\zeta+i\eta}\left(\phi'_Q\frac{P^0_-}{D^\zeta+i\eta}\right)^2d\eta=0.$$
For the other terms, we write for instance
\begin{multline}
\int_{-\ii}^\ii\frac{P^0_+}{D^\zeta+i\eta}\phi'_Q\frac{P^0_-}{D^\zeta+i\eta}\phi'_Q\frac{P^0_-}{D^\zeta+i\eta}d\eta\\
=\int_{-\ii}^\ii\frac{P^0_+}{D^\zeta+i\eta}[\phi'_Q,P^0_-]\frac{P^0_-}{D^\zeta+i\eta}\phi'_Q\frac{P^0_-}{D^\zeta+i\eta}d\eta
\label{Q_2+--}
\end{multline}
as we did before. We recall that $\phi'_Q\in L^6(\R^3)$ by the Sobolev inequality. Hence, by \eqref{KSS}
\begin{equation}
\norm{\phi'_Q\frac{1}{D^\zeta+i\eta}}_{\gS_6(\gH)} \leq  \frac{C}{E(\eta)^{1/2}}\norm{\nabla\phi'_Q}_{L^2(\R^3)}.
\label{estim_S6}
\end{equation}
Using again Lemma \ref{lemma_estim_commut}, we obtain
\begin{align*}
&\norm{|D^\zeta|^{1/2+\tau}\int_{-\ii}^\ii\frac{P^0_+}{D^\zeta+i\eta}\phi'_Q\frac{P^0_-}{D^\zeta+i\eta}\phi'_Q\frac{P^0_-}{D^\zeta+i\eta}d\eta}_{\gS_{3/2}(\gH)}\\
&\qquad\qquad\leq \norm{[\phi'_Q,P^0_-]|D^\zeta|^{\tau/2-3/4}}_{\gS_{2}(\gH)} \int_{-\ii}^\ii\norm{\phi'_Q\frac{1}{D^\zeta+i\eta}}_{\gS_6(\gH)}\frac{d\eta}{E(\eta)^{3/4-\tau/2}}\\
&\qquad\qquad \leq C\norm{\rho_Q-\nu}_{\cC}^2\int_{-\ii}^\ii\frac{d\eta}{E(\eta)^{5/4-\tau/2}}.
\end{align*}
The proof is the same for all the other terms. 

The same method can be applied to $Q_3$. Let us treat for instance
\begin{multline}
A:=|D^\zeta|\int_{-\ii}^\ii\frac{P^0_+}{D^\zeta+i\eta}\phi'_Q\frac{P^0_-}{D^\zeta+i\eta}\left(\phi'_Q\frac{P^0_-}{D^\zeta+i\eta}\right)^2d\eta\\
=\int_{-\ii}^\ii\frac{P^0_+|D^\zeta|}{D^\zeta+i\eta}[\phi'_Q,P^0_-]\frac{P^0_-}{D^\zeta+i\eta}\left(\phi'_Q\frac{P^0_-}{D^\zeta+i\eta}\right)^2d\eta. 
\label{Q_3+--}
\end{multline}
Applying the above method with
$$\norm{[\phi'_Q,P^0_-]\;|D^\zeta|^{-3/4}}_{\gS_2(\gH)}\leq C\norm{\nabla\phi'_Q}_{L^2(\R^3)}$$
by Lemma \ref{lemma_estim_commut}, we obtain
\begin{equation}
\norm{A}_{\gS_{6/5}(\gH)}\leq C\norm{\rho_Q-\nu}^3_\cC \int_{-\ii}^\ii\frac{d\eta}{(1+\eta^2)^{5/8}}. 
\label{Q_3+--2}
\end{equation}
The argument is of course the same for all the other terms.

Finally, we expand further $Q'_4$ to the 6th order:
$Q'_4=Q_4+Q_5+Q'_6$ where $Q_4$ and $Q_5$ are given by \eqref{Q_k} and
$$Q'_6=-\frac1{2\pi}\int_{-\ii}^\ii\left(\frac{1}{D^\zeta+i\eta}\phi'_Q\right)^3\frac{1}{D_Q+i\eta}\left(\phi'_Q\frac{1}{D^\zeta+i\eta}\right)^3d\eta.$$
On the one hand, we know that $|D_Q+i\eta|\geq {\rm dist}(\sigma(D_Q),0)$), and therefore,
$$\norm{(D_Q+i\eta)^{-1}}\leq {\rm dist}(\sigma(D_Q),0)^{-1}.$$
On the other hand, we can use \eqref{estim_S6}  and
$$\norm{\frac{|D^\zeta|^{\tau}}{D^\zeta+i\eta}\phi'_Q}_{\gS_6(\gH)} \leq  C \norm{|D^0(\cdot)|^{\tau-1}}_{L^6(\R^3)}\norm{\phi'_Q}_{L^6(\R^3)}$$
to estimate $|D^\zeta|^\tau Q_6'|D^\zeta|^\tau$. The terms $Q_4$ and $Q_5$ are treated like $Q_2$ and $Q_3$.
\end{proof}

\begin{lemma}\label{polynomial_argument} Let be $0\leq\tau<1/2$. There exists a universal constant $C$ such that
\begin{equation}
\norm{|D^\zeta|^\tau Q_2^{\pm\pm}|D^\zeta|^\tau }_{\gS_1(\gH)}\leq C\left(\norm{\rho_Q-\nu}_\cC^2+\alpha^2\norm{\rho_Q-\nu}_\cC^4\right),
\label{estim_Q2++} 
\end{equation}
\begin{equation}
\norm{|D^\zeta|^\tau Q_3^{\pm\pm}|D^\zeta|^\tau }_{\gS_1(\gH)}\leq C\left(\norm{\rho_Q-\nu}_\cC^3+\alpha^2\norm{\rho_Q-\nu}_\cC^5\right).
\label{estim_Q3++} 
\end{equation}
\end{lemma}
\begin{proof}
Consider the operator 
$$D(t):=D^\zeta+t\frac{(\rho_Q-\nu)\ast|\cdot|^{-1}}{\norm{\rho_Q-\nu}_\cC}.$$
Since $\rho_Q-\nu\in\cC$, we can use \eqref{estim_weyl} to deduce that there exists a universal constant $t_0>0$ such that $|D(t)|\geq1/2$ for all $t\in[-t_0,t_0]$.
Next we introduce
$$Q(t):=\chi_{(-\ii;0]}(D(t))-P^0_-.$$
We can write as before
$$Q(t)=\sum_{k=1}^3\frac{t^k}{\norm{\rho_Q-\nu}_\cC^k} Q_k +\frac{t^4}{\norm{\rho_Q-\nu}_\cC^4}Q'_4(t) $$
where $Q_k$ are defined as above and with this time
$$Q'_4(t)=-\frac1{2\pi}\int_{-\ii}^\ii\left(\frac{1}{D^\zeta+i\eta}\phi'_Q\right)^2\frac{1}{D(t)+i\eta}\left(\phi'_Q\frac{1}{D^\zeta+i\eta}\right)^2d\eta.$$
Following the method of Lemma \ref{prop_Q_k} and using $|D(t)|\geq 1/2$, we can prove that 
$$\norm{|D^\zeta|^\tau Q'_4(t)|D^\zeta|^\tau }_{\gS_2(\gH)}\leq C\norm{\rho_Q-\nu}_\cC^4,$$
$$\norm{|D^\zeta|^\tau Q'_4(t)|D^\zeta|^\tau }_{\gS_1(\gH)}\leq C\left(\norm{\rho_Q-\nu}_\cC^4+\alpha^2\norm{\rho_Q-\nu}_\cC^6\right).$$
Next, the estimates of Lemma \ref{prop_Q_k} imply that 
$$\norm{|D^\zeta|^\tau Q(t)}_{\gS_2(\gH)}\leq C$$
for all $t\in[-t_0,t_0]$. But as $Q(t)$ is a difference of two projectors, we have $Q(t)^2=Q(t)^{++}-Q(t)^{--}\in\gS_1(\gH)$. Thus
$$\norm{|D^\zeta|^\tau Q(t)^{++}|D^\zeta|^\tau }_{\gS_1(\gH)}+\norm{|D^\zeta|^\tau Q(t)^{--}|D^\zeta|^\tau }_{\gS_1(\gH)}\leq C.$$
Finally
$$\frac{t^2}{\norm{\rho_Q-\nu}_\cC^2}Q_2^{++}+\frac{t^3}{\norm{\rho_Q-\nu}_\cC^3}Q_3^{++}=Q(t)^{++}-\frac{t^4}{\norm{\rho_Q-\nu}_\cC^4}Q'_4(t)^{++}$$
which gives the result when applied to $t=t_0$ and $-t_0$. 
\end{proof}

\begin{lemma}\label{density_Q3} Let be $2\leq p\leq 6$. There exists a universal constant $C$ such that 
$$\norm{\rho_{(Q_3)^{+-}}\ast|\cdot|^{-1}}_{L^p(\R^3)}+\norm{\rho_{(Q_3)^{-+}}\ast|\cdot|^{-1}}_{L^p(\R^3)}\leq C\norm{\rho_Q-\nu}_{\cC}^3.$$
\end{lemma}
\begin{proof}
By Lemma \ref{prop_Q_k}, $Q_3|D^\zeta|\in \gS_{6/5}(\gH)$, hence $Q_3|D^\zeta|\in \gS_{q}(\gH)$ for all $q\geq 6/5$ and
\begin{equation}
 \norm{Q_3^{+-}|D^\zeta|}_{\gS_{q}(\gH)}\leq\norm{Q_3|D^\zeta|}_{\gS_{q}(\gH)}\leq C\norm{\rho_Q-\nu}^3_\cC.
\label{philippe1}
\end{equation}
Let us choose a test function $V$ in the Schwartz class. We have
\begin{multline}
|\tr((Q_3)^{+-}V)|  =  |\tr((Q_3)^{+-}P^0_-V P^0_+)| = |\tr\left((Q_3)^{+-}|D^\zeta|\;|D^\zeta|^{-1}[P^0_-,V]\right)|\\
 \leq  \norm{(Q_3)^{+-}|D^\zeta|}_{\gS_{q}(\gH)}\norm{|D^\zeta|^{-1}[P^0_-,V]}_{\gS_{q'}(\gH)}\label{philippe2}
\end{multline}
for all $q\geq6/5$ and $q'=q/(q-1)$. Then we use Lemma \ref{lemma_estim_commut} which tells us that
$$\norm{|D^\zeta|^{-1}[P^0_-,V]}_{\gS_{q'}(\gH)}\leq C\norm{\nabla V}_{L^{q'}(\R^3)}$$
provided $q'\geq 2$.
Finally by the Sobolev inequality and Riesz operator theory
$$\norm{\nabla V}_{L^{q'}(\R^3)}\leq C\norm{D^2V}_{L^{p^*}(\R^3)} \leq C' \norm{\Delta V}_{L^{p^*}(\R^3)}$$
for $p^*=3q'/(3+q')$, $2\leq q'\leq 6$. 
Summarizing,  by \eqref{philippe1} and \eqref{philippe2},
$$|\tr((Q_3)^{+-}V)|\leq C\norm{\rho_Q-\nu}_{\cC}^3\norm{\Delta V}_{L^{p^*}(\R^3)}$$
for any $6/5\leq p^*\leq 2$. By duality, this proves that for any $2\leq p\leq 6$
$$\norm{\rho_{(Q_3)^{+-}}\ast|\cdot|^{-1}}_{L^p(\R^3)}\leq C\norm{\rho_Q-\nu}_{\cC}^3.$$
\end{proof}

\begin{lemma}\label{density_Q3bis} Let be $3< p<\ii$. There exists a universal constant $C$ such that 
$$\norm{\rho_{(Q_3)^{\pm\pm}}\ast|\cdot|^{-1}}_{L^p(\R^3)}\leq C\left(\norm{\rho_Q-\nu}_{\cC}^3+\alpha^2\norm{\rho_Q-\nu}_{\cC}^5\right),$$
$$\norm{\rho_{Q_4'}\ast|\cdot|^{-1}}_{L^p(\R^3)}\leq C\left(\norm{\rho_Q-\nu}^4_\cC+\alpha\norm{\rho_Q-\nu}^5_\cC+\alpha^2\frac{\norm{\rho_Q-\nu}^6_\cC}{{\rm dist}(\sigma(D_Q),0)}\right).$$
\end{lemma}
\begin{proof}
We argue as above, taking some $V$ in the Schwartz class. We have
$$|\tr(Q_3^{++}V)|\leq \norm{|D^\zeta|^\tau Q_3^{++}|D^\zeta|^\tau}_{\gS_1(\gH)}\norm{|D^\zeta|^{-\tau} V|D^\zeta|^{-\tau}}_{\gS_q(\gH)}$$
for any $q\geq1$ and $\tau<1/2$. Then by the Kato-Seiler-Simon inequality \eqref{KSS}
\begin{equation*}
 \norm{|D^\zeta|^{-\tau} V|D^\zeta|^{-\tau}}_{\gS_q(\gH)}\leq C \norm{E(\cdot)^{-2\tau}}_{L^q(\R^3)}\norm{V}_{L^q(\R^3)}
\end{equation*}
which makes sense as soon as $q>3$ and $1/2-\tau$ is small enough.
The rest follows from the Sobolev embedding like in the proof of Lemma \ref{density_Q3}.
\end{proof}

We now estimate $\rho_{Q_{\rm vac}}$ using the self-consistent equation. First we recall that $\rho_{Q_2}=0$ and that $Q_1=(Q_1)^{+-}+(Q_1)^{-+}$ can be explicitly computed \cite{HLS1} yielding
\begin{multline}
\widehat{\rho_{Q_{\rm vac}}}(k)=-\alpha B^\zeta_\Lambda(k)\left(\widehat{\rho_{Q_{\rm vac}}}(k)+\widehat{\rho_{\gamma}}(k)-\widehat{\nu}(k)\right)\\
+\alpha^4\widehat{\rho_{Q'_4}}(k)+\alpha^3\widehat{\rho_{Q^{++}_3}}(k)+\alpha^3\widehat{\rho_{Q_3^{--}}}(k)
+\alpha^3\widehat{\rho_{Q_3^{+-}}}(k)+\alpha^3\widehat{\rho_{Q_3^{-+}}}(k)\label{SCF_eq_rho}
\end{multline}
with
\begin{multline}
B^\zeta_\Lambda(k)=-\frac{1}{\pi^2 |k|^2}\int_{\R^3} 
\frac{(\ell+k/2)\cdot(\ell-k/2)+1-E(\ell+k/2)E(\ell-k/2)}{E(\ell+k/2)E(\ell-k/2)}\times\\
\times \frac{1}{E(\ell+k/2)\left(1+\zeta\left(\frac{|\ell+k/2|^2}{\Lambda^2}\right)\right)+E(\ell-k/2)\left(1+\zeta\left(\frac{|\ell-k/2|^2}{\Lambda^2}\right)\right)}d\ell
\label{def_B_Lambda_zeta}
\end{multline}
when $\gH=L^2(\R^3,\C^4)$ and $\zeta$ satisfies \eqref{prop_zeta_1}--\eqref{prop_zeta_3}, and
\begin{equation}
B^0_\Lambda(k)=-\frac{1}{\pi^2 |k|^2}\int_{\substack{|\ell+k/2|\leq\Lambda,\\ |\ell-k/2|\leq\Lambda}} 
\frac{(\ell+k/2)\cdot(\ell-k/2)+1-E(\ell+k/2)E(\ell-k/2)}{E(\ell+k/2)E(\ell-k/2)(E(\ell+k/2)+E(\ell-k/2))}d\ell.
\label{def_B_Lambda}
\end{equation} 
when $\gH=\gH_\Lambda$ and $\zeta=0$.
Notice that in both cases $B^\zeta_\Lambda$ is a radial function. Also $B^0_\Lambda$ has its support in $B(0,2\Lambda)$.
\begin{remark}\rm
There is a small mistake in the domain of integration of the definition of $B^0_\Lambda$ in \cite[Eq. (40)]{HLS1}. This does not change the analysis of \cite{HLS1} but is important for the present study.
\end{remark}

Many properties of $B^\zeta_\Lambda$ and $B^0_\Lambda$ are given in Appendix. 
Let us define $b_\Lambda^\zeta$ by
\begin{equation}\label{def_b_Lambda}
\widehat{b_\Lambda^\zeta}(k)=(2\pi)^{3/2}\frac{\alpha B^\zeta_\Lambda(k)}{1+\alpha B^\zeta_\Lambda(k)}.
\end{equation}
In both cases $\gH=\gH_\Lambda$ with $\zeta=0$, and $\gH=L^2(\R^3,\C^4)$ with $\zeta$ satisfying \eqref{prop_zeta_1}--\eqref{prop_zeta_3}, we prove in Appendix that $b^\zeta_\Lambda$ is a smooth function belonging to $L^1(\R^3)$, see Propositions \ref{prop_b_Lambda_0} and \ref{prop_b_Lambda_zeta}. In the rest of the proof, we use the notation
\begin{equation}
 I_\Lambda:=\int_{\R^3}|b^\zeta_\Lambda(x)|\,dx<\ii.
\label{def_int_b_Lambda}
\end{equation}
Equation \eqref{SCF_eq_rho} can be rewritten as
\begin{equation}\label{decomp_rho}
\rho_{Q_{\rm vac}}=b_\Lambda^\zeta\ast(\nu-\rho_\gamma-\rho_1)+\rho_1+\rho_2-b_\Lambda^\zeta\ast\rho_2,
\end{equation}
$$\rho_1=\alpha^4\rho_{Q'_4}+\alpha^3\rho_{Q^{++}_3}+\alpha^3\rho_{Q_3^{--}}\in L^1(\R^3),\qquad \rho_2=\alpha^3\rho_{Q^{+-}_3}+\alpha^3\rho_{Q_3^{-+}}.$$
By Lemma \ref{density_Q3}, Lemma \ref{density_Q3bis} and \eqref{def_int_b_Lambda}
$$\norm{(\rho_2-b_\Lambda^\zeta\ast\rho_2)\ast|\cdot|^{-1}}_{L^4(\R^3)}\leq C(1+I_\Lambda)\alpha^3\norm{\rho_Q-\nu}_{\cC}^3,$$
\begin{multline*}
\norm{(\rho_1-b_\Lambda^\zeta\ast\rho_1)\ast|\cdot|^{-1}}_{L^4(\R^3)}\\
\leq C(1+I_\Lambda)\left(\alpha^3\norm{\rho_Q-\nu}^3_\cC+\alpha^5\norm{\rho_Q-\nu}^5_\cC+\alpha^6\frac{\norm{\rho_Q-\nu}^6_\cC}{{\rm dist}(\sigma(D_Q),0)}\right)
\end{multline*}
so that
\begin{multline}
 \norm{\phi'_Q}_{L^4(\R^3)}\leq C(1+I_\Lambda)\bigg(\norm{(\nu-\rho_\gamma)\ast|\cdot|^{-1}}_{L^4(\R^3)}+\\
+\alpha^3\norm{\rho_Q-\nu}^3_\cC+\alpha^5\norm{\rho_Q-\nu}^5_\cC+\alpha^6\frac{\norm{\rho_Q-\nu}^6_\cC}{{\rm dist}(\sigma(D_Q),0)} \bigg).
\label{estim_phi_L4}
\end{multline}
As $\rho_\gamma,\nu\in \cC\cap L^1(\R^3)$, we have
$$\norm{(\nu-\rho_\gamma)\ast|\cdot|^{-1}}_{L^4(\R^3)}<\ii$$
but we do not provide a precise estimate at this point.
Now we can use the information that $\phi'_Q\in L^4(\R^3)$ to estimate $(Q_3)^{+-}$ and $(Q_3)^{-+}$, using
$$\norm{\frac{1}{D^\zeta+i\eta}\phi'_Q}_{\gS_4(\gH_\Lambda)}\leq  \frac{C}{E(\eta)^{1/4}} \norm{\phi'_Q}_{L^4(\R^3)}.$$
For any fixed $0\leq\tau<1/2$, this gives an estimate of the form
\begin{equation}
\norm{|D^\zeta|^\tau Q_3^{+-}|D^\zeta|^\tau}_{\gS_1(\gH)}\leq C\norm{\rho_Q-\nu}_\cC\norm{\phi'_Q}_{L^4(\R^3)}^2.
\end{equation}
Inserting in \eqref{decomp_rho}, we are led to
\begin{multline}
 \norm{\rho_{Q_{\rm vac}}}_{L^1(\R^3)}\leq I_\Lambda\norm{\nu-\rho_\gamma}_{L^1(\R^3)} +C(1+I_\Lambda)\norm{\rho_1}_{L^1(\R^3)}\\
+\alpha^3C(1+I_\Lambda)\norm{\rho_Q-\nu}_\cC\norm{\phi'_Q}_{L^4(\R^3)}^2
\label{estim_rho_L1}
\end{multline}
where $C$ is independent of $\Lambda$. As a conclusion, $\rho_{Q_{\rm vac}}$ hence $\rho_Q$ belong to $L^1(\R^3)$.

Let us turn to the proof of \eqref{int_rho2}. We deduce from the previous analysis that $\rho_{Q_1}\in L^1(\R^3)$ (whereas in general $Q_1\notin\gS_1(\gH_\Lambda)$) and that 
$$\int\rho_Q=\int\rho_{Q^{++}+Q^{--}}+\alpha\int\rho_{Q_1^{+-}+Q_1^{-+}}=q+\alpha\int\rho_{Q_1^{+-}+Q_1^{-+}}$$
since we know that $\gamma$, $Q_3$, $Q_5$ and $Q_6$ are all trace-class and that $\int\rho_{(K)^{+-}}=\tr(P^0_+KP^0_-)=0$ for any trace-class operator $K$. Now
$$\widehat{\rho_{Q_1}}(0)=- B^\zeta_\Lambda(0)(\widehat{\rho_Q}(0)-\widehat{\nu}(0))=- B^\zeta_\Lambda(0)(\widehat{\rho_Q}(0)-\widehat{\nu}(0))$$
which leads to
$$\int\rho_{Q_1}=-\frac{B^\zeta_\Lambda(0)}{1+\alpha B^\zeta_\Lambda(0)}(q-Z)\quad
\text{and}\quad \int(\rho_Q-\nu)=\frac{q-Z}{1+\alpha B^\zeta_\Lambda(0)}.$$
This ends the proof of Theorem \ref{thm_L1}.
\end{proof}

\begin{corollary}\label{cor_spectrum_infinite_eignv}
Let $Q$ be a minimizer for $E_{\rm r}^\nu(q)$ as in Theorem \ref{thm_L1}. If $q<Z$ (resp. $q>Z$)  then $\sigma(D_Q)$ contains an infinite sequence of eigenvalues converging to 1 (resp. to $-1$).
\end{corollary}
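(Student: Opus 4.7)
I treat the case $q<Z$; the case $q>Z$ is entirely symmetric, using constant spinors with $\beta e=-e$ to produce eigenvalues accumulating at $-1$. By Theorem~\ref{thm_L1}, $\rho_Q-\nu\in L^1(\R^3)$ with
\[
M:=\int_{\R^3}(\rho_Q-\nu)=\frac{q-Z}{1+\alpha B^\zeta_\Lambda(0)}<0,
\]
so the potential $W:=\alpha(\rho_Q-\nu)\ast|\cdot|^{-1}$ appearing in $D_Q=D^\zeta+W$ has the attractive Coulomb tail $W(x)=\alpha M/|x|+o(1/|x|)$ as $|x|\to\infty$. The strategy is to construct, for every integer $N$, an $N$-dimensional subspace of trial states on which the shifted quadratic form $\langle\,\cdot\,,(D_Q-1)\,\cdot\,\rangle$ is strictly negative, and then to invoke a min-max principle adapted to the spectral gap $(-1,1)$ of $D^\zeta$ (in the spirit of Esteban-S\'er\'e and Dolbeault-Esteban-S\'er\'e) to deduce that $D_Q$ has at least $N$ eigenvalues strictly below $1$ in that gap. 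Since the essential spectrum of $D_Q$ equals that of $D^\zeta$ by Lemma~\ref{lem_weyl}, these eigenvalues must accumulate at $1$.

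Concretely, I pick widely separated points $x_n\in\R^3$ with $|x_n|\to\infty$ and radii $\rho_n:=|x_n|^{3/4}$ such that the balls $B(x_n,\rho_n)$ are pairwise disjoint. Fix $\chi\in C_c^\infty(B(0,1))$ real with $\|\chi\|_{L^2}=1$ and a constant unit spinor $e=(e_0,0)^\top\in\C^4$ (upper block only), so that $\beta e=e$ and $e^*\alpha_k e=0$ for each $k$. Set
\[
u_n(x):=\rho_n^{-3/2}\chi\bigl((x-x_n)/\rho_n\bigr)\,e,
\]
so that the $u_n$ are $L^2$-orthonormal. The algebraic identity $e^*\alpha_k e=0$ kills the cross term from $-i\alp\cdot\nabla$ and yields $\langle u_n,D^0u_n\rangle=1$; using the bound $\zeta(t)=O(t)$ near $0$ (a consequence of \eqref{prop_zeta_3}), one gets $\langle u_n,D^\zeta u_n\rangle=1+O(\rho_n^{-2})=1+O(|x_n|^{-3/2})$. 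On $B(x_n,\rho_n)$ one has $W(x)=\alpha M/|x_n|+o(|x_n|^{-1})$, so $\langle u_n,Wu_n\rangle=\alpha M/|x_n|+o(|x_n|^{-1})$ and therefore $\langle u_n,(D_Q-1)u_n\rangle<0$ for $n$ large. Setting $v_n:=P^0_+u_n$, the pointwise estimate $|P^0_-(p)e|=O(|p|)$ near $p=0$ gives $\|P^0_-u_n\|^2=O(\rho_n^{-2})$, and the same negativity persists on the vectors $v_n\in P^0_+\gH$.

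The main obstacle is the rigorous min-max step. Using the characterization
\[
\lambda_N=\inf_{V\subset P^0_+\gH,\,\dim V=N}\ \sup_{0\ne\phi\in V\oplus P^0_-\gH}\frac{\langle\phi,D_Q\phi\rangle}{\|\phi\|^2},
\]
I test with $V_N:=\operatorname{span}(v_1,\dots,v_N)$. Since $D^\zeta$ commutes with $P^0_\pm$, the cross terms in the numerator between $V_N$ and $P^0_-\gH$ only involve $W$; they are controlled by Cauchy-Schwarz and the $L^\infty$-bound on $W$ (which follows from $\rho_Q-\nu\in L^1\cap\cC$ and Sobolev embedding). On $V_N$ itself, the disjoint supports of the $u_n$ annihilate all $W$-cross-terms between distinct $v_n$'s, while the $D^\zeta$-cross-terms $\langle v_n,D^\zeta v_m\rangle$ decay faster than $|x_n|^{-1}$ via a non-stationary phase argument applied to their Fourier representation. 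Combining these estimates shows that the supremum above is strictly below $1$, so $\lambda_N<1$ for every $N$, completing the proof.
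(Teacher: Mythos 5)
Your strategy -- exhibit an attractive Coulomb tail from $\int(\rho_Q-\nu)<0$ and build widely separated trial spinors to capture infinitely many eigenvalues in the gap via a min-max principle -- is exactly the mechanism behind the result of Barbaroux--Farkas--Helffer--Siedentop (\cite[Thm.\ A.12]{BFHS}) that the paper invokes, so you have identified the correct route. However, as written the argument has three genuine gaps.

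First, the scale $\rho_n=|x_n|^{3/4}$ is too small. Theorem~\ref{thm_L1} only gives $\rho_Q-\nu\in L^1(\R^3)$, with \emph{no} decay rate on the tail $\varepsilon(R):=\int_{|y|>R}|\rho_Q-\nu|$. Writing $\langle u_n,Wu_n\rangle=\alpha\int(\rho_Q-\nu)(y)\,g_n(y)\,dy$ with $g_n(y)=\int|u_n(x)|^2|x-y|^{-1}dx$, one only has the uniform bound $\|g_n\|_\infty\lesssim\rho_n^{-1}$, so the far-tail contribution is $O(\varepsilon(R)/\rho_n)=O(\varepsilon(R)|x_n|^{-3/4})$, which need not be $o(|x_n|^{-1})$ if $\varepsilon$ decays slowly (e.g.\ logarithmically); thus neither the claimed pointwise asymptotic for $W$ on $B(x_n,\rho_n)$ nor the averaged one is available. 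The fix is to take $\rho_n=c|x_n|$ (with $|x_{n+1}|\geq 3|x_n|$ to keep the balls disjoint): then $|x_n|\,g_n$ is uniformly bounded, $|x_n|g_n(y)\to1$ pointwise, and dominated convergence gives $|x_n|\langle u_n,Wu_n\rangle\to\alpha M$ while the kinetic corrections remain $O(\rho_n^{-2})=o(|x_n|^{-1})$. Second, the assertion that $W\in L^\infty$ ``follows from $\rho_Q-\nu\in L^1\cap\cC$ and Sobolev embedding'' is false: a local singularity $|x|^{-a}\1_{|x|<1}$ with $2<a<5/2$ lies in $L^1\cap\cC$ yet produces $W\sim|x|^{2-a}\notin L^\infty$; the proof of Theorem~\ref{thm_L1} only yields $\phi'_Q\in L^p$ for finite $p$. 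Third, the Dolbeault--Esteban--S\'er\'e min-max formula you quote is a theorem with hypotheses (essentially a stability/form-boundedness condition ensuring $\sup\sigma(P^0_-D_QP^0_-)<\inf$ of the positive part) that are not verified here and are not automatic for a perturbation $W$ that is merely relatively compact; without this, ``testing'' with $V_N$ does not by itself bound the $N$-th gap eigenvalue, and the supremum over $V_N\oplus P^0_-\gH$ (where $W$ acts on the infinite-dimensional $P^0_-\gH$ block) requires a quantitative absorption into $\langle w,D^\zeta w\rangle\leq-\|w\|^2$ that your $L^\infty$-based estimate does not provide.
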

\begin{proof}
This is a simple adaptation of the proof of \cite[Thm A.12]{BFHS}.
\end{proof}

\subsubsection*{\bf Step 3: Properties of $q\to E_{\rm r}^\nu(q)$ and definition of $q_m$ and $q_M$.} 
As $Q\in\cQ\to\Er^\nu(Q)$ is convex, the map $q\to E_{\rm r}^\nu(q)$ is also convex. We then define 
$I=\{q\in\R\ |\ \text{(H1) holds}\}$,
where (H1) is defined in Theorem \ref{HVZ}.
Thus, for any $q\in I$, there exists a $Q\in\cQ(q)$ such that $\Er^\nu(Q)=E_{\rm r}^\nu(q)$, by Theorem \ref{HVZ}.
We introduce the following convex real functions
$f^-(q):=E_{\rm r}^\nu(q)-q$ and $f^+(q):=E_{\rm r}^\nu(q)+q$.
By \eqref{large_HVZ} and \eqref{estim_q}, $f^-$ is nonincreasing and bounded from below, $f^+$ is nondecreasing and bounded from below. Notice $\lim_{q\to\ii}f^+(q)=\ii$ and $\lim_{q\to-\ii}f^-(q)=\ii$.
Define now $q_M$ such that $f^-$ is decreasing on $(-\ii,q_M)$ and constant on $[q_M,\ii)$ (let $q_M=\ii$ if $f^-$ is decreasing), and $q_m$ such that $f^+$ is increasing on $(q_m,\ii)$ and constant on $[-\ii,q_m)$ (let $q_m=-\ii$ if $f^+$ is increasing). Remark $q_m\leq q_M$.
Next we have
\begin{eqnarray*}
q\in I & \Longleftrightarrow & \left\{\begin{array}{l}
\forall q'>q,\quad E_{\rm r}^\nu(q)<E_{\rm r}^\nu(q')+q'-q\\
\forall q'<q,\quad E_{\rm r}^\nu(q)<E_{\rm r}^\nu(q')+q-q'\\
\end{array}\right.\\
& \Longleftrightarrow & \left\{\begin{array}{l}
\forall q'>q,\quad f^+(q)<f^+(q')\\
\forall q'<q,\quad f^-(q)<f^-(q')\\
\end{array}\right.\\
& \Longleftrightarrow & \left\{\begin{array}{l}
q\in[q_m,\ii)\\
q\in(-\ii,q_M]\\
\end{array}\right.
\end{eqnarray*}
and therefore $I=[q_m,q_M]$.

\subsubsection*{\bf Step 4: The interval $[q_m,q_M]$ contains both $q_0$ and $Z$.} 
Assume now that $q_0$ satisfies $E^\nu_{\rm r}(q_0)=\min_{q\in\R}E^\nu_{\rm r}(q)$. Then 
$$E^\nu_{\rm r}(q_0)\leq E^\nu_{\rm r}(q')<E^\nu_{\rm r}(q')+|q_0-q'|$$
for any $q'\neq q_0$, ie. $q_0$ satisfies (H1). Hence $q_0\in I=[q_m,q_M]$. 

Let us now prove that $Z=\int\nu$ also belongs to $I=[q_m,q_M]$. We use classical ideas already used for the reduced Hartree-Fock theory \cite{Solovej2}. Assume first $Z>q_M$. Since $q_M\in I$, there exists a minimizer $Q_M$ in the charge sector $\cQ(q_M)$. By Theorem \ref{HVZ}, $Q_M$ satisfies the self-consistent equation 
$$Q_M+P^0_-  =  \chi_{(-\ii,\mu)}\left(D_{Q_M}\right)+\delta$$
for some $\mu\in[-1,1]$. By Corollary \ref{cor_spectrum_infinite_eignv}, $\sigma(D_{Q_M})$ contains an infinite sequence of eigenvalues converging to $1$. Since $\tr_{P^0_-}[\chi_{(-\ii,0)}\left(D_{Q_M}\right)-P^0_-]$ is known to be finite and $\delta$ is finite rank, we deduce that $\mu<1$. Hence there exists an eigenvalue $\lambda\in(\mu,1)$ of $D_{Q_M}$ with eigenfunction $\chi\in\gH_\Lambda$ which is not filled. Notice $Q_M+t|\chi\rangle\langle\chi|\in\cQ(q_M+t)$ for $t\in[0,1]$. Let us then compute, 
$$E^\nu_{\rm r}(q_M+t) \leq \Er^\nu(Q_M+t|\chi\rangle\langle\chi|) = E^\nu_{\rm r}(q_M)+t\pscal{D_{Q_M}\chi,\chi}+\frac{\alpha t^2}{2}D(|\chi|^2,|\chi|^2)$$
or equivalently
$$f^-(q_M+t)\leq f^-(q_M)+t(\lambda-1)+O(t^2)$$
which contradicts the definition of $q_M$.

Assume now $Z<q_m$ and consider a minimizer $Q_m$ for $E^\nu_{\rm r}(q_m)$. By the same arguments, it satisfies the self-consistent equation 
$$Q_m+P^0_-  =  \chi_{(-\ii,\mu')}\left(D_{Q_m}\right)+\delta$$
for some $\mu'>-1$ and the spectrum $\sigma(D_{Q_m})$ contains an infinite sequence of eigenvalues converging to -1. Thus there is an eigenvalue $ \lambda'\in(-1,\mu)$ which is completely filled, with eigenfunction $\chi'\in\gH_\Lambda$. Computing $\Er^\nu(Q_m-t|\chi'\rangle\langle\chi'|)$ and noticing $Q_m-t|\chi'\rangle\langle\chi'|\in\cQ(q_m-t)$ for any $t\in[0,1]$, we obtain
$$f^+(q_m-t)\leq f^+(q_m)-t(\lambda'+1)+O(t^2)$$
which contradicts the definition of $q_m$.

\subsubsection*{\bf Step 5: Characterization of $[q_m,q_M]$.} 
\begin{lemma}\label{str_convex_I}Assume that $q_1\neq q_2$ are such that both $E^\nu_{\rm r}(q_1)$ and $E^\nu_{\rm r}(q_2)$ admit a minimizer. Then 
\begin{equation}
\forall t\in(0,1),\qquad E^\nu_{\rm r}(tq_1+(1-t)q_2) < t\;E^\nu_{\rm r}(q_1)+(1-t)E^\nu_{\rm r}(q_2).
\label{str_convex}
\end{equation}
As a consequence, 
\begin{enumerate}
\item $[q_m,q_M]$ is the largest interval on which $q\to E^\nu_{\rm r}(q)$ is strictly convex;
\item $q_0={\rm argmin}_{I}E^\nu_{\rm r}$ is uniquely defined;
\item no minimizer exists for $E^\nu_{\rm r}(q)$ when $q$ is outside $[q_m,q_M]$;
\end{enumerate}
\end{lemma}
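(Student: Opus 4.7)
The plan is to exploit the strict convexity of the Coulomb form $D(\cdot,\cdot)$ combined with the linear identity \eqref{int_rho2} established in Theorem \ref{thm_L1}. The latter shows that for any minimizer $Q$ in $\cQ(q)$ one has $\int_{\R^3}\rho_Q=Z+(q-Z)/(1+\alpha B_\Lambda^\zeta(0))$, and since $B_\Lambda^\zeta(0)>0$ (manifest from the explicit integral representations displayed in Theorem \ref{exists}), this integral depends strictly monotonically on $q$. This is the pivotal observation: densities of minimizers sitting in different charge sectors must differ, and this is precisely what upgrades the convexity of $\Er^\nu$ to strict convexity.

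To prove \eqref{str_convex} I would take minimizers $Q_1$ for $E_{\rm r}^\nu(q_1)$ and $Q_2$ for $E_{\rm r}^\nu(q_2)$ and set $Q_t:=tQ_1+(1-t)Q_2$. Then $Q_t\in\cK$ and $\tr_{P^0_-}(Q_t)=tq_1+(1-t)q_2$ by linearity of the generalized trace, so $Q_t\in\cQ(tq_1+(1-t)q_2)$. Decomposing
$$\Er^\nu(Q_t)=\tr_{P^0_-}(D^\zeta Q_t)-\alpha D(\nu,\rho_{Q_t})+\frac{\alpha}{2}D(\rho_{Q_t},\rho_{Q_t}),$$
the first two summands are affine in $Q$ while the last is strictly convex in $\rho_Q$. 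Since $\int\rho_{Q_1}\neq\int\rho_{Q_2}$ by the previous paragraph, $\rho_{Q_1}\neq\rho_{Q_2}$ in $\cC$, hence strict convexity of $D(\cdot,\cdot)$ yields $\Er^\nu(Q_t)<t E^\nu_{\rm r}(q_1)+(1-t)E^\nu_{\rm r}(q_2)$. Combined with $E^\nu_{\rm r}(tq_1+(1-t)q_2)\leq\Er^\nu(Q_t)$, this gives \eqref{str_convex}.

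The three consequences then follow by standard convexity arguments. Since Steps 1--4 guarantee a minimizer at every point of $[q_m,q_M]$, item (1) follows from \eqref{str_convex} together with the fact that on $(-\ii,q_m]$ and $[q_M,\ii)$ the function $E^\nu_{\rm r}$ is affine by the very definition of $q_m,q_M$ given in Step 3. Item (2) is immediate: every global minimizer of $E^\nu_{\rm r}$ lies in $[q_m,q_M]$ by Step 4, and strict convexity on that interval rules out two distinct minimizers. For item (3), if some $q^*>q_M$ admitted a minimizer, combining it with the minimizer at $q_M$ via \eqref{str_convex} would contradict the affine behavior of $E^\nu_{\rm r}$ beyond $q_M$; the case $q^*<q_m$ is symmetric.

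The only subtle point — and the real content of the argument — is the observation $\rho_{Q_1}\neq\rho_{Q_2}$. This does not follow from Proposition \ref{prop_unique_density_SU2} alone, since minimizers in different charge sectors could a priori share the same density; it is forced by the charge renormalization identity \eqref{int_rho2}. Everything else reduces to manipulating the convexity and the already-established structural properties of $q\mapsto E^\nu_{\rm r}(q)$.
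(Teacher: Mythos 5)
Your proposal is correct and follows essentially the same route as the paper: both use the charge-renormalization identity \eqref{int_rho2} to deduce $\int\rho_{Q_1}\neq\int\rho_{Q_2}$, hence $\rho_{Q_1}\neq\rho_{Q_2}$, and then invoke the strict convexity of $f\mapsto D(f,f)$ on the interpolant $tQ_1+(1-t)Q_2$ to get the strict inequality, with the three consequences following by the same standard convexity arguments. You correctly identify that \eqref{int_rho2}, not Proposition~\ref{prop_unique_density_SU2}, is what forces $\rho_{Q_1}\neq\rho_{Q_2}$ across distinct charge sectors, which is precisely the point the paper makes.
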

\begin{proof}
Assume that $Q_1$ and $Q_2$ are two minimizers of respectively $E^\nu_{\rm r}(q_1)$ and $E^\nu_{\rm r}(q_2)$, with $q_1\neq q_2$. Then by \eqref{int_rho}, $\int\rho_{Q_1}\neq\int\rho_{Q_2}$, hence $\rho_{Q_1}\neq\rho_{Q_2}$. Hence, for any $t\in(0,1)$,
\begin{multline*}
E^\nu_{\rm r}(tq_1+(1-t)q_2) \leq  \Er^\nu(tQ_1+(1-t)Q_2) < t\;\Er^\nu(Q_1)+(1-t)\Er^\nu(Q_2)\\
= t\;E^\nu_{\rm r}(q_1)+(1-t)E^\nu_{\rm r}(q_2)
\end{multline*}
where we have used the strict convexity of $f\mapsto D(f,f)$.

Inequality \eqref{str_convex} shows that $q\to E^\nu_{\rm r}(q)$ is strictly convex on $I=[q_m,q_M]$, since minimizers are known to exist for any $q\in I$. But $q\to E^\nu_{\rm r}(q)$ is linear outside $I$ and therefore $I$ is the largest interval on which $q\to E^\nu_{\rm r}(q)$ is strictly convex. The global minimizer $q_0$ of $E^\nu_{\rm r}$ on $\R$ thus on $I$ is unique.

Eventually, we prove that no minimizer exist for $E_{\rm r}^\nu(q)$ when $q\notin [q_m,q_M]$. If $q>q_M$ provides a minimizer, then since a minimizer exists for $E^\nu_{\rm r}(q_M)$, \eqref{str_convex} applied for $q_M$ and $q$ contradicts the fact that $E^\nu_{\rm r}(\cdot)$ is linear on $[q_M,\ii)$.
\end{proof}

\subsection{Proof of Theorem \ref{Ionization}}\label{sec_proof2}
If we assume $\zeta(t)=t$, the function $b_\Lambda^\zeta$ can be studied more carefully as explained in Appendix. In this case, one can prove that
$$I_\Lambda=\norm{b_\Lambda^\zeta}_{L^1(\R^3)}\leq \frac{\alpha B_\Lambda^\zeta(0)}{1-\alpha B_\Lambda^\zeta(0)}\leq\frac{2/(3\pi)\alpha \log\Lambda}{1-2/(3\pi)\alpha \log\Lambda} ,$$
when $\Lambda\geq4$ and $2/(3\pi)\alpha \log\Lambda<1$, see Proposition \ref{prop_b_Lambda_zeta_particulier}. For the sake of simplicity, we shall use the following notation in the whole proof
$$\theta:=\alpha \pi^{1/6}2^{11/6}D(\nu,\nu)^{1/2}$$
and we will assume that $\theta<1$. Later on we shall also assume that $\alpha$, $I$ and $\theta$ are small enough but we postpone this to the end of the proof and rather give precise estimates before.

\subsubsection*{\bf Step 1: \emph{A priori} estimates.}
\begin{lemma}\label{estim_apriori}
Assume that $Q\in\cQ(q)$ is a minimizer for $E_{\rm r}^\nu(q)$, for some $q\in[q_m,q_M]$. Then we have
\begin{equation}
\norm{\rho_Q-\nu}_\cC\leq \norm{\nu}_\cC.
\label{estim_unif_rho} 
\end{equation}
If moreover $\theta:=\alpha \pi^{1/6}2^{11/6}D(\nu,\nu)^{1/2}<1$, then $|D_Q|\geq 1-\theta$, hence $0\notin\sigma(D_Q)$ and, denoting $Q_{\rm vac}=\chi_{(-\ii,0)}(D_Q)-P^0_-$ we have
\begin{equation}
 \tr_{P^0_-}\left(Q_{\rm vac}\right)=0.
\end{equation}
\end{lemma}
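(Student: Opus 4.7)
The plan is to treat the two inequalities separately and then read off the vanishing of the vacuum charge as a corollary.

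\textbf{Step 1 (control of $\|\rho_Q-\nu\|_\cC$).} I would complete the square in the Coulomb part of the energy:
$$\Er^\nu(Q)=\tr_{P^0_-}(D^\zeta Q)+\frac{\alpha}{2}D(\rho_Q-\nu,\rho_Q-\nu)-\frac{\alpha}{2}D(\nu,\nu).$$
Because $|D^\zeta|\ge 1$ and $Q\in\cK$ forces $Q^{++}\ge 0$, $Q^{--}\le 0$, definition \eqref{def_kinetic_energy} gives
$\tr_{P^0_-}(D^\zeta Q)\ge \tr(Q^{++})-\tr(Q^{--})\ge |\tr_{P^0_-}(Q)|=|q|$.
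Combined with the upper bound $\Er^\nu(Q)=E^\nu_{\rm r}(q)\le|q|$ from \eqref{estim_q}, this immediately yields
$D(\rho_Q-\nu,\rho_Q-\nu)\le D(\nu,\nu)$, i.e. \eqref{estim_unif_rho}.

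\textbf{Step 2 (spectral gap for $D_Q$).} Set $V=\alpha(\rho_Q-\nu)\ast|\cdot|^{-1}$. The plan is to establish the \emph{operator} relative bound
$$\|V\psi\|_{L^2}\le \theta\,\||D^\zeta|\psi\|_{L^2}\quad\text{for all }\psi\in\cD(D^\zeta).$$
This follows by chaining Hardy--Littlewood--Sobolev (the Coulomb kernel) with the Sobolev embedding $\dot H^{1/2}(\R^3)\hookrightarrow L^3(\R^3)$, yielding
$\|(\rho\ast|\cdot|^{-1})\psi\|_{L^2}\le \pi^{1/6}2^{11/6}\,\|\rho\|_\cC\,\||D^0|^{1/2}\psi\|_{L^2}$, and then using $|D^0|\le|D^\zeta|$, $|D^0|\ge 1$ together with $\|\rho_Q-\nu\|_\cC\le\|\nu\|_\cC$ from Step~1. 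For any $\mu\in(-(1-\theta),1-\theta)$, functional calculus for $D^\zeta$ (whose spectrum lies in $(-\infty,-1]\cup[1,\infty)$) gives $\||D^\zeta|(D^\zeta-\mu)^{-1}\|\le 1/(1-|\mu|)$, hence
$$\|V(D^\zeta-\mu)^{-1}\|\le \frac{\theta}{1-|\mu|}<1.$$
A Neumann series then shows $D_Q-\mu=(1+V(D^\zeta-\mu)^{-1})(D^\zeta-\mu)$ is invertible, so $\mu\notin\sigma(D_Q)$. This proves $|D_Q|\ge 1-\theta$; in particular $0\notin\sigma(D_Q)$.

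\textbf{Step 3 (vanishing of $q_0$).} I would run the same bound along the affine family $D_t:=D^\zeta+tV$, $t\in[0,1]$, which satisfies $|D_t|\ge 1-t\theta>0$. Using the Cauchy representation
$Q_{{\rm vac},t}=-\frac{1}{2\pi}\int_{-\infty}^{\infty}\bigl((D_t+i\eta)^{-1}-(D^\zeta+i\eta)^{-1}\bigr)d\eta$
together with the resolvent identity and the uniform bound from Step~2, the map $t\mapsto Q_{{\rm vac},t}\in\cQ$ is continuous and its $P^0_-$-trace is well defined. By the index interpretation of $\tr_{P^0_-}(\cdot)$ on differences of spectral projectors \cite[Lemma 2]{HLS1}, $t\mapsto\tr_{P^0_-}(Q_{{\rm vac},t})$ is integer-valued and continuous, hence constant. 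Since $Q_{{\rm vac},0}=0$, this forces $\tr_{P^0_-}(Q_{\rm vac})=0$.

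\textbf{Expected main obstacle.} Step~1 is essentially a one-line energy comparison; Step~3 is a standard continuity-of-index argument once Step~2 is done. The real work is in Step~2, namely upgrading the HLS form inequality $|\langle\psi,V\psi\rangle|\le\theta\langle\psi,|D^\zeta|\psi\rangle$ (which only gives KLMN self-adjointness) to the \emph{operator} bound $\|V|D^\zeta|^{-1}\|\le\theta$ that is required for the Neumann-series argument producing the gap $1-\theta$. This requires the sharper HLS$+$Sobolev chain with the specific constant $\pi^{1/6}2^{11/6}$ used in \cite{HLS2}.
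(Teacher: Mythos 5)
Your Steps 1 and 3 match the paper's argument essentially line for line: Step 1 is the energy comparison using \eqref{estim_q} and $\tr_{P^0_-}(D^\zeta Q)\geq q^++q^-\geq|q|$, and Step 3 is the continuity/integrality argument for $\tr_{P^0_-}(P(t)-P^0_-)=\tr\bigl((P(t)-P^0_-)^3\bigr)$ along the homotopy $D^\zeta+t\alpha\phi'_Q$, citing \cite[Lemma 2]{HLS1}. In Step 2, however, you take a genuinely different (and in one respect unnecessarily demanding) route. The paper simply quotes \cite[p.\ 4495]{HLS2} for the \emph{form} inequality $|\phi'_Q|\leq\kappa\norm{\nu}_\cC\,|D^0|\leq\kappa\norm{\nu}_\cC\,|D^\zeta|$ with $\kappa=\pi^{1/6}2^{11/6}$, and deduces the gap directly from the factorization
$$D_Q-\mu=|D^\zeta|^{1/2}\Bigl(\sgn(D^\zeta)+|D^\zeta|^{-1/2}(\alpha\phi'_Q-\mu)|D^\zeta|^{-1/2}\Bigr)|D^\zeta|^{1/2},$$
whose middle factor is invertible for $|\mu|<1-\theta$ because $\bigl\||D^\zeta|^{-1/2}\alpha\phi'_Q|D^\zeta|^{-1/2}\bigr\|\leq\theta$ and $\bigl\||D^\zeta|^{-1/2}\mu|D^\zeta|^{-1/2}\bigr\|\leq|\mu|$. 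So the ``upgrade'' to the operator-norm bound $\|V|D^\zeta|^{-1}\|\leq\theta$ that you flag as the main obstacle is in fact not needed: the form bound plus $|D^\zeta|\geq1$ already yields $|D_Q|\geq1-\theta$. Your Neumann-series route via $\|V(D^\zeta-\mu)^{-1}\|\leq\theta/(1-|\mu|)$ does work if the operator bound holds with that constant, but you should be careful on two counts. First, the form inequality $-c|D^\zeta|\leq V\leq c|D^\zeta|$ does \emph{not} in general imply $\|V|D^\zeta|^{-1}\|\leq c$ (these are inequivalent for non-commuting operators), so you really are proving something different from, and stronger than, what \cite{HLS2} gives. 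Second, and more concretely, your Hölder $+$ Sobolev chain $\|V\psi\|_2\leq\|V\|_{L^6}\|\psi\|_{L^3}\leq C_1\|\nabla V\|_{L^2}\cdot C_2\||p|^{1/2}\psi\|_2$ certainly gives an operator bound $\|V|D^0|^{-1/2}\|\leq C\norm{\rho}_\cC$, but the constant $C=C_1C_2$ is a product of sharp Sobolev constants that has no reason to coincide with $\pi^{1/6}2^{11/6}$ (the constant in \cite{HLS2} comes from a different chain, bounding $D(|\psi|^2,|\psi|^2)^{1/2}$ by $\pscal{\psi,|D^0|\psi}$). Since the lemma's hypothesis $\theta<1$ is stated with this specific constant, you would either need to verify that your chain reproduces it, or simply use the form inequality and the sandwich factorization, which is shorter and is exactly what the paper does.
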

\begin{proof}[Proof of Lemma \ref{estim_apriori}]
 We have by \eqref{estim_q}
$$\tr_{P^0_-}(D^\zeta Q)+\frac\alpha2 D(\rho_Q-\nu,\rho_Q-\nu)\leq \frac\alpha2 D(\nu,\nu)+|q|.$$
Introducing $q^+=\tr_{P^0_-}(Q^{++})\geq0$ and  $q^-=-\tr_{P^0_-}(Q^{--})\geq0$, we have
$$\tr_{P^0_-}(D^\zeta Q)=\tr(|D^\zeta|^{1/2}(Q^{++}-Q^{--})|D^\zeta|^{1/2})\geq q^++q^-\geq |q|,$$
hence \eqref{estim_unif_rho} follows.
Following \cite[p. 4495]{HLS2}, we have the operator inequality
$$|\phi'_Q|=\left|(\rho_Q-\nu)\ast\frac{1}{|\cdot|}\right|\leq \kappa\norm{\nu}_\cC|D^0|\leq \kappa\norm{\nu}_\cC|D^\zeta|$$
with $\kappa=\pi^{1/6}2^{11/6}$. Hence
\begin{equation}
|D_Q|\geq \left(1-\alpha\kappa\norm{\nu}_\cC\right)|D^\zeta|\geq 1-\theta.
\label{estim_D_Q} 
\end{equation}

The proof that $\tr_{P^0_-}(Q_{\rm vac})=0$ is the same as in \cite{HLS1,HLS2}: considering $P(t)=\chi_{(-\ii,0)}(D^\zeta+\alpha t(\rho_Q-\nu\ast|\cdot|^{-1}))$, we have by \cite[Lemma 2]{HLS1} that $\tr_{P^0_-}(P(t)-P^0_-)=\tr(P(t)-P^0_-)^3$ is for all $t\in[0,1]$ an integer which varies continuously with respect to $t$, hence, it is equal to 0 for all $t\in[0,1]$.
\end{proof}

For the rest of the proof, we work under the assumptions of Theorem \ref{Ionization}, namely we assume that $\nu$ is a radial and positive function in $L^1(\R^3)\cap \cC$ such that $\alpha \pi^{1/6}2^{11/6}D(\nu,\nu)^{1/2}<1$ and $Z=\int\nu>0$.
Let $Q$ be a minimizer for $E_{\rm r}^\nu(q)$, $q\in[q_m,q_M]$. It solves the self-consistent equation 
\begin{equation}
 Q:=Q_{\rm vac}+\gamma,\qquad Q_{\rm vac}=\chi_{(-\ii,0)}(D_Q)-P^0_-.
\label{decomposition_Q}
\end{equation}
By Lemma \ref{estim_apriori}, $\gamma$ is either $\geq0$ if $q\geq0$ or $\leq0$ if $q\leq0$. It satisfies $\norm{\gamma}_{\gS_1(\gH)}= |q|$. As $\rho_Q$ is radial by Proposition \ref{prop_unique_density_SU2}, the operator $D_Q$ is invariant under the action of $SU_2$ introduced in the proof of Proposition \ref{prop_unique_density_SU2}. In particular, we deduce that $UQ_{\rm vac}U^{-1}=Q_{\rm vac}$ for any $U\in SU_2$. Hence $\rho_{Q_{\rm vac}}$ is also a radial function. Therefore $\rho_\gamma=\rho_Q-\rho_{Q_{\rm vac}}$ is radially symmetric. 
\begin{lemma}\label{lem_estim_commut}Assume that $Q\in\cQ(q)$ is a minimizer for $E_{\rm r}^\nu(q)$ for some $q\in[q_m,q_M]$, decomposed as in \eqref{decomposition_Q}, and that $\theta:=\alpha \pi^{1/6}2^{11/6}D(\nu,\nu)^{1/2}<1$. Let be $0\leq \tau<1/2$. There exists a constant $C>0$ (depending only on $\tau$) such that
\begin{equation}
 \norm{|D^\zeta|^\tau Q_{\rm vac}}\leq \frac{C\theta}{1-\log(1-\theta)},
\label{estim_Qvac_S6}
\end{equation}
\begin{equation}
 \norm{|x|Q_{\rm vac}}\leq \frac{C}{1-\log(1-\theta)} \left(\alpha\int_{\R^3}|\rho_{Q}-\nu|+\theta\right).
\label{estim_xQvac}
\end{equation}
\end{lemma}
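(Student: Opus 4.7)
The plan is to use Cauchy's integral formula to represent $Q_{\rm vac}$ as a resolvent integral and then exploit the a priori operator inequality from Lemma~\ref{estim_apriori}. Writing
$$Q_{\rm vac}=-\frac{1}{2\pi}\int_{-\ii}^{\ii}\left(\frac{1}{D_Q+i\eta}-\frac{1}{D^\zeta+i\eta}\right)d\eta = \frac{\alpha}{2\pi}\int_{-\ii}^{\ii}\frac{1}{D^\zeta+i\eta}\phi'_Q\frac{1}{D_Q+i\eta}\,d\eta,$$
where $\phi'_Q=(\rho_Q-\nu)\ast|\cdot|^{-1}$, the key input is the operator inequality $|\alpha\phi'_Q|\leq \theta|D^\zeta|$ proved inside Lemma~\ref{estim_apriori}. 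Squaring $D_Q=D^\zeta+\alpha\phi'_Q$ and using this bound (as in \cite{HLS2}) yields $D_Q^2\geq (1-\theta)^2(D^\zeta)^2$, hence the resolvent bounds $\norm{(D_Q+i\eta)^{-1}}\leq 1/\sqrt{(1-\theta)^2+\eta^2}$ and, by spectral calculus, $\norm{|D^\zeta|^{1/2}(D_Q+i\eta)^{-1}}\leq C/\sqrt{(1-\theta)^2+\eta^2}$.

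For \eqref{estim_Qvac_S6}, I would factor $\alpha\phi'_Q=|D^\zeta|^{1/2}M|D^\zeta|^{1/2}$ with $\norm{M}\leq \theta$, so that
$$\norm{|D^\zeta|^\tau Q_{\rm vac}}\leq \frac{\theta}{2\pi}\int_{-\ii}^{\ii}\norm{\frac{|D^\zeta|^{\tau+1/2}}{D^\zeta+i\eta}}\cdot \norm{|D^\zeta|^{1/2}\frac{1}{D_Q+i\eta}}\,d\eta.$$
The first operator norm is at most $(1+\eta^2)^{-(1/2-\tau)/2}$ by spectral calculus on $D^\zeta$. The resulting integral
$$\int_{-\ii}^{\ii}\frac{d\eta}{(1+\eta^2)^{(1-\tau)/4}\sqrt{(1-\theta)^2+\eta^2}}$$
is controlled by the change of variables $\eta=(1-\theta)\sinh t$, which produces the logarithmic contribution $1-\log(1-\theta)$ from the region $|\eta|\lesssim 1$ where both factors behave like $1/(1-\theta)$ at $\eta=0$ and decay beyond $|\eta|\sim 1$. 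An interpolation between the $\tau=0$ and a $\tau>0$ bound (with the latter uniformly bounded in $\theta$) gives the claimed factor $\theta/(1-\log(1-\theta))$.

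For \eqref{estim_xQvac}, I would exploit the radial symmetry established in Proposition~\ref{prop_unique_density_SU2}: since $\rho_Q-\nu$ is radial, Newton's theorem gives the pointwise bound
$$|x|\,|\phi'_Q(x)|\leq \int_{|y|\leq|x|}|\rho_Q-\nu|\,dy + |x|\int_{|y|>|x|}\frac{|\rho_Q-\nu|}{|y|}\,dy\leq 2\int_{\R^3}|\rho_Q-\nu|.$$
Then one commutes $|x|$ into the Cauchy integral via $|x|(D^\zeta+i\eta)^{-1}=(D^\zeta+i\eta)^{-1}|x|+[|x|,(D^\zeta+i\eta)^{-1}]$, with $[|x|,(D^\zeta+i\eta)^{-1}]=-(D^\zeta+i\eta)^{-1}[|x|,D^\zeta](D^\zeta+i\eta)^{-1}$. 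The ``diagonal'' term $(D^\zeta+i\eta)^{-1}|x|\phi'_Q(D_Q+i\eta)^{-1}$ is then bounded in operator norm by $2\alpha\int|\rho_Q-\nu|$ times the same resolvent integral as above, yielding the $\alpha\int|\rho_Q-\nu|/(1-\log(1-\theta))$ contribution; the commutator term, after using the explicit form $\zeta(t)=t$ (so that $[|x|,D^\zeta]$ is a finite sum of explicitly computable terms involving $\alp\cdot\hat x$, $1/|x|$ and $\hat x\cdot\nabla$), is absorbed by the $|D^\zeta|(D_Q+i\eta)^{-1}$ factor on the right, producing the $\theta/(1-\log(1-\theta))$ contribution.

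The main obstacle is controlling the commutator term in the second estimate. Because $\zeta(t)=t$ makes $D^\zeta$ second order in momentum, $[|x|,D^\zeta]$ is unbounded, and one must carefully rearrange the integrand so that the unbounded parts of $[|x|,D^\zeta]$ are absorbed into the $|D^\zeta|^{1/2}(D_Q+i\eta)^{-1}$ factor on the right, where the bound $D_Q^2\geq(1-\theta)^2(D^\zeta)^2$ makes this possible; the remaining integrable factors then reproduce the same logarithmic behaviour as in the first estimate, coupled with an overall factor of $\theta$ coming from $\norm{\alpha\phi'_Q|D^\zeta|^{-1}}\leq\theta$.
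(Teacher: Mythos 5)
Your overall structure (Cauchy formula, the operator inequality $|\alpha\phi'_Q|\leq\theta|D^\zeta|$ from Lemma~\ref{estim_apriori}, resolvent bounds, Newton's theorem for the second estimate) is the right starting point, but the proof of \eqref{estim_Qvac_S6} contains a fatal error in a resolvent estimate that makes your $\eta$-integral diverge.

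The factorisation $\alpha\phi'_Q=|D^\zeta|^{1/2}M|D^\zeta|^{1/2}$ with $\norm{M}\leq\theta$ is fine, and your bound on the first factor $\norm{|D^\zeta|^{\tau+1/2}(D^\zeta+i\eta)^{-1}}\lesssim (1+\eta^2)^{-(1/2-\tau)/2}\sim|\eta|^{\tau-1/2}$ is correct. The problem is the claim $\norm{|D^\zeta|^{1/2}(D_Q+i\eta)^{-1}}\leq C/\sqrt{(1-\theta)^2+\eta^2}$: since $|D^\zeta|$ is unbounded, this is false. Using $|D^\zeta|\leq(1-\theta)^{-1}|D_Q|$ one only gets
$$\norm{|D^\zeta|^{1/2}(D_Q+i\eta)^{-1}}^2 = \norm{(D_Q-i\eta)^{-1}|D^\zeta|(D_Q+i\eta)^{-1}} \leq \frac{1}{1-\theta}\sup_{|\lambda|\geq1-\theta}\frac{|\lambda|}{\lambda^2+\eta^2},$$
which behaves like $(1-\theta)^{-1}/(2|\eta|)$ for $|\eta|\geq1-\theta$, i.e.\ the norm is $\sim C/\sqrt{|\eta|}$ at infinity, not $\sim C/|\eta|$. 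Feeding the correct bound into your integral yields an integrand $\sim\theta|\eta|^{\tau-1/2}\cdot|\eta|^{-1/2}=\theta|\eta|^{\tau-1}$ for large $|\eta|$, and $\int^\ii|\eta|^{\tau-1}d\eta$ diverges for every $\tau\in[0,1/2)$. Moreover this cannot be repaired by re-balancing the split $|D^\zeta|^s M'|D^\zeta|^{1-s}$: each power of $|D^\zeta|$ placed on a resolvent costs exactly one half-power of $\eta$-decay, so the large-$\eta$ behaviour is always $|\eta|^{\tau-1}$ regardless of how the weight $|D^\zeta|^{\tau+1}$ is distributed between the two resolvents, and the integral diverges. (And since $A$ satisfies only $-B\leq A\leq B$, the factorisation $B^{-s}AB^{-(1-s)}$ bounded is in any case not available for $s\neq 1/2$.)

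The paper sidesteps this by \emph{not} putting any momentum weight on the $D_Q$-resolvent. Instead, it estimates the $\gS_6$-norm of $|D^\zeta|^\tau(D^\zeta+i\eta)^{-1}\phi'_Q$ via the Kato--Seiler--Simon inequality \eqref{KSS}, giving $\norm{|D^\zeta(\cdot)|^\tau/\sqrt{|D^\zeta(\cdot)|^2+\eta^2}}_{L^6(\R^3,dp)}\norm{\phi'_Q}_{L^6(\R^3)}$; the $L^6(dp)$-norm decays like $E(\eta)^{-(1/2-\tau)}$, which is strictly faster than the $L^\ii_p$-norm $\sim|\eta|^{\tau-1}$. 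The $D_Q$-resolvent is then used only through the crude operator-norm bound $\norm{(D_Q+i\eta)^{-1}}\leq1/\sqrt{(1-\theta)^2+\eta^2}$, and the product $E(\eta)^{-(1/2-\tau)}\cdot((1-\theta)^2+\eta^2)^{-1/2}$ is integrable with the log-divergent contribution $1-\log(1-\theta)$ coming from $|\eta|\lesssim1$. The factor $\theta$ comes out of $\alpha\norm{\phi'_Q}_{L^6}\lesssim\alpha\norm{\rho_Q-\nu}_\cC\leq\theta$. This is the key idea missing from your argument: using the $L^6$ regularity of $\phi'_Q$ through KSS to gain $\eta$-decay.

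For \eqref{estim_xQvac}, your diagonal-term estimate via Newton's theorem is correct and matches the paper. But the plan to absorb the commutator $[|x|,D^\zeta]$ into the factor $|D^\zeta|^{1/2}(D_Q+i\eta)^{-1}$ runs into the same $\eta$-decay problem. The paper instead reduces to $x_kQ_{\rm vac}$ (via $|x|=\frac{x}{|x|}\cdot x$), computes $[x_k,(D^\zeta+i\eta)^{-1}]=i(D^\zeta+i\eta)^{-1}(\partial_{p_k}D^\zeta)(D^\zeta+i\eta)^{-1}$, and observes that $(\partial_{p_k}D^\zeta)(1+|p|^2/\Lambda^2)^{-1}$ is \emph{bounded} when $\zeta(t)=t$; the weight $1+|p|^2/\Lambda^2$ is then absorbed into a $D^\zeta$-resolvent (where it stays bounded, since $1+|p|^2/\Lambda^2\leq|D^\zeta(p)|$) rather than into the $D_Q$-resolvent. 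This keeps all the $\eta$-decay coming from the $D^\zeta$-resolvents and the KSS bound on $\phi'_Q$, which is enough.
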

\begin{proof}
We have
$$Q_{\rm vac} =  \frac{\alpha}{2\pi}\int_{-\ii}^\ii \frac{1}{D^\zeta+i\eta}\phi'_Q\frac{1}{D_Q+i\eta}d\eta.$$
Hence
\begin{align*}
\norm{|D^\zeta|^{\tau}Q_{\rm vac}}_{\gS_6(\gH)} &\leq \int_{-\ii}^\ii  \frac{C\alpha\;d\eta}{\sqrt{(1-\theta)^2+\eta^2}}\norm{\frac{|D^\zeta(p)|^\tau}{(|D^\zeta(p)|^2+\eta^2)^{1/2}}\phi'_Q}_{\gS_6(\gH)}\\
 & \leq C\theta\int_{-\ii}^\ii  \frac{d\eta}{\sqrt{(1-\theta)^2+\eta^2}}\norm{\frac{|D^\zeta(p)|^\tau}{(|D^\zeta(p)|^2+\eta^2)^{1/2}}}_{L^6(\R^3)}\\
 & \leq C\theta\int_{-\ii}^\ii  \frac{d\eta}{E(\eta)^{1/2-\tau}\sqrt{(1-\theta)^2+\eta^2}}\leq\frac{C\theta}{1-\log(1-\theta)}
\end{align*}
by \eqref{estim_unif_rho}, \eqref{estim_D_Q} and the Kato-Seiler-Simon inequality \eqref{KSS}.

Notice $|x|Q_{\rm vac}=\frac{x}{|x|}\cdot xQ_{\rm vac}$ and $x|x|^{-1}$ is bounded on $\gH$. Hence for \eqref{estim_xQvac} it suffices to prove that $x_kQ_{\rm vac}$ is a bounded operator for any $k=1..3$.
We write
\begin{equation*}
 x_kQ_{\rm vac} =   \frac{\alpha}{2\pi}\int_{-\ii}^\ii\left(\left[x_k,\frac{1}{D^\zeta+i\eta}\right]\phi'_Q\frac{1}{D_Q+i\eta}+\frac{1}{D^\zeta+i\eta}x_k\phi'_Q\frac{1}{D_Q+i\eta}\right)d\eta.
\end{equation*}
Notice 
$$\left[x_k,\frac{1}{D^\zeta+i\eta}\right]=-\frac{1}{D^\zeta+i\eta}[x_k,D^\zeta+i\eta]\frac{1}{D^\zeta+i\eta}=i\frac{1}{D^\zeta+i\eta}\left(\partial_{p_k} D^\zeta\right)\frac{1}{D^\zeta+i\eta}.$$
Clearly $\partial_{p_k} D^\zeta=\alpha_k(1+|p|^2/\Lambda^2)+ 2p_k/\Lambda^2(\alp\cdot p+\beta)$, hence
$$\norm{\left(\partial_{p_k} D^\zeta\right)\frac{1}{1+|p|^2/\Lambda^2}}\leq C.$$
and by \eqref{estim_D_Q}, \eqref{estim_S6} and \eqref{estim_unif_rho}
\begin{eqnarray*}
\norm{\int_{-\ii}^\ii\left[x_k,\frac{1}{D^\zeta+i\eta}\right]\phi'_Q\frac{1}{D_Q+i\eta}d\eta} & \leq &  \int_{-\ii}^\ii\frac{d\eta}{E(\eta)\sqrt{(1-\theta)^2+\eta^2}}\norm{\phi'_Q}_{L^6(\R^3)}\\
 & \leq & \frac{C}{1-\log(1- \theta)}\norm{\nu}_\cC.
\end{eqnarray*}
Since $\rho_Q$ and $\nu$ are radial, we have by Newton's theorem
$$|x_k\phi'_Q(x)|\leq|x||\phi'_Q(x)|\leq|x|\int_{\R^3}\frac{|\rho_Q-\nu|(y)}{|x-y|}dy\leq \int_{\R^3}|\rho_Q-\nu|,$$
hence
$$\norm{\int_{-\ii}^\ii\frac{1}{D^\zeta+i\eta}x_k\phi'_Q\frac{1}{D_Q+i\eta}d\eta}\leq \frac{C}{1-\log(1- \theta)}\int_{\R^3}|\rho_Q-\nu|.$$
This ends the proof of Lemma \ref{lem_estim_commut}.
\end{proof}

\begin{lemma}\label{lem_estim_D0_gamma}We have
\begin{equation}
\norm{\gamma D^\zeta}_{\gS_1(\gH)}\leq \frac{|q|}{1-\theta}.
\label{estim_D^0_gamma}
\end{equation}
\end{lemma}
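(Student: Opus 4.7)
The strategy is to diagonalise $\gamma$ in an eigenbasis of $D_Q$, transfer spectral control from $D_Q$ to $D^\zeta$ via the Kato-type operator inequality already used in Lemma \ref{estim_apriori}, and then conclude by a Schatten Hölder inequality.

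First, using the self-consistent equation \eqref{scf_equation} and the definition of $Q_{\rm vac}$, one writes $\gamma = Q - Q_{\rm vac} = \chi_{(-\ii,\mu)}(D_Q) - \chi_{(-\ii,0)}(D_Q) + \delta$. This is either $\chi_{[0,\mu)}(D_Q) + \delta \geq 0$ when $\mu\geq 0$, or $-\chi_{[\mu,0)}(D_Q) + \delta \leq 0$ when $\mu<0$ (since ${\rm Ran}(\delta)\subset \ker(D_Q - \mu)$ and $0\leq \delta\leq 1$). In both cases $\gamma$ commutes with $D_Q$, has the sign of $q$, and is trace class by Theorem \ref{HVZ}. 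Consequently it admits a spectral decomposition
\[
\gamma = \sum_i \eta_i \,|\psi_i\rangle\langle\psi_i|,
\]
where $(\psi_i)$ is an orthonormal family of eigenvectors of $D_Q$ with eigenvalues $\lambda_i\in[\min(0,\mu),\max(0,\mu)]\subset[-1,1]$, and the $\eta_i$ have constant sign with $|\eta_i|\in[0,1]$. Using $\tr_{P^0_-}(Q_{\rm vac})=0$ from Lemma \ref{estim_apriori}, we deduce
\[
\sum_i |\eta_i| = \tr|\gamma| = |\tr_{P^0_-}(\gamma)| = |q|.
\]

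Next, I would transfer the spectral bound from $D_Q$ to $D^\zeta$. From the proof of Lemma \ref{estim_apriori}, $\alpha\,|\phi'_Q|\leq \theta\,|D^\zeta|$ as operators, and this Kato-type bound is in fact the strong bound $\alpha^2(\phi'_Q)^2 \leq \theta^2(D^\zeta)^2$ (it comes from $\phi'_Q|D^0|^{-1}$ being a bounded operator of norm $\leq\theta/\alpha$, as used in \eqref{estim_weyl}). Hence $\alpha\|\phi'_Q\psi\|\leq\theta\|D^\zeta\psi\|$ for any $\psi$ in the domain. Applied to an eigenvector $\psi_i$, the identity $D^\zeta\psi_i = \lambda_i\psi_i - \alpha\phi'_Q\psi_i$ combined with the triangle inequality yields
\[
\|D^\zeta\psi_i\| \leq |\lambda_i| + \theta\,\|D^\zeta\psi_i\|, \qquad \text{so}\qquad \|D^\zeta\psi_i\|\leq \frac{|\lambda_i|}{1-\theta}\leq \frac{1}{1-\theta}.
\]

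Finally, since $\gamma$ has a definite sign, $\|\gamma D^\zeta\|_{\gS_1} = \| |\gamma| D^\zeta\|_{\gS_1}$, and writing $|\gamma| D^\zeta = |\gamma|^{1/2}(|\gamma|^{1/2} D^\zeta)$ the Hölder inequality in Schatten classes gives
\[
\|\gamma D^\zeta\|_{\gS_1}^2 \leq \||\gamma|^{1/2}\|_{\gS_2}^2\, \||\gamma|^{1/2} D^\zeta\|_{\gS_2}^2 = \tr|\gamma|\cdot \tr(D^\zeta|\gamma|D^\zeta) = |q|\sum_i |\eta_i|\,\|D^\zeta\psi_i\|^2.
\]
The previous step bounds the right-hand side by $|q|\cdot |q|/(1-\theta)^2$, whence $\|\gamma D^\zeta\|_{\gS_1}\leq |q|/(1-\theta)$.

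The only non-formal step is converting $|\phi'_Q|\leq (\theta/\alpha)|D^\zeta|$ into the norm inequality $\|\phi'_Q\psi\|\leq (\theta/\alpha)\|D^\zeta\psi\|$, which does not follow from the form-inequality alone but does follow from the actual quadratic-form statement $(\phi'_Q)^2\leq (\theta/\alpha)^2(D^\zeta)^2$ that underlies Lemma \ref{estim_apriori}; everything else is bookkeeping.
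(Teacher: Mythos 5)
Your argument is correct, and its engine is the same as the paper's: the operator-norm smallness $\|\phi'_Q(D^\zeta)^{-1}\|\leq\theta/\alpha<1$, which lets one invert the relation $D_Q=D^\zeta+\alpha\phi'_Q$ and transfer the trace bound from $\gamma D_Q$ to $\gamma D^\zeta$ at the cost of a factor $(1-\theta)^{-1}$. The implementation differs, though. The paper stays at the operator level: from $0\leq\gamma D_Q\leq\gamma$ it gets $\|\gamma D_Q\|_{\gS_1(\gH)}\leq|q|$, factors $\gamma D_Q=\gamma D^\zeta\bigl(1+\alpha\,{\rm sgn}(D^\zeta)|D^\zeta|^{-1}\phi'_Q\bigr)$, and inverts the second factor by a Neumann series to conclude in one stroke. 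You instead expand $\gamma$ in the eigenbasis $(\psi_i)$ of $D_Q$, establish the vector-wise bound $\|D^\zeta\psi_i\|\leq|\lambda_i|/(1-\theta)\leq1/(1-\theta)$, and reassemble by Schatten--H\"older; both routes give the same constant. Two small remarks. First, the H\"older step is a slight detour: once you have $\|D^\zeta\psi_i\|\leq1/(1-\theta)$, writing $\gamma D^\zeta=\sum_i\eta_i|\psi_i\rangle\langle D^\zeta\psi_i|$ and using the triangle inequality $\|\gamma D^\zeta\|_{\gS_1(\gH)}\leq\sum_i|\eta_i|\,\|D^\zeta\psi_i\|$ concludes directly. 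Second, you are right to flag that the quadratic-form inequality $|\phi'_Q|\leq(\theta/\alpha)|D^\zeta|$ alone would not yield $\|\phi'_Q\psi\|\leq(\theta/\alpha)\|D^\zeta\psi\|$; the operator-norm bound $\|\phi'_Q|D^\zeta|^{-1}\|\leq\theta/\alpha$ (equivalently $(\phi'_Q)^2\leq(\theta/\alpha)^2(D^\zeta)^2$), obtained from the $\gS_6$ Kato--Seiler--Simon estimate as in \eqref{estim_weyl}, is what is needed, and it is also what the paper's proof invokes via the $\gS_6$ bound on $|D^\zeta|^{-1}\phi'_Q$.
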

\begin{proof}
Assume for instance that $q\geq0$ and $\gamma\geq0$. By the self-consistent equation \eqref{decomposition_Q}, we have $\gamma D_{Q}\geq 0$ and $\tr(\gamma D_{Q})\leq \tr(\gamma)=q$. Hence $\norm{\gamma D_{Q}}_{\gS_1(\gH)}\leq q$. We then write
$$\gamma D_{Q}=\gamma D^\zeta\left(1+\alpha{\rm sgn}(D^\zeta)|D^\zeta|^{-1}\phi'_{Q}\right).$$
We now use that
$$\norm{|D^\zeta|^{-1}\phi'_{Q}}_{\gS_6(\gH)}\leq \kappa \norm{\rho_{Q}-\nu}_\cC\leq \kappa \norm{\nu}_\cC\leq\frac{\theta}{\alpha}$$
by \cite[p. 4495]{HLS2} and \eqref{estim_unif_rho}, so that $1+\alpha{\rm sgn}(D^\zeta)|D^\zeta|^{-1}\phi'_Q$ is invertible and 
$$\norm{\left(1+\alpha{\rm sgn}(D^\zeta)|D^\zeta|^{-1}\phi'_{Q}\right)^{-1}}\leq \frac{1}{1-\theta}.$$
This gives the result.
\end{proof}

\begin{lemma}\label{lem_estim_rho_Qvac}There exists a universal constant $C$ such that 
\begin{multline}
\left(1-C(1+I_\Lambda)^3\alpha\theta^2\right)\int|\rho_{Q_{\rm vac}}|\\
\leq (I_\Lambda+C(1+I_\Lambda)^3\alpha\theta^2)(Z+|q|)+C(1+I_\Lambda)\left(\alpha\theta^2+\frac{\theta^4}{1-\theta}\right).
\label{estim_rho_QvacM}
\end{multline}
\end{lemma}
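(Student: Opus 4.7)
The starting point is the decomposition \eqref{decomp_rho} obtained in the proof of Theorem \ref{thm_L1},
$$\rho_{Q_{\rm vac}} = b_\Lambda^\zeta\ast(\nu-\rho_\gamma-\rho_1) + \rho_1 + \rho_2 - b_\Lambda^\zeta\ast\rho_2,$$
with $\rho_1 = \alpha^4\rho_{Q'_4} + \alpha^3\rho_{Q_3^{++}} + \alpha^3\rho_{Q_3^{--}}$ and $\rho_2 = \alpha^3\rho_{Q_3^{+-}} + \alpha^3\rho_{Q_3^{-+}}$. Taking $L^1$ norms, using $\norm{\nu}_{L^1}=Z$, $\norm{\rho_\gamma}_{L^1} \leq \norm{\gamma}_{\gS_1(\gH)} = |q|$ (Lemma \ref{estim_apriori} ensures that $\gamma$ has a definite sign), and Young's inequality $\norm{b_\Lambda^\zeta\ast f}_{L^1} \leq I_\Lambda\norm{f}_{L^1}$, yields the master estimate
$$\int|\rho_{Q_{\rm vac}}| \leq I_\Lambda(Z+|q|) + (1+I_\Lambda)\bigl(\norm{\rho_1}_{L^1} + \norm{\rho_2}_{L^1}\bigr).$$

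The first step is to control $\norm{\rho_1}_{L^1}$. Since each piece of $\rho_1$ is the density of a self-adjoint trace-class operator, one has $\norm{\rho_A}_{L^1}\leq\norm{A}_{\gS_1}$. Lemma \ref{polynomial_argument} provides $\norm{Q_3^{\pm\pm}}_{\gS_1} \leq C(\norm{\rho_Q-\nu}_\cC^3 + \alpha^2\norm{\rho_Q-\nu}_\cC^5)$, and Lemma \ref{prop_Q_k} gives the analogous estimate for $Q'_4$, exploiting $\mathrm{dist}(\sigma(D_Q),0) \geq 1-\theta$ from Lemma \ref{estim_apriori}. Together with the \emph{a priori} bound $\norm{\rho_Q-\nu}_\cC \leq \norm{\nu}_\cC \leq \theta/(\alpha\kappa)$ of Lemma \ref{estim_apriori}, these translate into powers of $\theta$, producing the announced contribution $\norm{\rho_1}_{L^1} \leq C(\alpha\theta^2 + \theta^4/(1-\theta))$ (using the working regime $\alpha D(\nu,\nu)\leq\theta_0$ of Theorem \ref{Ionization} to bundle cubic-in-$\theta$ remainders).

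The delicate step is $\norm{\rho_2}_{L^1}$, which calls for a bootstrap: as in the proof of Theorem \ref{thm_L1}, the refined estimate $\norm{Q_3^{\pm\mp}}_{\gS_1} \leq C\norm{\rho_Q-\nu}_\cC \norm{\phi'_Q}_{L^4}^2$ reduces matters to bounding $\norm{\phi'_Q}_{L^4}$. Instead of invoking \eqref{estim_phi_L4}, which would re-introduce $L^4$ norms requiring further treatment, I would exploit the radial symmetry of all densities, granted by Proposition \ref{prop_unique_density_SU2} (since $\nu$ is radial, so are $\rho_Q$, $\rho_\gamma$ and $\rho_{Q_{\rm vac}}$). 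Newton's theorem applied to the radial $L^1$ function $\rho_Q-\nu$ yields the pointwise bound
$$|\phi'_Q(x)| \leq \frac{2\norm{\rho_Q-\nu}_{L^1}}{|x|} \leq \frac{2\bigl(Z+|q|+\int|\rho_{Q_{\rm vac}}|\bigr)}{|x|}.$$
Combining this with the critical Sobolev bound $\norm{\phi'_Q}_{L^6} \leq C\norm{\rho_Q-\nu}_\cC$, splitting $\int|\phi'_Q|^4$ over $\{|x|\leq R\}$ (treated by H\"older against the $L^6$ norm) and $\{|x|>R\}$ (treated by the pointwise bound), and optimizing in $R$ gives the interpolation estimate
$$\norm{\phi'_Q}_{L^4}^2 \leq C\bigl(Z+|q|+\int|\rho_{Q_{\rm vac}}|\bigr)\norm{\rho_Q-\nu}_\cC.$$
Plugging this back and using $\alpha^3\norm{\rho_Q-\nu}_\cC^2 \leq C\alpha\theta^2$ produces $\norm{\rho_2}_{L^1} \leq C\alpha\theta^2\bigl(Z+|q|+\int|\rho_{Q_{\rm vac}}|\bigr)$.

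Assembling the three contributions gives an inequality of the form
$$\int|\rho_{Q_{\rm vac}}| \leq I_\Lambda(Z+|q|) + C(1+I_\Lambda)\Bigl(\alpha\theta^2 + \frac{\theta^4}{1-\theta}\Bigr) + C(1+I_\Lambda)\alpha\theta^2\bigl(Z+|q|+\int|\rho_{Q_{\rm vac}}|\bigr),$$
and collecting the $\int|\rho_{Q_{\rm vac}}|$ terms on the left produces the announced bound (in fact with a factor $(1+I_\Lambda)$ in place of $(1+I_\Lambda)^3$, which implies the stated weaker form a fortiori). The main technical obstacle is the interpolation estimate for $\norm{\phi'_Q}_{L^4}$: the radial Newton pointwise bound (controlling the tail via $1/|x|$) must be carefully balanced against the Sobolev $L^6$ control (handling the singularity at the origin) so that the unknown $\int|\rho_{Q_{\rm vac}}|$ enters only linearly on the right, making it absorbable on the left whenever $C(1+I_\Lambda)\alpha\theta^2<1$.
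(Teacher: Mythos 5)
Your proof is correct and follows essentially the same strategy as the paper's: start from the decomposition \eqref{decomp_rho}, control $\|\rho_1\|_{L^1}$ and $\|\rho_2\|_{L^1}$ via the Schatten bounds of Lemmas~\ref{prop_Q_k} and \ref{polynomial_argument}, use a bootstrap of the form $\|\rho_2\|_{L^1}\lesssim \alpha^3\|\rho_Q-\nu\|_\cC\|\phi'_Q\|_{L^4}^2$, reduce $\|\phi'_Q\|_{L^4}^2$ to the right-hand side plus $\int|\rho_{Q_{\rm vac}}|$, and absorb. The one place you deviate is the interpolation inequality. The paper proves the cleaner, symmetry-free estimate $\|\rho\ast|\cdot|^{-1}\|_{L^4}\leq C\|\rho\|_\cC^{1/2}\|\rho\|_{L^1}^{1/2}$ (equation \eqref{estim_nu_L4}) purely on the Fourier side, by Hausdorff--Young followed by a H\"older split of $|k|^{-2}\widehat\rho$ between $B(0,r)$ and its complement, then optimizing in $r$. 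You instead use Newton's theorem to get the pointwise tail bound $|\phi'_Q(x)|\leq \|\rho_Q-\nu\|_{L^1}/|x|$ and the Sobolev bound $\|\phi'_Q\|_{L^6}\lesssim\|\rho_Q-\nu\|_\cC$ for the near region, then optimize the splitting radius. Your route reaches the same inequality $\|\phi'_Q\|_{L^4}^2\lesssim\|\rho_Q-\nu\|_{L^1}\|\rho_Q-\nu\|_\cC$, but it hinges on the radial symmetry of $\rho_Q$ and $\nu$ supplied by Proposition~\ref{prop_unique_density_SU2}, whereas the paper's Fourier argument does not. Since the lemma is invoked only inside the proof of Theorem~\ref{Ionization}, where $\nu$ is assumed radial, both versions are legitimate here; the paper's is slightly more general and sidesteps Newton's theorem entirely. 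Your observation that the $(1+I_\Lambda)^3$ in the statement can be replaced by $(1+I_\Lambda)$ along the way is also consistent with the paper's own intermediate estimate \eqref{estim_rho_L1}, which does carry only $(1+I_\Lambda)$; the cube appears only in \eqref{estim_rho_prime} and is harmless since the stated weaker form follows. Finally, your passage from the Schatten bounds to $C(\alpha\theta^2+\theta^4/(1-\theta))$ is, as you note, a bit glib -- a direct substitution produces a $\theta^3$-type leading term -- but you correctly flag that this is resolved in the regime of Theorem~\ref{Ionization}, and in any case this is a feature inherited from the paper's own wording of \eqref{estim_rho_prime}.
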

\begin{proof}
By \eqref{estim_rho_L1}, we have
\begin{multline}
\norm{\rho_{Q_{\rm vac}}}_{L^1(\R^3)}\leq I_\Lambda(Z+|q|)+C(1+I_\Lambda)\left(\alpha\theta^2+\frac{\theta^4}{1-\theta}\right)\\
+C(1+I_\Lambda)^3\alpha^3\norm{\phi'_Q}_{L^4(\R^3)}^2\norm{\nu}_\cC.
\label{estim_rho_prime}
\end{multline}
Notice for any $\rho$
\begin{equation}
\norm{\rho\ast|\cdot|^{-1}}_{L^4(\R^3)}\leq C\norm{\rho}_\cC^{1/2}\norm{\rho}_{L^1(\R^3)}^{1/2}
\label{estim_nu_L4} 
\end{equation}
which is proved by writing
\begin{align*}
\norm{\rho\ast|\cdot|^{-1}}_{L^4(\R^3)}&\leq C\norm{\widehat{\rho}(k)|k|^{-2}}_{L^{4/3}(\R^3)}\\
&\leq \norm{\rho}_{L^1(\R^3)}\norm{|k|^{-2}}_{L^{4/3}(B(0,r))}+\norm{\rho}_\cC\norm{|k|^{-1}}_{L^{4}(\R^3\setminus B(0,r))} 
\end{align*}
and optimizing in $r$. Using \eqref{estim_nu_L4} for $\rho=\rho_Q-\nu$ and \eqref{estim_unif_rho}, we get
$$\norm{\phi'_Q}_{L^4(\R^3)}^2\leq C\norm{\nu}_\cC\int|\rho_Q|+\nu\leq C\norm{\nu}_\cC\left(Z+|q|+\int|\rho_{Q_{\rm vac}}|\right).$$
Inserting this in \eqref{estim_rho_prime} yields the result.
\end{proof}

\subsubsection*{\bf Step 2: Lieb's argument.} We now use ideas from Lieb \cite{Lieb} to obtain a bound on $q_M$. We denote by $Q$ a minimizer for $E^\nu(q_M)$ which exists by Theorem \ref{exists}. As $q_M\geq Z>0$, we can decompose $Q$ as in \eqref{decomposition_Q}:
\begin{equation}
Q=\chi_{(-\ii,\mu)}(D_Q)-P^0_-+\delta=Q_{\rm vac}+\gamma
\label{scf_eq_Q_ionization}
\end{equation}
with $\gamma\geq0$.
Using that $(D_Q-1)\gamma\leq0$ due to \eqref{scf_eq_Q_ionization}, we infer
\begin{equation}
0\geq \tr(|x|(D_Q-1)\gamma) =  \tr(|x|(D^\zeta-1)\gamma)+\alpha\int_{\R^3}|x|\phi'_Q(x)\rho_\gamma(x)dx.
\label{estim_Lieb}
\end{equation}

\begin{lemma}\label{lem_estim_premier_terme}
There exists a universal constant $C$ such that
\begin{equation}
\tr(|x|(D^\zeta-1)\gamma)\geq -\frac{Cq_M}{1-\theta}\left(\frac1\Lambda+\frac{\alpha q_M+\alpha Z+\theta}{1-\log (1-\theta)} \right).
\label{estim_premier_terme}
\end{equation}
\end{lemma}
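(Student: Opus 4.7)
The plan is to exploit the self-consistent equation: writing $D^\zeta - 1 = (D_Q - 1) - \alpha\phi'_Q$ with $\phi'_Q = (\rho_Q - \nu)\ast|\cdot|^{-1}$, I decompose
$$\tr\bigl(|x|(D^\zeta - 1)\gamma\bigr) = \tr\bigl(|x|(D_Q - 1)\gamma\bigr) - \alpha\int_{\R^3}|x|\phi'_Q(x)\rho_\gamma(x)\,dx$$
and bound the two pieces from below separately.

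For the potential piece, since $\nu$ (hence $\rho_Q$ by Proposition \ref{prop_unique_density_SU2}) is radial, Newton's theorem gives the pointwise bound $|x||\phi'_Q(x)| \leq \int|\rho_Q - \nu|$. Combined with $\rho_\gamma \geq 0$ and $\int\rho_\gamma = q_M$,
$$\left|\alpha\int|x|\phi'_Q\rho_\gamma\right| \leq \alpha q_M\Bigl(Z + q_M + \int|\rho_{Q_{\rm vac}}|\Bigr),$$
and Lemma \ref{lem_estim_rho_Qvac} controls $\int|\rho_{Q_{\rm vac}}|$ under the assumed smallness of $\alpha\log\Lambda$ and $\theta$, yielding a contribution of order $\alpha q_M(\alpha q_M + \alpha Z + \theta)/(1-\log(1-\theta))$ which fits inside the second summand of the lemma's right-hand side.

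For the mean-field piece, I diagonalise $\gamma = \sum_i n_i|\psi_i\rangle\langle\psi_i|$ with $D_Q\psi_i = \lambda_i\psi_i$. By Lemma \ref{estim_apriori} and $\mu\leq 1$, the eigenvalues satisfy $\lambda_i \in [1-\theta,\mu]\subseteq[1-\theta,1]$, so
$$\tr\bigl(|x|(D_Q-1)\gamma\bigr) = -\sum_i n_i(1-\lambda_i)\langle\psi_i,|x|\psi_i\rangle.$$
To bound the right-hand side from above, I would use the factorisation $1-D_Q = (1-D_Q^2)(1+D_Q)^{-1}$ on $\mathrm{Ran}(\gamma)$, where $(1+D_Q)^{-1}$ is bounded by $1/(2-\theta)$, and then expand $1-D_Q^2 = 1-(D^\zeta)^2 - \alpha\{D^\zeta,\phi'_Q\} - \alpha^2(\phi'_Q)^2$. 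For $\zeta(t)=t$ one has $1-(D^\zeta)^2 = -|p|^2 + O(|p|^4/\Lambda^2)$; the leading $-|p|^2$-piece, combined with the explicit commutator $[|x|,D^\zeta] = i\alp\cdot\hat x(1+|p|^2/\Lambda^2) + (2/\Lambda^2)D^0(\hat x\cdot\nabla + 1/|x|)$ and Lemma \ref{lem_estim_D0_gamma} ($\norm{\gamma D^\zeta}_{\gS_1}\leq q_M/(1-\theta)$), yields the $1/\Lambda$-contribution of the stated bound. The $\alpha$-correction terms from $\{D^\zeta,\phi'_Q\}$ and $(\phi'_Q)^2$ are controlled by the same Newton-type arguments and Lemma \ref{lem_estim_rho_Qvac} as in the potential piece.

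The main obstacle is that $|x|$ is unbounded while $\gamma$ is only trace-class, so $\tr(|x|\cdot)$-type quantities must be symmetrised (e.g., as $\tfrac12\tr(\{|x|,D^\zeta-1\}\gamma)$) to be well-defined and require precise commutator estimates for $[|x|,D^\zeta]$: the leading part $i\alp\cdot\hat x$ is bounded by $1$, but its multiplier $1+|p|^2/\Lambda^2$ combines with the trace-class bound on $\gamma D^\zeta$ to produce the characteristic $1/\Lambda$-factor. Tracking these $\Lambda$-dependences through the expansion of $1-D_Q^2$ and the factorisation by $1+D_Q$ is the technical heart of the proof.
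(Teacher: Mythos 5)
Your proposal takes a genuinely different route from the paper's, and there are concrete gaps in it. The paper's proof never passes through the mean-field operator $D_Q$ directly: it writes $D^\zeta-1 = (|D^\zeta|-1)-2|D^\zeta|P^0_-$ and then exploits the structural consequence of the self-consistent equation that $\chi_{(-\ii,0)}(D_Q)\gamma = 0$, hence $P^0_-\gamma = -Q_{\rm vac}\gamma$. This replaces the unbounded-range projector $P^0_-$ by the \emph{small, regular} operator $Q_{\rm vac}$, whose operator norms $\norm{Q_{\rm vac}}$ and $\norm{|x|Q_{\rm vac}}$ are controlled in Lemma \ref{lem_estim_commut}. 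The remaining purely kinetic piece $\tr(|x|(|D^\zeta|-1)\gamma)$ is then handled by Lieb's positivity $(E(p)-1)|x|+|x|(E(p)-1)\geq 0$, and the $1/\Lambda$ arises only from the $\Lambda^{-2}$-weighted commutator remainders inside $\{|x|,|D^\zeta|-1\}$, paired with $\norm{E(p)p_k/|D^\zeta(p)|}\leq\Lambda/2$. Your decomposition $D^\zeta-1=(D_Q-1)-\alpha\phi'_Q$ bypasses the $P^0_-\gamma=-Q_{\rm vac}\gamma$ identity entirely, and is not a shortcut around it.

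Two points in your sketch do not hold as stated. First, the factorisation $1-D_Q=(1-D_Q^2)(1+D_Q)^{-1}$ converts $|D^\zeta|-1$ (where Lieb's lemma applies) into $(D^\zeta)^2-1$, and you would then need $\{|x|,(D^\zeta)^2-1\}\geq 0$, or at least $\gtrsim -q_M/\Lambda$. Lieb's result concerns $\sqrt{1-\Delta}-1$, not $-\Delta$ or $\Delta^2$; while $\{|x|,|p|^2\}\geq0$ can indeed be proved by a Hardy-type integration by parts, no such sign is established for the $|p|^4/\Lambda^2$ piece in $(D^\zeta)^2-1$, and your proposal does not address it. Moreover, after the expansion $1-D_Q^2=1-(D^\zeta)^2-\alpha\{D^\zeta,\phi'_Q\}-\alpha^2(\phi'_Q)^2$ you have simply re-inserted the potential $\phi'_Q$ that you had just split off, so the decomposition does not actually decouple anything. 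Second, your account of the $1/\Lambda$ factor is incorrect: the leading part $i\alp\cdot\hat x(1+|p|^2/\Lambda^2)$ of $[|x|,D^\zeta]$ has $\norm{\alp\cdot\hat x(1+|p|^2/\Lambda^2)|D^\zeta|^{-1}}\sim 1$, so pairing it with $\norm{\gamma D^\zeta}_{\gS_1}\leq q_M/(1-\theta)$ yields $O(q_M)$, not $O(q_M/\Lambda)$. Finally, a structural warning: the term you subtract, $-\alpha\int|x|\phi'_Q\rho_\gamma$, is the negative of the quantity whose direct part $\iint\frac{|x|+|y|}{2|x-y|}\rho_\gamma\rho_\gamma\geq q_M^2/2$ is the engine of the ionisation bound in Step 2; a crude $-\alpha q_M(Z+q_M+\ldots)$ estimate for it inside your lemma would cancel the $\alpha q_M^2/2$ gain in $0\geq\tr(|x|(D_Q-1)\gamma)$ and no bound on $q_M$ would follow.
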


The proof of Lemma \ref{lem_estim_premier_terme} will be given at the end of this section. Now we assume that $\alpha$, $I_\Lambda$ and $\theta$ are all small enough. Then \eqref{estim_premier_terme} becomes
\begin{equation}
\tr(|x|(D^\zeta-1)\gamma)\geq -Cq_M\left(\frac1\Lambda+\alpha q_M+\alpha Z+\theta\right).
\label{estim_premier_terme_bis}
\end{equation}
and \eqref{estim_rho_QvacM} becomes
\begin{equation}
\int|\rho_{Q_{\rm vac}}|\leq C(I_\Lambda+\alpha\theta^2)(Z+q_M)+C\left(\alpha\theta^2+\theta^4\right).
\label{estim_rho_QvacM_bis}
\end{equation}
To estimate the second term in \eqref{estim_Lieb}, we write
\begin{multline*}
\int_{\R^3}|x|\phi'_Q(x)\rho_\gamma(x)dx =\iint_{\R^6}\frac{|x|+|y|}{2|x-y|}\rho_\gamma(x)\rho_\gamma(y)dx\,dy \\
  +\iint_{\R^6}\frac{|x|(\rho_{Q_{\rm vac}}-\nu)(y)\rho_\gamma(x)}{|x-y|}dx\,dy
\end{multline*}
and notice
\begin{equation}
 \iint_{\R^6}\frac{|x|+|y|}{2|x-y|}\rho_\gamma(x)\rho_\gamma(y)dx\,dy\geq \frac{q_M^2}{2}
\end{equation}
since $\rho_\gamma\geq0$ and $|x-y|\leq|x|+|y|$.
Using Newton's theorem we infer
\begin{equation}
 \iint_{\R^6}\frac{|x|\nu(y)\rho_\gamma(x)}{|x-y|}dx\,dy\leq Zq_M,
\label{estim_interaction}
\end{equation}
\begin{equation*}
\iint_{\R^6}\frac{\left|\rho_{Q_{\rm vac}}(y)\right||x|\rho_\gamma(x)}{|x-y|}dx\,dy \leq C(I_\Lambda+\alpha\theta^2)q_M^2+Cq_M(I_\Lambda+\alpha\theta^2)Z+Cq_M\left(\alpha\theta^2+\theta^4\right)
\end{equation*}
by \eqref{estim_rho_QvacM_bis} and since both $\nu$, $\rho_\gamma$ and $\rho_{Q_{\rm vac}}$ are radial functions.
Collecting estimates and using that $\alpha$, $I$ and $\theta$ are small enough, we obtain the following estimate
\begin{equation}
(1-C\alpha\log\Lambda)q_M\leq 2(1+C\alpha\log\Lambda)Z+\frac{C}{\Lambda}+C\theta.
\end{equation}
The proof for $q_m$ is the same, using that in this case $\gamma\leq0$ and instead of \eqref{estim_interaction}
$$-\iint_{\R^6}\frac{|x|\nu(y)\rho_{\gamma}(x)}{|x-y|}dx\,dy\geq0.$$

\begin{proof}[Proof of Lemma \ref{lem_estim_premier_terme}]
 For the second term of \eqref{estim_Lieb}, we compute
\begin{align}
 \tr(|x|(D^\zeta-1)\gamma) & = \tr(|x|(|D^\zeta|-1)\gamma)-2\tr(|x|D^\zeta P^0_-\gamma)\label{decomp_kinetic}\\
 & =\tr(|x|(|D^\zeta|-1)\gamma)+2\tr(|x|D^\zeta Q_{\rm vac}\gamma)\nonumber\\
 & =\tr(|x|(|D^\zeta|-1)\gamma)+2\tr([|x|,D^\zeta] Q_{\rm vac}\gamma)+2\tr(|x|Q_{\rm vac}\gamma D^\zeta)\nonumber
\end{align}
where we have used that $\chi_{(-\ii,0]}(D_Q)\gamma=0$ by \eqref{scf_eq_Q_ionization}. 
One computes
\begin{multline*}
(|D^\zeta(p)|-1)|x|+|x|(|D^\zeta(p)|-1)
 = (E(p)-1)|x|+|x|(E(p)-1)\\
+ \sum_{k=1}^3\frac{E(p)p_k}{\Lambda^2}[p_k,|x|]+[|x|,p_k]\frac{p_kE(p)}{\Lambda^2}
+\frac{1}{\Lambda^2}\sum_{k=1}^3p_k(E(p)|x|+|x|E(p))p_k.
\end{multline*}
Next we use a result of Lieb \cite{Lieb} which says that
$$(E(p)-1)|x|+|x|(E(p)-1)\geq0.$$
We obtain
\begin{equation*}
(|D^\zeta(p)|-1)|x|+|x|(|D^\zeta(p)|-1)
\geq \frac{i}{\Lambda^2}\sum_{k=1}^3\left[\frac{x_k}{|x|},E(p)p_k\right]
\end{equation*}
and
\begin{equation*}
 \tr(|x|(|D^\zeta|-1)\gamma)\geq -\frac{1}{\Lambda^2}\sum_{k=1}^3\norm{\frac{E(p)p_k}{|D^\zeta(p)|}}\norm{\gamma|D^\zeta|^{-1}}_{\gS_1(\gH)}.
\end{equation*}
Notice
$$\frac{E(p)p_k}{|D^\zeta(p)|}=\frac{p_k}{1+|p|^2/\Lambda^2}\leq \frac\Lambda2$$
hence, using \eqref{estim_D^0_gamma}, we obtain
\begin{equation}
 \tr(|x|(|D^\zeta|-1)\gamma)\geq -\frac{C}{(1-\theta)\Lambda}q_M.
\label{estim_below_almost_positive}
\end{equation}
Let us now estimate the last term of the r.h.s. of \eqref{decomp_kinetic}. Using \eqref{estim_xQvac} and \eqref{estim_D^0_gamma}, we obtain the following estimate:
\begin{equation}
 \left|\tr(|x|Q_{\rm vac}\gamma D^\zeta)\right|\leq \frac{C}{(1-\theta)(1-\log(1- \theta))} \left(\alpha q_M+\alpha Z+\theta\right)q_M.
\label{estimate1}
\end{equation}
Eventually we estimate the second term of the r.h.s. of \eqref{decomp_kinetic}. We compute
\begin{align*}
\left[D^\zeta,|x|\right] & = [\alp\cdot p,|x|]+\frac{\beta}{\Lambda^2} [|p|^2,|x|]+\sum_{k=1}^3\frac{\alpha_k}{\Lambda^2} [p_k|p|^2,|x|]\\
& = -i\sum_{k=1}^3\alpha_k\frac{x_k}{|x|}+\frac{\beta}{\Lambda^2} [|p|^2,|x|]+\sum_{k=1}^3\frac{\alpha_k}{\Lambda^2} [ p_k|p|^2,|x|],
\end{align*}
$$[|p|^2,|x|]=\frac{2}{|x|}-2ip\cdot\frac{x}{|x|},$$
$$[ p_k|p|^2,|x|]=\frac{2}{|x|}p_k+2p_k\frac{1}{|x|}-2p\cdot x\frac{x_k}{|x|^3}+2ip_kp\cdot\frac{x}{|x|}-i|p|^2\frac{x_k}{|x|}.$$
Hence, using Hardy's inequality which tell us that $|p|^{-1}|x|^{-1}$ is a bounded operator on $\gH$, we easily deduce that
$$\left[|x|,D^\zeta\right]=-\frac{2}{\Lambda^2}\frac{1}{|x|}\alp\cdot p+A$$
where $A$ is an operator satisfying
$\norm{|D^\zeta|^{-1}A}\leq C$
for a universal constant $C$ independent of $\Lambda$. Next we write
\begin{align}
\tr([|x|,D^\zeta] Q_{\rm vac}\gamma)  &= -\frac{2}{\Lambda^2}\tr\left(\frac{1}{|p|}\frac{1}{|x|}\alp\cdot pQ_{\rm vac}\gamma|p|\right)+ \tr\left((D^\zeta)^{-1}A Q_{\rm vac}\gamma D^\zeta\right)\nonumber\\
&\geq -\frac{C}{\Lambda^{4/3}}\norm{\frac{|p|}{\Lambda^{2/3}}Q_{\rm vac}}\frac{q_M}{1-\theta}-C\norm{Q_{\rm vac}}\frac{q_M}{1-\theta}\nonumber\\
&\geq -C\frac{q_M\theta}{(1-\theta)\log(1-\theta)}
\label{estimate2}
\end{align}
by Lemma \ref{lem_estim_D0_gamma} and Lemma \ref{lem_estim_commut} with $\tau=0$ and $\tau=1/3$.
Inserting \eqref{estim_below_almost_positive}, \eqref{estimate1} and \eqref{estimate2} in Formula \eqref{decomp_kinetic}, we obtain \eqref{estim_premier_terme}. 
This ends the proof of Lemma \ref{lem_estim_premier_terme}.
\end{proof}

\appendix

\section{Study of the function $b^\zeta_\Lambda$}
This appendix is devoted to the decay properties of $b_\Lambda^\zeta$, for the different cut-offs chosen in this article. The function $b_\Lambda^\zeta$ plays an important role in the model as it can be interpreted as the linear response of the vacuum in the presence of an external field, as shown by Formula \eqref{decomp_rho}.
We recall that
\begin{equation}\label{def_b_Lambda2}
\widehat{b_\Lambda^\zeta}(k)=(2\pi)^{3/2}\frac{\alpha B^\zeta_\Lambda(k)}{1+\alpha B^\zeta_\Lambda(k)}.
\end{equation}

\subsection{Study of $b^0_\Lambda$ when $\gH=\gH_\Lambda$ and $\zeta=0$}\label{sec_b_0_sharp}
We start with the sharp cut-off case $\zeta=0$ and $\gH=\gH_\Lambda$. In this case
\begin{equation}
B_\Lambda^0(k)=-\frac{1}{\pi^2 |k|^2}\int_{\substack{|q+k/2|\leq\Lambda,\\ |q-k/2|\leq\Lambda}} 
\frac{(q+k/2)\cdot(q-k/2)+1-E(q+k/2)E(q-k/2)}{E(q+k/2)E(q-k/2)(E(q+k/2)+E(q-k/2))}dq
\label{def_B_Lambda_bis}
\end{equation}
is defined for $|k|\leq 2\Lambda$. Following \cite{PauliRose}, for any $q\in\R^3$ we introduce as new variables the azimuth angle $\phi$ around the axis parallel to $k$ and
\begin{equation}
\left\{\begin{array}{rcl}
 v & = & (E(q+k/2)-E(q-k/2))/2,\\
 w & = & (E(q+k/2)+E(q-k/2))/2.
\end{array}\right. 
\label{change_variables}
\end{equation}
Then integrating over $\{q\in\R^3\ |\ |q+k/2|\leq\Lambda,\ |q-k/2|\leq\Lambda\}$ is easily shown to be equivalent to integrate over the new variables $(u,v,\phi)\in\R\times\R\times[0,2\pi)$ with the three conditions
\begin{equation}
1\leq v+w\leq \sqrt{1+\Lambda^2},\qquad  1\leq w-v\leq \sqrt{1+\Lambda^2},
\label{cond_v_w}
\end{equation}
\begin{equation}
\sqrt{1+|k|^2/4}\leq w\leq \sqrt{1+\Lambda^2},\qquad |v|\leq \frac{|k|}2\sqrt{\frac{w^2-|k|^2/4-1}{w^2-|k|^2/4}}.
\label{equiv_ineg_triang}
\end{equation}
Eventually \eqref{cond_v_w} and \eqref{equiv_ineg_triang} are equivalent to 
\begin{equation}
 \sqrt{1+|k|^2/4}\leq w\leq \sqrt{1+\Lambda^2}
\label{cond_w}
\end{equation}
\begin{equation}
 |v|\leq \min\left(w-1,\ \sqrt{1+\Lambda^2}-w,\ \frac{|k|}2\sqrt{\frac{w^2-|k|^2/4-1}{w^2-|k|^2/4}}\right).
\label{cond_v}
\end{equation}
An explicit computation shows that 
\begin{multline}
\min\left(w-1,\ \sqrt{1+\Lambda^2}-w,\ \frac{|k|}2\sqrt{\frac{w^2-|k|^2/4-1}{w^2-|k|^2/4}}\right) \\ =\left\{\begin{array}{ll}
 \frac{|k|}2\sqrt{\frac{w^2-|k|^2/4-1}{w^2-|k|^2/4}}& \text{when } w\leq W_\Lambda(|k|)\\
\sqrt{1+\Lambda^2}-w& \text{when } w\geq W_\Lambda(|k|)\\
\end{array}\right. 
\end{multline}
where $W_\Lambda(r):=(\sqrt{1+\Lambda^2}+\sqrt{1+(\Lambda-r)^2})/2$
is the unique root of the fourth order polynomial equation 
$$\sqrt{1+\Lambda^2}-w= \frac{|k|}2\sqrt{\frac{w^2-|k|^2/4-1}{w^2-|k|^2/4}}$$
in $[\sqrt{1+|k|^2/4},\sqrt{1+\Lambda^2}]$.
Inserting this in the definition of $B_\Lambda(k)$ and using that $dq=(2/|k|)E(q+k/2)E(q-k/2)dvdwd\phi$, see \cite[Eq. (12)]{PauliRose}, we find
\begin{multline}
B_\Lambda^0(k)  = - \frac{8}{\pi|k|^3} \Bigg(\int_{\sqrt{1+|k|^2/4}}^{W_\Lambda(|k|)}dw\int_0^{\frac{|k|}2\sqrt{\frac{w^2-|k|^2/4-1}{w^2-|k|^2/4}}}dv\frac{v^2-|k|^2/4}{w}\\
+\int_{W_\Lambda(|k|)}^{\sqrt{1+\Lambda^2}}dw\int_0^{\sqrt{1+\Lambda^2}-w}dv\frac{v^2-|k|^2/4}{w}\Bigg).
\end{multline}
Letting $z=\sqrt{\frac{w^2-|k|^2/4-1}{w^2-|k|^2/4}}$ in the first integral and $z=2(\sqrt{1+\Lambda^2}-w)/|k|$ in the second, we obtain
\begin{multline}
B_\Lambda^0(k)  = \frac1\pi\int_{0}^{Z_\Lambda(|k|)}\frac{z^2-z^4/3}{(1-z^2)(1+|k|^2(1-z^2)/4)}dz\\
 +\frac{|k|}{2\pi}\int_{0}^{Z_\Lambda(|k|)}\frac{z-z^3/3}{\sqrt{1+\Lambda^2}-|k|z/2}dz\label{expression_B}
\end{multline}
where we have defined
$$Z_\Lambda(r)=\sqrt{\frac{W_\Lambda(r)^2-r^2/4-1}{W_\Lambda(r)^2-r^2/4}}=\frac{\sqrt{1+\Lambda^2}-\sqrt{1+(\Lambda-r)^2}}{r}.$$
The first term of \eqref{expression_B} was already present in \cite{PauliRose}, whereas the second term was ignored by Pauli and Rose. 
An explicit computation of the integrals in \eqref{expression_B} yields
\begin{multline}
B_\Lambda^0(k)  = \frac{1}{\pi|k|^3}\Bigg\{ -\frac{4rZ_\Lambda(r)}{3}-\frac23(r^2-2)\sqrt{4+r^2}\text{arctanh}\left(\frac{rZ_\Lambda(r)}{\sqrt{4+r^2}}\right)\\
+\frac{r^3}3\log\left(\frac{1+Z_\Lambda(r)}{1-Z_\Lambda(r)}\right)
+\frac{8E(\Lambda)^3}{3}\left(1-\frac{3r^2}{4E(\Lambda)^2}\right)\log\frac{Z_\Lambda(r)}{E(\Lambda)}+\frac{44}9E(\Lambda)^3\\
-2E(\Lambda)r^2-2\left(3E(\Lambda)^2+1+r^2\right)Z_\Lambda(r)+3E(\Lambda)Z_\Lambda(r)^2-\frac{8}{9}Z_\Lambda(r)^3\Bigg\}.
\label{expression_B2}
\end{multline}
To avoid any further notation, we now see $B_\Lambda^0$ as a function of $|k|$.
Using Formula \eqref{expression_B2}, one can prove the
\begin{prop}[Regularity of $B_\Lambda^0$]
 \label{smoothness_B_Lambda_0}
Let be $\Lambda > 0$. The function $r\mapsto B^0_\Lambda(r)$ extends to a non-negative, $C^1$ function on $\R_+$, which vanishes on $[2 \Lambda, + \infty)$. Moreover, it is of class $C^3$ on $[0, 2 \Lambda]$. Eventually, we have
$$\boB^0_\Lambda(0) = B_\Lambda^0 = \frac{2}{3 \pi} \ln(\Lambda)  + O(1),\qquad \frac{d \boB^0_\Lambda}{dr}(0) = -\frac{1}{8 \pi \Lambda} + \underset{\Lambda \to + \infty}{O} \Big( \frac{1}{\Lambda^3} \Big),$$
$$\frac{d^2\boB^0_\Lambda}{dr^2}(0) = - \frac{2}{15 \pi} + \underset{\Lambda \to + \infty}{O} \Big( \frac{1}{\Lambda^2} \Big),\qquad\frac{d^3\boB^0_\Lambda}{dr^3}(0) = \frac{3}{4 \pi \Lambda} + \underset{\Lambda \to + \infty}{O} \Big( \frac{1}{\Lambda^2} \Big)$$
$$\boB^0_\Lambda(2 \Lambda) = \frac{d\boB^0_\Lambda}{dr}(2 \Lambda) = 0,\quad
\frac{d^2\boB^0_\Lambda}{dr^2}(2 \Lambda) = \frac{\Lambda}{4 \pi E(\Lambda)^3},\quad \frac{d^3\boB^0_\Lambda}{dr^3}(2 \Lambda) = \frac{5\Lambda^2-1}{8 \pi E(\Lambda)^5}.$$
\end{prop}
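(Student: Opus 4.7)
The plan is to work from the integral representation \eqref{expression_B}, rewriting
$$B_\Lambda^0(r) = \frac{1}{\pi}\, I_1(r, Z_\Lambda(r)) + \frac{r}{2\pi}\, I_2(r, Z_\Lambda(r)),$$
with
$$I_1(r, Z) = \int_0^Z \frac{z^2 - z^4/3}{(1-z^2)\bigl(1 + r^2(1-z^2)/4\bigr)}\,dz, \qquad I_2(r, Z) = \int_0^Z \frac{z - z^3/3}{E(\Lambda) - rz/2}\,dz,$$
and to study $B_\Lambda^0$ as the composition of a smooth upper limit $Z_\Lambda$ with two jointly smooth integral expressions. A direct Taylor expansion of $\sqrt{1+(\Lambda-r)^2}$ shows that $r\mapsto Z_\Lambda(r)$ extends to a $C^\infty$ function on $[0,2\Lambda]$ with values in $[0,\Lambda/E(\Lambda)]\subset[0,1)$, satisfying $Z_\Lambda(0) = \Lambda/E(\Lambda)$ and $Z_\Lambda(2\Lambda) = 0$. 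Since the integrands of $I_1,I_2$ are jointly smooth on a neighborhood of $[0,2\Lambda]\times[0,\Lambda/E(\Lambda)]$ (the pole $z=1$ is safely avoided, and $E(\Lambda)-rz/2 > 0$ there), Leibniz's rule provides $B_\Lambda^0\in C^\infty([0,2\Lambda])$, and non-negativity follows from the positivity of both integrands on $(0,1)$.

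For the matching across $r = 2\Lambda$, I would use that $I_1(r,Z) = O(Z^3)$ and $I_2(r,Z) = O(Z^2)$ as $Z\to 0$, together with the expansion $Z_\Lambda(r) = s/(2E(\Lambda)) + O(s^2)$, where $s = 2\Lambda - r$. This yields $B_\Lambda^0(r) = \Lambda s^2/(8\pi E(\Lambda)^3) + O(s^3)$, so that $B_\Lambda^0$ and its first derivative vanish at $r = (2\Lambda)^-$. After extension by zero on $[2\Lambda,+\infty)$, $B_\Lambda^0$ is therefore of class $C^1$ on $\R_+$. Pushing the same expansion to two more orders in $s$ produces the stated boundary values $(B_\Lambda^0)''(2\Lambda) = \Lambda/(4\pi E(\Lambda)^3)$ and $(B_\Lambda^0)'''(2\Lambda) = (5\Lambda^2-1)/(8\pi E(\Lambda)^5)$; note that $B_\Lambda^0$ is \emph{not} $C^2$ at $r = 2\Lambda$, since its second derivative jumps from a positive value to $0$.

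At $r = 0$, the value $B_\Lambda^0(0) = (1/\pi)\int_0^{\Lambda/E(\Lambda)} (z^2 - z^4/3)/(1-z^2)\,dz$ is elementary and evaluates to $(2/(3\pi))\log\Lambda + O(1)$ as already used in Theorem \ref{exists}. For the higher derivatives at $r = 0$, I would differentiate $B_\Lambda^0$ through the integral representation via Leibniz's rule and the chain rule for $Z_\Lambda$, then evaluate in terms of elementary integrals of rational functions on $[0,\Lambda/E(\Lambda)]$ weighted by the first three Taylor coefficients of $Z_\Lambda$ at zero. The announced $\Lambda\to+\infty$ asymptotics then come out of the expansion $\Lambda/E(\Lambda) = 1 - 1/(2\Lambda^2) + O(\Lambda^{-4})$.

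The main technical burden of the argument is bookkeeping: eight boundary values must be extracted, each requiring an expansion of $Z_\Lambda$ to third order combined with a matching expansion of the integrands. It is worth noting that it is substantially cleaner to work from the integral form \eqref{expression_B} rather than the closed form \eqref{expression_B2}: the vanishing orders at $r = 2\Lambda$ can be read directly off the powers of $z$ in the integrands, whereas the closed form contains an apparent $\log(Z_\Lambda/E(\Lambda))$ singularity that must cancel against polynomial terms as $r\to (2\Lambda)^-$, and tracking this cancellation term by term is delicate.
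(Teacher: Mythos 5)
Your proof is correct and takes a cleaner route than the one the paper (implicitly) suggests. The paper merely asserts that the proposition follows ``Using Formula \eqref{expression_B2}'', i.e.\ by inspecting the fully integrated closed form; you instead work from the integral representation \eqref{expression_B}, treating $B_\Lambda^0(r)=\frac1\pi I_1(r,Z_\Lambda(r))+\frac{r}{2\pi}I_2(r,Z_\Lambda(r))$ as a composition of the smooth variable upper limit $Z_\Lambda$ with two parametrized integrals whose integrands are jointly smooth on a neighbourhood of $\{(r,z):0\le r\le2\Lambda,\ 0\le z\le Z_\Lambda(r)\}$ (the pole $z=1$ and the zero of $E(\Lambda)-rz/2$ are safely avoided because $Z_\Lambda\le\Lambda/E(\Lambda)<1$). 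This gives the $C^\infty$ regularity on $[0,2\Lambda]$ at once by Leibniz's rule and the chain rule, and it makes the endpoint $r=2\Lambda$ transparent: from $I_1=O(Z^3)$, $I_2=O(Z^2)$ and $Z_\Lambda(2\Lambda-s)=s/(2E(\Lambda))+O(s^2)$ one reads off $B_\Lambda^0(2\Lambda-s)=\Lambda s^2/(8\pi E(\Lambda)^3)+O(s^3)$, hence the $C^1$ match with the zero extension, the value of $(B_\Lambda^0)''$ at $2\Lambda^-$, and the failure of $C^2$ regularity across $2\Lambda$ --- all without untangling the cancellations that \eqref{expression_B2} hides near $r=2\Lambda$. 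The only caveat is that the eight boundary and asymptotic constants are left as ``bookkeeping''. A spot check at $r=0$ via your scheme --- differentiate through the integrals, evaluate $Z_\Lambda(0)=\Lambda/E(\Lambda)$, $Z_\Lambda'(0)=-1/(2E(\Lambda)^3)$, and note that $\partial_rI_1(0,\cdot)=0$ and that the $\tfrac{r}{2\pi}[\partial_rI_2+\partial_ZI_2\,Z_\Lambda']$ term drops out at $r=0$ --- yields the exact value $(B_\Lambda^0)'(0)=-\Lambda^2(\Lambda^2+2)/(8\pi E(\Lambda)^5)=-1/(8\pi\Lambda)+O(\Lambda^{-3})$, so the method does reproduce the stated asymptotics; the second and third derivatives at $0$ and the third at $2\Lambda$ would need to be carried out in the same way to complete the verification.
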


By Proposition \ref{smoothness_B_Lambda_0}, $\boB^0_\Lambda$ is a non-negative, continuous function with compact support. Therefore, by \eqref{def_b_Lambda2}, $b^0_\Lambda$ is a smooth function which reads (using the inverse Fourier formula for spherically symmetric functions),
\begin{equation}
\forall x \in \R^3 \setminus \{ 0 \},\quad  b^0_\Lambda(x) = \frac{2 \pi}{|x|} \int_0^{2 \Lambda} \frac{\alpha \boB^0_\Lambda(r)}{1 + \alpha \boB^0_\Lambda(r)} \sin(r |x|) r dr. 
\label{formula_Fourier_inverse_radial}
\end{equation}
In particular we get the bound
\begin{equation}
 |b^0_\Lambda(x)|\leq \frac{16\pi \Lambda^3}{3},
\label{bound_unif_b_Lambda_0}
\end{equation}
which shows that $b_\Lambda^0\in L^\ii(\R^3)$.
This becomes after three integrations by parts,
\begin{align}
b^0_\Lambda(x) & = \frac{2\pi}{ |x|^4} \Bigg\{2\alpha\Lambda (B_\Lambda^0)''(2\Lambda)\cos(2\Lambda|x|)+\frac{2\alpha (B_\Lambda^0)'(0)}{(1+\alpha B_\Lambda^0(0))^2} \nonumber \\ 
&\qquad  - \int_0^{2 \Lambda} \Bigg( \frac{\alpha r (\boB^0_\Lambda)^{(3)}(r)}{(1 + \alpha \boB^0_\Lambda(r))^2} - \frac{6 \alpha^2 r (\boB^0_\Lambda)'(r)( \boB^0_\Lambda)''(r)}{(1 + \alpha \boB^0_\Lambda(r))^3} + \frac{6 \alpha^3 r (\boB^0_\Lambda)'(r)^3}{(1 + \alpha \boB^0_\Lambda(r))^4} \nonumber\\ 
& \qquad + \frac{3 \alpha (\boB^0_\Lambda)''(r)}{(1 + \alpha \boB^0_\Lambda(r))^2} - \frac{6 \alpha^2 (\boB^0_\Lambda)'(r)^2}{(1 + \alpha \boB^0_\Lambda(r))^3} \Bigg) \cos(r |x|) dr \Bigg\},\label{formula_b_radial_x4}
\end{align}
which yields by Proposition \ref{smoothness_B_Lambda_0}
\begin{equation}\label{b_decay}
|b_\Lambda(x)| \leq \frac{C_{\alpha,\Lambda}}{|x|^4},
\end{equation}
for some constant $C_{\alpha,\Lambda}$ depending on $\alpha$ and $\Lambda$.
With \eqref{bound_unif_b_Lambda_0}, this proves the
\begin{prop}\label{prop_b_Lambda_0}Assume $\gH=\gH_\Lambda$ and $\zeta=0$. Let be $\alpha\geq 0$ and $\Lambda>0$. 
Then $b_\Lambda^0$ belongs to $L^1(\R^3)$.
\end{prop}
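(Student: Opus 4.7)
My plan is to combine the two pointwise bounds on $b_\Lambda^0$ that have already been derived just above the statement of the proposition, and split the integral $\int_{\R^3}|b_\Lambda^0(x)|\,dx$ into a bounded region and a tail. The uniform bound \eqref{bound_unif_b_Lambda_0}, namely $\|b_\Lambda^0\|_{L^\infty}\leq 16\pi\Lambda^3/3$, immediately gives integrability on any ball $B(0,R)$, since the Lebesgue measure of $B(0,R)$ is finite. For the tail $|x|\geq R$, I would use the pointwise decay $|b_\Lambda^0(x)|\leq C_{\alpha,\Lambda}/|x|^4$ from \eqref{b_decay}, which is integrable at infinity in $\R^3$ because $\int_{|x|\geq R}|x|^{-4}\,dx<\infty$. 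Putting the two pieces together gives $b_\Lambda^0\in L^1(\R^3)$.

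The only nontrivial step is establishing the $1/|x|^4$ decay, but this has already been carried out via formula \eqref{formula_b_radial_x4}, obtained by performing three integrations by parts on the radial inverse Fourier representation \eqref{formula_Fourier_inverse_radial}. The crucial input is Proposition \ref{smoothness_B_Lambda_0}: since $B_\Lambda^0$ is $C^3$ on $[0,2\Lambda]$ and identically zero on $[2\Lambda,+\infty)$, each integration by parts produces either a boundary term or an interior term that is uniformly bounded. The vanishing boundary contributions (coming from $(B_\Lambda^0)(2\Lambda)=(B_\Lambda^0)'(2\Lambda)=0$ and the structural cancellations at $r=0$ driven by $\sin(r|x|)$) ensure that only lower-order boundary contributions like $2\Lambda(B_\Lambda^0)''(2\Lambda)\cos(2\Lambda|x|)$ survive, which are manifestly bounded uniformly in $x$. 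All remaining integral terms are bounded by an $r$-integral over $[0,2\Lambda]$ of a continuous function of $B_\Lambda^0$ and its derivatives.

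The main obstacle is really the bookkeeping in Proposition \ref{smoothness_B_Lambda_0} that underlies \eqref{formula_b_radial_x4}; once that is in hand, the proof of the current proposition reduces to the elementary remark that an $L^\infty$ function with $|x|^{-4}$ tail is in $L^1(\R^3)$. Concretely, I would write
\begin{equation*}
\int_{\R^3}|b_\Lambda^0(x)|\,dx \leq \frac{16\pi\Lambda^3}{3}|B(0,1)|+C_{\alpha,\Lambda}\int_{|x|\geq 1}\frac{dx}{|x|^4}<\infty,
\end{equation*}
which concludes the proof.
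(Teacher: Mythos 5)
Your proof follows exactly the paper's own argument: the paper establishes the uniform bound \eqref{bound_unif_b_Lambda_0} and, via three integrations by parts using the $C^3$ regularity and compact support from Proposition \ref{smoothness_B_Lambda_0}, the $|x|^{-4}$ decay \eqref{b_decay}, and then concludes $b_\Lambda^0\in L^1(\R^3)$ by combining the two, precisely as you do. Your write-up merely makes the final integration split explicit, which is a welcome clarification but not a different route.
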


\begin{remark}\rm 
It can be seen that $\boB_\Lambda^0(\Lambda r)\to \boB^0_\ii(r)$ where
$$\boB^0_\ii(r)=\left\{\begin{array}{r}
\frac{4}{\pi r^3}\Big[\left(\frac23 -\frac{r^2}2\right) \log\left(\frac{r}{2}\right)-\frac{(r-1)^3}{36}+\frac{(r-1)^2}{6}+\left(\frac{r^2}{4}-\frac{7}{12}\right)(r-1)\\
-\frac{r^2}{4}+\frac49\Big]\qquad \text{when } 1\leq r\leq 2,\\
\frac{4}{\pi r^3}\Big[\left(\frac23 -\frac{r^2}2\right) \log\left(\frac{2-r}{2}\right)-\frac{(1-r)^3}{36}+\frac{(1-r)^2}{6}+\left(\frac{r^2}{4}-\frac{7}{12}\right)(1-r)\\
-\frac{r^2}{4}+\frac49+\frac{r^3}{6}\log\left(\frac{2-r}{r}\right)\Big]\qquad \text{when } 0< r\leq 1.
\end{array}\right.$$
The convergence holds in $C^2([0,2))$. Notice the two terms appearing \eqref{expression_B} separately converge to a function which is not differentiable at $r=1$, but there is some cancellation occurring. It can also be proved that $\norm{b_\Lambda^0}_{L^1(\R^3)}$ is indeed uniformly bounded independently of the cut-off $\Lambda$, but we do not need that in this article.
\end{remark}

\subsection{Study of $b^\zeta_\Lambda$ when $\gH=L^2(\R^3,\C^4)$ and $\zeta\neq 0$}
When $\gH=L^2(\R^3,\C^4)$ and $\zeta\neq 0$ satisfies \eqref{prop_zeta_1}--\eqref{prop_zeta_3}, the same changes of variables, followed by $t=\sqrt{1-u^2}$, lead to 
\begin{equation}
 B_\Lambda^\zeta(rk)  = \pi^{-1}\int_{0}^{1}\frac{dt}{t(1+|k|^2t^2/4)}\int_0^{\sqrt{1-t^2}}\frac{1-u^2}{1+\Psi(|k|,t,u)}du,
\label{formula_B_zeta_general}
\end{equation}
where
$$\Psi(|k|,t,u)=\frac12(\eta(w+v)+\eta(w-v))+\frac{v}{2w}(\eta(w+v)-\eta(w-v)),$$
$$\eta(x)=\zeta\left(\frac{x^2-1}{\Lambda^2}\right),\quad v=\frac{|k|u}{2}\quad\text{and}\quad w=\sqrt{\frac{|k|^2}{4}+\frac1{t^2}}.$$

\begin{prop}\label{prop_b_Lambda_zeta}Assume $\gH=L^2(\R^3,\C^4)$ and that $\zeta\neq0$ satisfies \eqref{prop_zeta_1}--\eqref{prop_zeta_3}. Let be $\alpha\geq 0$ and $\Lambda>0$. 
The function $b_\Lambda^\zeta$ belongs to $L^1(\R^3)$.
\end{prop}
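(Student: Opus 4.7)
The proof mirrors the sharp cut-off analysis of Section~\ref{sec_b_0_sharp}: one uses the radial Fourier inversion formula
$$b_\Lambda^\zeta(x)=\frac{2\pi}{|x|}\int_0^{+\infty}\frac{\alpha B_\Lambda^\zeta(r)}{1+\alpha B_\Lambda^\zeta(r)}\sin(r|x|)\,r\,dr,$$
bounds $b_\Lambda^\zeta$ uniformly on the unit ball, and proves a pointwise decay $|b_\Lambda^\zeta(x)|\leq C/|x|^4$ for $|x|\geq 1$ by integrating by parts three times exactly as in \eqref{formula_b_radial_x4}. Combining the two estimates yields $b_\Lambda^\zeta\in L^1(\R^3)$.

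The ingredients needed to make this strategy work are: (i) non-negativity of $B_\Lambda^\zeta$, so that $1+\alpha B_\Lambda^\zeta\geq 1$; (ii) $C^3$ regularity of the radial profile $r\mapsto B_\Lambda^\zeta(r)$ on $[0,+\infty)$; and (iii) polynomial decay of $B_\Lambda^\zeta$ and of its first three derivatives as $r\to+\infty$. Item (i) is immediate from \eqref{formula_B_zeta_general} by writing
$$\Psi(|k|,t,u)=\tfrac12\big(1+v/w\big)\,\eta(w+v)+\tfrac12\big(1-v/w\big)\,\eta(w-v)\geq 0,$$
where $\eta(x):=\zeta((x^2-1)/\Lambda^2)\geq 0$ and $|v|<w$. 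Item (ii) follows from differentiation under the integral sign in \eqref{formula_B_zeta_general}, assumption \eqref{prop_zeta_3} providing the algebraic bound $|\zeta^{(p)}(x)|(1+|x|^p)\leq C(1+\zeta(x))$ for $p=1,2,3$ which dominates each derivative of the integrand by a function integrable in $(t,u)$. For item (iii), one uses the monotonicity of $\zeta$ to bound $\Psi\geq\tfrac12\eta(w)$ with $w=\sqrt{r^2/4+1/t^2}\geq r/2$, and then invokes \eqref{prop_zeta_2}, which gives $\zeta((w^2-1)/\Lambda^2)\geq\varepsilon\big((w^2-1)/\Lambda^2\big)^{\varepsilon/2}$ once $w^2-1\geq\Lambda^2$. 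A splitting of the outer integral according to whether $r^2t^2/4$ or $1$ dominates in $1+r^2t^2/4$ then yields a decay of the form $B_\Lambda^\zeta(r)\leq C_\Lambda(1+r)^{-3-\eta}$ for some $\eta>0$, with analogous estimates for the derivatives (each differentiation in $r$ producing an extra factor $1/r$).

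With (i)--(iii) in hand, the bound $\|b_\Lambda^\zeta\|_{L^\infty}\leq C$ follows from the pointwise estimate $|\widehat{b_\Lambda^\zeta}(k)|\leq (2\pi)^{3/2}\alpha B_\Lambda^\zeta(k)$ together with the integrability of $r^2 B_\Lambda^\zeta(r)$ on $[0,+\infty)$ provided by (iii). The decay $|b_\Lambda^\zeta(x)|\leq C/|x|^4$ for $|x|\geq 1$ comes from three successive integrations by parts in the above radial Fourier inversion formula: the boundary terms at $r=0$ produce the analogue of the $(B_\Lambda^\zeta)'(0)/(1+\alpha B_\Lambda^\zeta(0))^2$ term in \eqref{formula_b_radial_x4}, the boundary terms at $r=+\infty$ vanish thanks to (iii), and the remaining integrals are absolutely convergent by the derivative estimates. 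The main obstacle compared to the sharp cut-off case, in which $B_\Lambda^0$ has compact support and the boundary contributions appear only at $r=2\Lambda$, is exactly item (iii): quantifying the decay of $B_\Lambda^\zeta$ and of its first three derivatives at infinity, which relies crucially on the growth assumption \eqref{prop_zeta_2} on $\zeta$.
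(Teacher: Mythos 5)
Your proof proposal follows the same broad strategy as the paper: prove non-negativity, $C^3$ regularity, and decay of $B_\Lambda^\zeta$ and its first three derivatives, then invoke the radial Fourier inversion formula and three integrations by parts. Your observation that $\Psi$ is a non-negative linear combination of $\eta(w+v)$ and $\eta(w-v)$ is a clean way to see (i). However, there is a quantitative error in item (iii) that propagates and undermines the rest of the argument.

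The decay you claim, $B_\Lambda^\zeta(r)\leq C_\Lambda(1+r)^{-3-\eta}$ for some $\eta>0$, is not what the splitting actually gives under the hypotheses \eqref{prop_zeta_1}--\eqref{prop_zeta_3}. Carrying out the splitting, one only obtains $|B_\Lambda^\zeta(r)|\leq C_\Lambda/(1+r^{2\epsilon})$, where $\epsilon$ comes from \eqref{prop_zeta_2} and can be \emph{arbitrarily small}: indeed, from $1+\Psi\geq c_\Lambda(r^2/4+1/t^2)^\epsilon$ one gets
\begin{equation*}
|B_\Lambda^\zeta(r)|\;\leq\;C\int_0^1\frac{t^{2\epsilon-1}\,dt}{\bigl(1+r^2t^2/4\bigr)^{1+\epsilon}}\;=\;C\,r^{-2\epsilon}\int_0^r\frac{s^{2\epsilon-1}\,ds}{(1+s^2/4)^{1+\epsilon}}\;\leq\;C'\,r^{-2\epsilon},
\end{equation*}
which is what the paper's estimates \eqref{estim_B_derivees}--\eqref{estim_B_derivees2} record. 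With this corrected (and generically much slower) rate, $r^2B_\Lambda^\zeta(r)$ is \emph{not} integrable on $[0,+\infty)$ unless $\epsilon>3/2$, so your proposed $L^\infty$ bound on the unit ball fails; in fact for small $\epsilon$ one expects $b_\Lambda^\zeta(x)\sim|x|^{2\epsilon-3}$ near the origin, which lies in $L^1_{\rm loc}$ but is unbounded. What is needed near the origin is an argument showing this mild local singularity is integrable (e.g.\ via a frequency decomposition $\widehat{b_\Lambda^\zeta}=\chi\widehat{b_\Lambda^\zeta}+(1-\chi)\widehat{b_\Lambda^\zeta}$, or a suitable scaling split $r\lessgtr 1/|x|$ in the radial formula), not a pointwise $L^\infty$ bound.

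A second, related issue is the claim that ``each differentiation in $r$ produces an extra factor $1/r$.'' This is not what the paper's Lemma \ref{estim_Psi} gives: the gain is $\bigl(1+(1-u)^pr^p\bigr)^{-1}$, which degenerates as $u\to1$. Once integrated in $u$, the resulting estimates \eqref{estim_B_derivees2} show the third derivative of $B_\Lambda^\zeta$ obeys \emph{the same} power decay as the second (up to a logarithm), i.e.\ $|(B_\Lambda^\zeta)^{(3)}(r)|\lesssim(1+r^{2+2\epsilon})^{-1}$ rather than gaining another $1/r$. This is still enough for the three integrations by parts to produce an absolutely convergent integral (since $r\,(B_\Lambda^\zeta)^{(3)}(r)\lesssim r^{-1-2\epsilon}$ and $(B_\Lambda^\zeta)''(r)\lesssim r^{-2-2\epsilon}\log r$ are integrable) and hence the $|x|^{-4}$ decay for $|x|\geq1$, but it means one must quantify the loss carefully rather than asserting a uniform $1/r$ gain per derivative. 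These two corrections---the true (slow) decay rate and its consequence for the behavior near $x=0$, and the non-uniform derivative gain---are the substantive content of the paper's proof that your proposal misses.
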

\begin{proof}
As before, we consider $B_\Lambda^\zeta$ as a function of $|k|$ to simplify the notation. We shall prove an estimate of the form
\begin{equation}
|B_\Lambda^\zeta(r)|\leq \frac{C_\Lambda}{1+r^{2\epsilon}},\qquad \left|\frac{d}{dr}B_\Lambda^\zeta(r)\right|\leq \frac{C_\Lambda}{1+r^{1+2\epsilon}},
\label{estim_B_derivees}
\end{equation}
\begin{equation}
\left|\frac{d^2}{dr^2}B_\Lambda^\zeta(r)\right|\leq \frac{C_\Lambda\log(2+r)}{1+r^{2+2\epsilon}},\qquad
 \left|\frac{d^3}{dr^3}B_\Lambda^\zeta(r)\right|\leq \frac{C_\Lambda}{1+r^{2+2\epsilon}}
\label{estim_B_derivees2}
\end{equation}
where $\epsilon>0$ is given by \eqref{prop_zeta_2}. The result will follow using a formula similar to \eqref{formula_b_radial_x4}. We first notice that $v\leq w$, so that $\eta(w+v)-\eta(w-v)\geq0$ as $\zeta$ is nonincreasing. Hence, by \eqref{prop_zeta_2},
\begin{equation}
1+\Psi\geq 1+ \frac{\eta(w)}2\geq c\left|\frac{r^2}{4}+\frac{1}{t^2}\right|^\epsilon=\frac{c\left|\frac{r^2t^2}{4}+1\right|^\epsilon}{t^{2\epsilon}}.
\label{estim_Psi_below} 
\end{equation}
Inserting in \eqref{formula_B_zeta_general}, we obtain
\begin{equation}
|B_\Lambda^\zeta(r)|\leq C\int_0^1\frac{dt}{t^{1-2\epsilon}\left(\frac{r^2t^2}{4}+1\right)^{1+\epsilon}}\leq \frac{C}{1+r^{2\epsilon}}.
\end{equation}
For the three first derivatives of $B_\Lambda^\zeta$, we invoke the following
\begin{lemma}\label{estim_Psi}
We have for any $p=1,2,3$
\begin{equation}
\left|\frac{\partial^p}{\partial r^p}\left(\frac1{1+\Psi(r,t,u)}\right)\right|\leq \frac{C}{(1+\Psi(r,t,u))\left(1+(1-u)^pr^p\right)}.
\end{equation}
\end{lemma}
Assuming Lemma \ref{estim_Psi} holds, we can write 
\begin{align*}
(B_\Lambda^\zeta)'(r)& =-(2\pi)^{-1}\int_{0}^{1}\frac{rt\,dt}{(1+r^2t^2/4)^2}\int_0^{\sqrt{1-t^2}}\frac{1-u^2}{1+\Psi(r,t,u)}du\\
& +\pi^{-1}\int_{0}^{1}\frac{dt}{t(1+r^2t^2/4)}\int_0^{\sqrt{1-t^2}}(1-u^2)\frac{\partial}{\partial r}\left(\frac{1}{1+\Psi(r,t,u)}\right)du,
\end{align*}
hence
\begin{align*}
|(B_\Lambda^\zeta)'(r)|&\leq C\int_{0}^{1}\frac{rt^{1+2\epsilon}dt}{\left(\frac{r^2t^2}{4}+1\right)^{2+\epsilon}}+C\int_0^1\frac{t^{2\epsilon-1}dt}{\left(\frac{r^2t^2}{4}+1\right)^{1+\epsilon}}\int_0^1\frac{(1-u)du}{1+(1-u)r}\\
&\leq \frac{C}{1+r^{1+2\epsilon}}.
\end{align*}
The proof of \eqref{estim_B_derivees2} is similar. Therefore, we omit it. Instead we turn to the
\begin{proof}[Proof of Lemma \ref{estim_Psi}]
We have
$$\frac{\partial}{\partial r}\left(\frac1{1+\Psi(r,t,u)}\right)=-\frac{\frac{\partial}{\partial r}\Psi(r,t,u)}{(1+\Psi(r,t,u))^2},$$
so that we have to prove that
$$\left|\frac{\frac{\partial}{\partial r}\Psi(r,t,u)}{1+\Psi(r,t,u)}\right|\leq \frac{C}{1+(1-u)r}.$$
Since $\Psi(r,t,u)=\frac12(\eta(w+v)+\eta(w-v))+w'u(\eta(w+v)-\eta(w-v))$, we obtain
\begin{multline}
\frac{\partial}{\partial r}\Psi(r,t,u)=\frac12\big\{(w'+v')\eta'(w+v)+(w'-v')\eta'(w-v)\big\}+w''u\big\{\eta(w+v)-\eta(w-v)\big\}\\
 + w'u\big\{(w'+v')\eta'(w+v)-(w'-v')\eta'(w-v)\big\}.
\label{1_derivee_Psi}
\end{multline}
Next we remark that $v'=u\leq 1$, $v''=0$, $w'\leq 1/2$, $w''\leq C(1+r)^{-1}$ and $w'''\leq C(1+r)^{-2}$. Using $1+\Psi(r,t,u)\geq 1+\eta(w+v)+\eta(w-v)$, and Assumption \eqref{prop_zeta_3}, we obtain an estimate of the form
$$\frac{\left|\frac{\partial}{\partial r}\Psi(r,t,u)\right|}{1+\Psi(r,t,u)}\leq C\left(\frac{1}{1+r}+\frac{1}{1+|w+v|}+\frac{1}{1+|w-v|}\right).$$
Eventually, we use
$$|w+v|=w+v\geq w\geq \frac{r}{2},\quad\text{ and }\quad |w-v|=w-v\geq \frac{r}{2}(1-u).$$
The proof for the other derivatives is similar.
\end{proof}
\end{proof}

We end this section with
\begin{prop}Assume $\gH=L^2(\R^3,\C^4)$ and that $\zeta\neq0$ satisfies \eqref{prop_zeta_1}--\eqref{prop_zeta_3}.
We have, as $\Lambda\to\ii$,
$$B_\Lambda^\zeta(0)=\frac{2}{3\pi}\log\Lambda+O(1).$$
\end{prop}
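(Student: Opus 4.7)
\medskip
\noindent\textbf{Proof plan.} The proof starts from the explicit expression
\[
 B_\Lambda^\zeta(0)=\frac{1}{\pi}\int_0^1 \frac{z^2-z^4/3}{(1-z^2)\left(1+\zeta\!\left(\tfrac{z^2}{\Lambda^2(1-z^2)}\right)\right)}\,dz
\]
already obtained at $k=0$ from \eqref{formula_B_zeta_general} (it corresponds to $t\in(0,1)$ after the change of variable $z=\sqrt{1-t^2}$). The argument of $\zeta$ equals $1$ precisely at $z_\Lambda:=\Lambda/\sqrt{1+\Lambda^2}$, and $z_\Lambda\to 1$ as $\Lambda\to\infty$. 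The plan is to cut the integral at $z_\Lambda$, extract the logarithmic divergence from the regime $z\in[0,z_\Lambda]$ (where $\zeta$ is harmless), and show by the growth assumption \eqref{prop_zeta_2} that the tail $z\in[z_\Lambda,1]$ only contributes $O(1)$.

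\medskip

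\noindent\emph{Main term.} On $[0,z_\Lambda]$ the argument of $\zeta$ lies in $[0,1]$. Since $\zeta\in C^1$ with $\zeta(0)=0$, one has $\zeta(x)\leq Cx$ on $[0,1]$, hence $0\leq \zeta_\Lambda(z)\leq Cz^2/(\Lambda^2(1-z^2))$ on $[0,z_\Lambda]$. Thus $1/(1+\zeta_\Lambda(z))=1-\zeta_\Lambda(z)/(1+\zeta_\Lambda(z))$, and the leading contribution is
\[
\frac{1}{\pi}\int_0^{z_\Lambda}\frac{z^2-z^4/3}{1-z^2}\,dz
=\frac{1}{\pi}\int_0^{z_\Lambda}\!\left(\frac{2/3}{1-z^2}-\frac{2}{3}+\frac{z^2}{3}\right)dz,
\]
using the elementary identity $(z^2-z^4/3)/(1-z^2)=\tfrac{2/3}{1-z^2}-\tfrac{2}{3}+\tfrac{z^2}{3}$. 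The last two summands are $O(1)$, while
$\int_0^{z_\Lambda}\tfrac{dz}{1-z^2}=\tfrac12\log\tfrac{1+z_\Lambda}{1-z_\Lambda}=\log\Lambda+O(1)$
since $1-z_\Lambda\sim 1/(2\Lambda^2)$ and $1+z_\Lambda\to 2$. This already produces the announced $\tfrac{2}{3\pi}\log\Lambda$. The correction
\[
\frac{1}{\pi}\int_0^{z_\Lambda}\frac{z^2-z^4/3}{1-z^2}\cdot\frac{\zeta_\Lambda(z)}{1+\zeta_\Lambda(z)}\,dz
\leq \frac{C}{\Lambda^2}\int_0^{z_\Lambda}\frac{dz}{(1-z^2)^2}
\]
is $O(1)$ because $\int_0^{z_\Lambda}(1-z)^{-2}dz\leq 1/(1-z_\Lambda)-1=O(\Lambda^2)$.

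\medskip

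\noindent\emph{Tail.} On $[z_\Lambda,1]$ the argument of $\zeta$ is $\geq 1$, so by \eqref{prop_zeta_2} one has $1+\zeta_\Lambda(z)\geq \varepsilon(z^2/(\Lambda^2(1-z^2)))^{\varepsilon/2}$. Bounding the numerator by $2/3$ and $z\approx 1$,
\[
\int_{z_\Lambda}^1\!\frac{z^2-z^4/3}{(1-z^2)(1+\zeta_\Lambda(z))}\,dz
\leq \frac{C\,\Lambda^\varepsilon}{\varepsilon}\int_{z_\Lambda}^1 (1-z^2)^{\varepsilon/2-1}\,dz
\leq C\,\Lambda^\varepsilon\,(1-z_\Lambda)^{\varepsilon/2}=O(1)
\]
since $(1-z_\Lambda)^{\varepsilon/2}\sim \Lambda^{-\varepsilon}$. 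Combining the three pieces gives $B_\Lambda^\zeta(0)=\tfrac{2}{3\pi}\log\Lambda+O(1)$.

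\medskip

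\noindent\emph{Main obstacle.} The only subtlety is making sure the naive expansion $1/(1+\zeta_\Lambda)\approx 1$ on $[0,z_\Lambda]$ does not produce an error that grows with $\Lambda$: this is exactly where the factor $1/\Lambda^2$ from $\zeta_\Lambda(z)\leq Cz^2/(\Lambda^2(1-z^2))$ just compensates the divergent integral $\int_0^{z_\Lambda}(1-z^2)^{-2}\,dz\sim \Lambda^2$. All other estimates are standard.
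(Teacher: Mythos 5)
Your proof is correct and follows essentially the same strategy as the paper: split the integral at $z_\Lambda=\Lambda/\sqrt{1+\Lambda^2}=\Lambda/E(\Lambda)$ (the point where the argument of $\zeta$ crosses $1$), use $\zeta(x)\leq Cx$ on $[0,1]$ (valid since $\zeta\in C^3$ and $\zeta(0)=0$) to control the main piece, and use the growth assumption \eqref{prop_zeta_2} to show the tail is $O(1)$. The only cosmetic difference is in how you handle the main piece: you write $1/(1+\zeta_\Lambda)=1-\zeta_\Lambda/(1+\zeta_\Lambda)$ and bound the correction directly, noting the cancellation between the $1/\Lambda^2$ factor from $\zeta_\Lambda$ and the $\Lambda^2$ size of $\int_0^{z_\Lambda}(1-z^2)^{-2}\,dz$; the paper instead sandwiches between $B_\Lambda^0(0)$ (upper bound from $\zeta\geq 0$) and a rescaled version of the $\zeta=0$ integral obtained by a change of variable (lower bound from $\zeta(x)\leq cx$). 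Both routes carry the same information and effort; yours is perhaps slightly more direct in making the compensation of powers of $\Lambda$ explicit, while the paper's reuses the already computed constant $B_\Lambda^0(0)$.
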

\begin{proof}
Taking $z=\sqrt{1-t^2}$, we obtain
\begin{equation*}
B_\Lambda^\zeta(0) =\frac1\pi\int_0^{1}\frac{z^2-z^4/3}{(1-z^2)\left(1+\zeta\left(\frac{z^2}{\Lambda^2(1-z^2)}\right)\right)}dz.
\end{equation*}
As $\zeta\geq0$, we have
$$\frac1\pi\int_0^{\frac{\Lambda}{E(\Lambda)}}\frac{z^2-z^4/3}{(1-z^2)\left(1+\zeta\left(\frac{z^2}{\Lambda^2(1-z^2)}\right)\right)}dz\leq \frac1\pi\int_0^{\frac{\Lambda}{E(\Lambda)}}\frac{z^2-z^4/3}{1-z^2}dz=B_\Lambda^0(0).$$
To get a lower bound, we use that $\zeta$ is smooth, and write that $\zeta(x)\leq cx$ for any $0\leq x\leq 1$ and some $c>0$. We obtain
\begin{multline*}
\frac1\pi\int_0^{\frac{\Lambda}{E(\Lambda)}}\frac{z^2-z^4/3}{(1-z^2)\left(1+\zeta\left(\frac{z^2}{\Lambda^2(1-z^2)}\right)\right)}dz
\geq \frac1\pi\int_0^{\frac{\Lambda}{E(\Lambda)}}\frac{z^2-z^4/3}{(1-z^2)\left(1+c\frac{z^2}{\Lambda^2(1-z^2)}\right)}dz \\
\geq \frac1{\pi\sqrt{1-c/\Lambda^2}}\int_0^{\frac{\Lambda\sqrt{1-c/\Lambda^2}}{E(\Lambda)}}\frac{z^2-z^4/3}{1-z^2}dz=B_\Lambda^0(\Lambda)-\frac{\log(1+c)}{3\pi}+o(1),
\end{multline*}
as $\zeta(x)\leq cx$ for any $0\leq x\leq 1$ and some $c>0$. Finally, by \eqref{prop_zeta_2},
\begin{multline*}
\left|\int_{\frac{\Lambda}{E(\Lambda)}}^{1}\frac{(z^2-z^4/3)dz}{(1-z^2)\left(1+\zeta\left(\frac{z^2}{\Lambda^2(1-z^2)}\right)\right)}\right|\leq CE(\Lambda)^\epsilon \int_{\frac{\Lambda}{E(\Lambda)}}^{1}\frac{dz}{(1-z^2)^{1-\epsilon/2}}=O(1).
\end{multline*}
This yields the result.
\end{proof}

\subsection{Study of $b^\zeta_\Lambda$ for $\zeta(t)=t$}
We finally turn to the special cut-off $\zeta(t)=t$ which was used in the study of the ionization in Theorem \ref{Ionization}. Formula \eqref{formula_B_zeta_general} yields in this case
\begin{equation}
B_\Lambda^\zeta(k)  = \pi^{-1}\int_{0}^{1}\frac{dt}{t(1+|k|^2t^2/4)}\int_0^{\sqrt{1-t^2}}\frac{1-u^2}{1+\frac1{\Lambda^2}\left(\frac{|k|^2}4-1+\frac1{t^2}\right)+\frac{3|k|^2}{4\Lambda^2}u^2}du.
\label{formula_B_zeta_particulier}
\end{equation}
Notice that $B_\Lambda^\zeta$ is nonnegative, when $\Lambda>1/\sqrt{2}$, and
\begin{align*}
 B_\Lambda^\zeta(0)  &= \pi^{-1}\int_{0}^{1}\frac{zdz}{1-z^2}\int_0^{z}\frac{1-u^2}{1+\frac{z^2}{\Lambda^2(1-z^2)}}du =\pi^{-1}\int_{0}^{1}\frac{z^2(1-z^2/3)dz}{1-\frac{\Lambda^2-1}{\Lambda^2}z^2}\\
&=\frac{{\Lambda}^{2} \left( 3\Lambda(2\Lambda^2-3){\rm arctanh} \left( {\frac {\sqrt {{\Lambda
}^{2}-1}}{\Lambda}} \right) +(8-5{\Lambda}^{2})\sqrt {{\Lambda}^{2}-1}
 \right)}{9\sqrt {{\Lambda}^{2}-1}({\Lambda}^{4}-2{\Lambda}^{2}+1)}\\
& = \frac{2}{3\pi}\log\Lambda-\frac{5}{9\pi}+\frac{2}{3\pi}\log2+O(\Lambda^{-2}\log\Lambda).
\end{align*}
Hence $ B_\Lambda^\zeta(0)= B_\Lambda^0(0)+O(\Lambda^{-2}\log\Lambda)$. Moreover, it can be seen that $B_\Lambda^\zeta(0)\leq 2/(3\pi)\log\Lambda$ when $\Lambda\geq4$.
The main result of this section is
\begin{prop}\label{prop_b_Lambda_zeta_particulier}Assume $\gH=L^2(\R^3,\C^4)$ and $\zeta(t)=t^2$. Let be $\alpha>0$ and $\Lambda>1$ such that $\alpha B_\Lambda^\zeta(0)<1$.
The function $b_\Lambda^\zeta$ satisfies
\begin{equation}
 \norm{b^\zeta_\Lambda}_{L^1(\R^3)}\leq \frac{\alpha B_\Lambda^\zeta(0)}{1-\alpha B_\Lambda^\zeta(0)}.
\label{estim_b_Lambda_particulier}
\end{equation}
\end{prop}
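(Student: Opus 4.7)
The strategy is to represent $\alpha B_\Lambda^\zeta$ as the Fourier transform of a nonnegative integrable function and then expand geometrically. More precisely, I aim to exhibit a nonnegative function $M_\Lambda^\zeta\in L^1(\R^3)$ with $\int_{\R^3}M_\Lambda^\zeta = B_\Lambda^\zeta(0)$ whose Fourier transform equals $B_\Lambda^\zeta$ (up to the normalization used in \eqref{def_b_Lambda2}). Under the hypothesis $\alpha B_\Lambda^\zeta(0)<1$, the nonnegativity of $M_\Lambda^\zeta$ yields $\|\alpha B_\Lambda^\zeta\|_{L^\infty}\leq \alpha B_\Lambda^\zeta(0)<1$, and hence the geometric series
\begin{equation*}
\frac{\alpha B_\Lambda^\zeta(k)}{1+\alpha B_\Lambda^\zeta(k)}=\sum_{n\geq 1}(-1)^{n+1}(\alpha B_\Lambda^\zeta(k))^n
\end{equation*}
converges uniformly. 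Since $(\alpha B_\Lambda^\zeta)^n$ is the Fourier transform of the $n$-fold convolution $(\alpha M_\Lambda^\zeta)^{\ast n}\geq 0$, whose $L^1$-norm equals $(\alpha B_\Lambda^\zeta(0))^n$, inverting the Fourier transform and summing produces the pointwise identity $b_\Lambda^\zeta=\sum_{n\geq 1}(-1)^{n+1}(\alpha M_\Lambda^\zeta)^{\ast n}$. The bound \eqref{estim_b_Lambda_particulier} then follows from the triangle inequality, the submultiplicativity $\|f\ast g\|_{L^1}\leq\|f\|_{L^1}\|g\|_{L^1}$, and summing the geometric series.

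The heart of the proof is therefore the construction of $M_\Lambda^\zeta$. Starting from formula \eqref{formula_B_zeta_particulier}, a direct calculation with $\zeta(t)=t$ shows that $1+\Psi(|k|,t,u)=A(t)+C(u)|k|^2$, where
\begin{equation*}
A(t)=1+\frac{1-t^2}{\Lambda^2 t^2}>0,\qquad C(u)=\frac{1+3u^2}{4\Lambda^2}>0.
\end{equation*}
Consequently, the integrand of $B_\Lambda^\zeta(k)$ factors as a product of two Lorentzian functions of $|k|^2$, namely
\begin{equation*}
\frac{1}{1+\tfrac{t^2}{4}|k|^2}\cdot\frac{1/A(t)}{1+\tfrac{C(u)}{A(t)}|k|^2}.
\end{equation*}
For any $\tau>0$ one has the Laplace--Gaussian representation
\begin{equation*}
\frac{1}{1+\tau|k|^2}=\int_0^\infty e^{-s}\,e^{-s\tau|k|^2}\,ds,\qquad e^{-s\tau|k|^2}=\widehat{G_{s\tau}}(k),
\end{equation*}
where $G_\sigma(x)=(4\pi\sigma)^{-3/2}e^{-|x|^2/(4\sigma)}\geq 0$ is the heat kernel. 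Hence each Lorentzian is the Fourier transform of a nonnegative function (a continuous superposition of Gaussians); the product of two such Fourier transforms is the Fourier transform of a convolution of two nonnegative functions, still nonnegative; and integrating against the positive measure $\pi^{-1}(tA(t))^{-1}(1-u^2)\,dt\,du$ on $\{0<t\leq 1,\ 0\leq u\leq\sqrt{1-t^2}\}$ preserves nonnegativity. This produces the desired $M_\Lambda^\zeta\geq 0$, and the identity $\int M_\Lambda^\zeta=B_\Lambda^\zeta(0)$ is obtained by evaluating at $k=0$, using that each Gaussian has unit mass.

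The main technical point is a routine application of Fubini's theorem to justify the interchanges of integration that define $M_\Lambda^\zeta$ and to identify its Fourier transform. The positivity of all integrands reduces this to checking integrability of the factor $\frac{1}{tA(t)}=\frac{\Lambda^2 t}{\Lambda^2 t^2+1-t^2}$ on $(0,1]$, which is bounded for $\Lambda\geq 1$, together with the rapid decay $e^{-s}$ in the Laplace parameter and the Gaussian decay in $x$. Once the nonnegative Fourier representation of $B_\Lambda^\zeta$ is secured, no further estimates are needed: the geometric expansion described in the first paragraph delivers \eqref{estim_b_Lambda_particulier} automatically.
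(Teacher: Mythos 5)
Your proof is correct and takes essentially the same route as the paper's: both exploit the factorization of the integrand of $B_\Lambda^\zeta$ into a product of two Lorentzians in $|k|$ to conclude that $\cF^{-1}(B_\Lambda^\zeta)$ is a nonnegative $L^1$ function (the paper via the Yukawa potential, you via the Laplace--Gaussian representation), so its $L^1$ norm equals, up to normalization, its value at $k=0$, namely $B_\Lambda^\zeta(0)$. The conclusion is then the same Neumann-series bound for $(1+\alpha B_\Lambda^\zeta)^{-1}$, which you write out term by term and the paper packages as $\norm{(1+T)^{-1}}\leq(1-\alpha B_\Lambda^\zeta(0))^{-1}$ for the convolution operator $T$ on $L^1$.
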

\begin{proof}
It follows from \eqref{formula_B_zeta_particulier} that
$$B_\Lambda^\zeta(k)  =\int_{0}^{1}\frac{8\Lambda^2dt}{\pi t^3}\int_0^{\sqrt{1-t^2}}\frac{(1-u^2)du}{1+3u^2}\frac{1}{\mu_1(t)^2+|k|^2}\times\frac{1}{\mu_2(t,u)^2+|k|^2}$$
where $\mu_1(t)=2/t$, and $\mu_2(t,u)=2\Lambda(1-1/\Lambda^2+1/t^2)^{1/2}(1+3u^2)^{-1/2}$.
The Fourier inverse of $(\mu^2+|k|^2)^{-1}$ is the Yukawa potential $e^{-\mu|x|}/(4\pi|x|)\geq0$. Therefore, the Fourier inverse $f_\Lambda=\cF^{-1}(B_\Lambda^\zeta)$ is nonnegative, so that
\begin{equation}
 \int f_\Lambda(x)\, dx=\int |f_\Lambda(x)|\, dx=(2\pi)^{-3/2}B_\Lambda^\zeta(0).
\label{integral_f_Lambda}
\end{equation}
In particular, the operator $T: g\in L^1(\R^3) \mapsto \cF^{-1}(\alpha B_\Lambda^\zeta\widehat{g})$ is bounded by $\norm{T}\leq \alpha B^\zeta_\Lambda(0)$. Hence, $1+T$ is invertible when $\alpha B^\zeta_\Lambda(0)<1$, and
$$\norm{(1+T)^{-1}}\leq \frac{1}{1-\alpha B^\zeta_\Lambda(0)}.$$
Proposition \ref{prop_b_Lambda_zeta_particulier} follows using that $b_\Lambda^\zeta=T(1+T)^{-1}$.
\end{proof}
\bibliographystyle{amsplain}

\end{document}